\DeclareMathAlphabet{\mathpzc}{OT1}{pzc}{m}{it}
\begin{document}


\theoremstyle{plain}
\newtheorem{theorem}{Theorem}[section]
\newtheorem{lemma}[theorem]{Lemma}
\newtheorem{proposition}[theorem]{Proposition}
\newtheorem*{proposition*}{Proposition}
\newtheorem{claim}[theorem]{Claim}
\newtheorem{corollary}[theorem]{Corollary}
\newtheorem{axiom}{Axiom}

\theoremstyle{definition}
\newtheorem{remark}[theorem]{Remark}
\newtheorem{note}{Note}[section]
\newtheorem{definition}[theorem]{Definition}
\newtheorem{example}[theorem]{Example}
\newtheorem*{ackn}{Acknowledgements}
\newtheorem{assumption}{Assumption}
\newtheorem{approach}{Approach}
\newtheorem{critique}{Critique}
\newtheorem{question}{Question}
\newtheorem{aim}{Aim}
\newtheorem*{assucd}{Assumption ($\mathbb{CD}$)}
\newtheorem*{asa}{Assumption ($\mathbb{A}$)}
\newtheorem*{appS}{Approximation ($\mathbb{S}$)}
\newtheorem*{appBS}{Approximation ($\mathbb{BS}$)}
\renewcommand{\theequation}{\thesection.\arabic{equation}}
\numberwithin{equation}{section}

\renewcommand{\thefigure}{\thesection.\arabic{figure}}
\numberwithin{equation}{section}

\newcommand{\Law}{\ensuremath{\mathop{\mathrm{Law}}}}
\newcommand{\loc}{{\mathrm{loc}}}

\let\SETMINUS\setminus
\renewcommand{\setminus}{\backslash}

\def\stackrelboth#1#2#3{\mathrel{\mathop{#2}\limits^{#1}_{#3}}}

%

\newcommand{\prozess}[1][L]{{\ensuremath{#1=(#1_t)_{t\in[0,T]}}}\xspace}
\newcommand{\prazess}[1][L]{{\ensuremath{#1=(#1_t)_{t\ge0}}}\xspace}
\newcommand{\pt}[1][N]{\ensuremath{\P_{#1}}\xspace}
\newcommand{\tk}[1][N]{\ensuremath{T_{#1}}\xspace}
\newcommand{\dd}[1][]{\ensuremath{\ud{#1}}\xspace}

\newcommand{\scal}[2]{\ensuremath{\left\langle #1, #2 \right\rangle}}
\newcommand{\bscal}[2]{\ensuremath{\big\langle #1, #2 \big\rangle}}
\newcommand{\set}[1]{\ensuremath{\left\{#1\right\}}}
\newcommand{\brac}[1]{\ensuremath{\left( #1\right)}}
\newcommand{\brak}[1]{\ensuremath{\left[ #1\right]}}
\newcommand{\cond}[2]{\ensuremath{\left[ \left. #1\right| #2\right]}}
\def\lev{L\'{e}vy\xspace}
\def\lk{L\'{e}vy--Khintchine\xspace}
\def\lib{LIBOR\xspace}
\def\mg{martingale\xspace}
\def\smmg{semimartingale\xspace}
\def\alm{affine LIBOR model\xspace}
\def\alms{affine LIBOR models\xspace}
\def\dalms{defaultable \alms}
\def\ap{affine process\xspace}
\def\aps{affine processes\xspace}

\def\half{\frac1{2}}

\def\F{\ensuremath{\mathcal{F}}}
\def\bD{\mathbf{D}}
\def\bF{\mathbf{F}}
\def\bG{\mathbf{G}}
\def\bH{\mathbf{H}}
\def\R{\ensuremath{\mathbb{R}}}
\def\Rp{\mathbb{R}_{\geqslant0}}
\def\Rm{\mathbb{R}_{\leqslant 0}}
\def\C{\ensuremath{\mathbb{C}}}
\def\U{\ensuremath{\mathcal{U}}}
\def\I{\mathcal{I}}
\def\N{\mathbbN}

\def\P{\ensuremath{\mathds P}} 
\def\Q{\mathbb{Q}}
\def\E{\ensuremath{\mathds E}} 

\def\hP{\ensuremath{\widehat{\mathrm{I\kern-.2em P}}}}
\def\hE{\ensuremath{\widehat{\mathrm{I\kern-.2em E}}}}

\def\bP{\ensuremath{\overline{\mathrm{I\kern-.2em P}}}}
\def\bE{\ensuremath{\overline{\mathrm{I\kern-.2em E}}}}

\def\bphi{\overline{\phi}}
\def\bpsi{\overline{\psi}}

\def\ott{{0\leq t\leq T}}
\def\idd{{1\le i\le d}}

\def\icc{\mathpzc{i}}
\def\ecc{\mathbf{e}_\mathpzc{i}}

\def\uk{u_{k+1}}
\def\vk{v_{k+1}}

\def\e{\mathrm{e}}
\def\a{\mathrm{a}}
\def\b{\mathrm{b}}
\def\ud{\ensuremath{\mathrm{d}}}
\def\dt{\ud t}
\def\dr{\ud r}
\def\ds{\ud s}
\def\dx{\ud x}
\def\dy{\ud y}
\def\dv{\ud v}
\def\du{\ud u}
\def\dw{\ud w}
\def\dz{\ud z}
\def\dsdx{\ensuremath{(\ud s, \ud x)}}
\def\dtdx{\ensuremath{(\ud t, \ud x)}}

\def\rx{\mathrm{x}}

\def\1{\boldsymbol1}

\def\tc{\ensuremath{\mathpzc{T}}}
\def\afflm{\ensuremath{(X,\mathcal{X},T_N,u,v)}\xspace}

\def\lsnc{\ensuremath{\mathrm{LSNC-}\chi^2}}
\def\nc{\ensuremath{\mathrm{NC-}\chi^2}}

\newcommand{\cD}{{\mathcal{D}}}
\newcommand{\cF}{{\F}}
\newcommand{\cG}{{\mathcal{G}}}
\newcommand{\cH}{{\mathcal{H}}}
\newcommand{\cK}{{\mathcal{K}}}
\newcommand{\cM}{{\mathcal{M}}}
\newcommand{\cT}{{\mathcal{T}}}
\newcommand{\ha}{{\mathbb{H}}}
\newcommand{\indik}{{\mathbf1}}
\newcommand{\ifdefault}[1]{\ensuremath{\mathbf1_{\{\tau \leq #1\}}}}
\newcommand{\ifnodefault}[1]{\ensuremath{\mathbf1_{\{\tau > #1\}}}}

\newcommand\bovermat[2]{%
  \makebox[0pt][l]{$\smash{\overbrace{\phantom{%
    \begin{matrix}#2\end{matrix}}}^{#1}}$}#2}
\newcommand\bundermat[2]{%
  \makebox[0pt][l]{$\smash{\underbrace{\phantom{%
    \begin{matrix}#2\end{matrix}}}_{#1}}$}#2}
\makeatletter

\newcommand{\ind}{\mathbbm1}

\newcommand{\robert}[1]{\todo[linecolor=blue!40!white,backgroundcolor=blue!20!white,size=\footnotesize]{Robert: #1}}

\def\red{\color{red}}
\def\blue{\color{magenta}}
\def\green{\color{OliveGreen}}

\newcommand{\rr}[1]{{\red #1}}
\newcommand{\rg}[1]{{\green #1}}

\title[Continuous tenor affine LIBOR models and XVA]{Continuous tenor extension\\of affine LIBOR models with multiple curves\\and applications to XVA}

\author[A. Papapantoleon]{Antonis Papapantoleon}
\author[R. Wardenga]{Robert Wardenga}


\address{Institute of Mathematics, University of Mannheim, 68131 Mannheim, Germany}
\email{papapan@math.uni-mannheim.de}

\address{Institut f\"ur Mathematische Stochastik, TU Dresden, 01062 Dresden, Germany}
\email{Robert.Wardenga@tu-dresden.de}

\subjclass[2010]{91G30, 91G20, 60G44}

\keywords{Affine LIBOR models, multiple curves, discrete tenor, continuous tenor, interpolation, XVA, model risk.}

\date{}
\thanks{RW acknowledges funding from the Excellence Initiative of the German Research Foundation (DFG) under grant ZUK 64. Financial support from the Europlace Institute of Finance project ``Post-crisis models for interest rate markets'' is gratefully acknowledged.}

\begin{abstract}
We consider the class of affine LIBOR models with multiple curves, which is an analytically tractable class of discrete tenor models that easily accommodates 
positive or negative interest rates and positive spreads. 
By introducing an interpolating function, we extend the affine LIBOR models to a continuous tenor and derive expressions for the instantaneous forward rate and the short rate. 
We show that the continuous tenor model is arbitrage-free, that the analytical tractability is retained under the spot martingale measure, and that under mild 
conditions an interpolating function can be found such that the extended model fits any initial forward curve. 
This allows us to compute value adjustments (i.e. XVAs) consistently, by solving the corresponding `pre-default' BSDE. 
As an application, we compute the price and value adjustments for a basis swap, and study the model risk associated to different interpolating functions.
\end{abstract}

\maketitle\pagestyle{myheadings}\frenchspacing

\section{Introduction}

In the aftermath of the credit crisis and the European sovereign debt crisis, several of the classical paradigms in finance were no longer able to describe the new reality and needed to be designed afresh. 
On the one hand, significant spreads have appeared between rates of different tenors, which led to the development of multiple curve interest rate models. 
On the other hand, counterparty credit risk has emerged as the native form of default risk, along with liquidity risk, funding constraints and the collateralization of trades. 
Therefore, in post-crisis markets the quoted price of a derivative product (or, better, the cost of its hedging portfolio) is computed as the `clean' price of the product together with several value adjustments that reflect counterparty credit risk, liquidity risk, funding constraints, etc. 
In the context of interest rate derivatives, the `clean' price is typically computed as the discounted expected payoff under a martingale measure using a (discrete tenor) LIBOR market model, while the value adjustments are provided via the solution of a BSDE, which requires the existence of a short rate to discount the cash flows. 

The aim of this work is to compute prices and value adjustments consistently, in the sense that we only calibrate a discrete tenor LIBOR model and then infer the 
dynamics of the short rate from it, instead of resorting to an additional, external short rate model. 
In the sequel we will work with the class of affine LIBOR models with multiple curves. 
This class of models easily produces positive interest rates and positive spreads, as well as negative interest rates alongside positive spreads. 
Moreover, the models are analytically tractable in the sense that the driving process remains affine under all forward measures, which allows to derive explicit expressions for the prices of caplets and semi-analytic expressions for swaptions. 
Thus these models can be efficiently calibrated to market data; cf. \citet{Grbac_Papapantoleon_Schoenmakers_Skovmand_2014} for more details.  

Once the affine LIBOR model has been set up, we introduce an interpolating function that allows to extend the model from a discrete to a continuous tenor, and derive explicit expressions for the dynamics of the instantaneous forward rate and of the short rate. 
This part follows and extends \citet{KellerRessel_2009}, while a similar interpolation for \alms has been recently introduced by \citet{Cuchiero_Fontana_Gnoatto_2016}.
Moreover, we show that the resulting continuous tenor model is arbitrage-free and belongs to the class of affine term structure models. 
Let us mention that, on the contrary, the arbitrage-free interpolation of `classical' LIBOR market models is a challenging task; see e.g. \citet{Beveridge_Joshi_2012}.
In addition, we show that the driving process remains an affine process under the spot martingale measure, hence also the short rate is analytically tractable 
under this measure. 
The choice of the interpolating function is not innocuous though, as it may lead to undesirable behavior of the short rate; e.g. it may induce jumps at fixed times. 
Thus we investigate what properties the (discrete tenor) affine LIBOR model and the interpolating function should have in order to avoid such situations. 
In particular, we show that under mild assumptions there exists an interpolating function such that the extended model can fit any initial forward curve.

Then, we can compute value adjustments via solutions of a `pre-default' BSDE using the framework of \cite{Crepey2012a,Crepey2012b}. 
As an illustration, we design and calibrate an affine LIBOR model, and consider a simple post-crisis interest rate derivative, namely a basis swap. 
Using the methodology outlined above, we derive the dynamics of the short rate and of the basis swap using an interpolating function, and compute the value adjustments for different specifications of the contract. 
As the choice of an interpolating function is still arbitrary, we study the model risk associated to different choices.

This paper is organized as follows: Section \ref{sec:2} reviews affine processes and Section \ref{sec:3} presents an overview of multiple curve markets and affine \lib models. 
Section \ref{sec:4} focuses on the continuous tenor extension of \alms and studies the properties of interpolating functions. 
The final Section \ref{sec:5} outlines the computation of value adjustments in affine LIBOR models, and discusses the model risk associated with the choice of interpolating functions. 
The appendix contains a useful result on the time integration of affine processes.

\section{Affine processes on $\R_{\geqslant0}^d$}
\label{sec:2}

This section provides a brief overview of the basic notions and properties of affine processes. 
Proofs and further details can be found in \citet{DuffieFilipovicSchachermayer03}, in \citet{KellerRessel08}, and in \citet{Filipovic05} for the time-inhomogeneous case. 

Let $(\Omega,\F,\mathbb F,\P)$ denote a complete stochastic basis in the sense of \citet[Def. I.1.3]{JacodShiryaev03}, where $\mathbb F=(\F_t)_{t\in[0,T]}$ and $T\in[0,\infty)$ denotes the time horizon. 
In the sequel, we will consider a process $X$ that satisfies the following:
\begin{asa}
Let $(X,(\P_\rx)_{\rx\in D})$ be a conservative, time-homogeneous, stocha\-stically continuous Markov process taking values in $D=\Rp^d$, i.e. $(\P_\rx)_{\rx\in D}$ is a family of probability measures on $(\Omega,\F)$ and \prozess[X] is a Markov process such that for every $\rx \in D$ it holds $X_0=\rx$, $\P_\rx$-almost surely. 
Denote by $\E_\rx$ the expectation w.r.t. the measure $\P_\rx$ and by $\langle \cdot,\cdot \rangle$ the inner product in $\R^d$. 
Setting
\[
\mathcal{I}_T
 = \left\{ u\in\R^d \colon \E_\rx \big[\e^{\scal{u}{X_T}}\big]<\infty, 
   \;\mbox{for all } \rx \in D\right\} ,
\]
we assume that
\begin{enumerate}[label=(\roman*)]
\item $0\in\mathcal{I}_{T}^{\circ}$, where $\mathcal I_T^\circ$ denotes the interior of $\mathcal I_T$;
\item the conditional moment generating function of $X_t$ under $\P_\rx$ has expo\-nentially-affine dependence on $\rx$; that is, there exist functions 
		$\phi\colon[0,T]\times\mathcal{I}_{T}\rightarrow\mathbb{R}$ and $\psi\colon[0,T]\times\mathcal{I}_{T}\rightarrow\R^d$ such that
      	\begin{equation}\label{eq:affine_property}
			\E_\rx \big[ \exp \langle u,X_t \rangle \big] = \exp \big( \phi_t(u) + \scal{\psi_t(u)}{\rx} \big),
      	\end{equation}
      for all $(t,u,x)\in[0,T]\times\mathcal{I}_T\times D$.
\end{enumerate}
\end{asa}

The functions $\phi$ and $\psi$ satisfy the \textit{semi-flow equations}, that is, for all $0\leq t+s\leq T$ and $u\in\mathcal{I}_{T}$
\begin{align}\label{eq:flows}
\begin{split}
\phi_{t+s}(u) &=
  \phi_t(u) + \phi_s(\psi_t(u)), \\ 
\psi_{t+s}(u) &= \psi_s(\psi_t(u)), 
\end{split}
\end{align}
with initial conditions
\[
\phi_0(u)=0 \quad\mbox{ and }\quad \psi_0(u)=u.
\]
Using the semi-flow equations we can derive the \emph{generalized Riccati equations} 
\begin{align}\label{Riccati}
\begin{split}
\frac{\partial}{\partial t}\phi_t(u)
 &= F(\psi_t(u)),  \qquad \phi_0(u)=0, \\ 
\frac{\partial}{\partial t}\psi_t(u)
 &= R(\psi_t(u)),  \qquad \psi_0(u)=u,    
\end{split}
\end{align}
for $(t,u)\in[0,T]\times\mathcal{I}_{T}$, where $F$ and $R=(R_1,\dots,R_d)$ are functions of L\'evy--Khintchine form:
\begin{align}\label{F-R-def}
\begin{split}
F(u) &= \langle b,u\rangle +
     \int_D\big(\e^{\langle\xi,u\rangle}-1\rangle\big)m(\ud \xi),\\
R_i(u) &= \langle \beta_i,u\rangle
       + \Big\langle\frac{\alpha_i}2u,u\Big\rangle
       + \int_D\big(\e^{\langle\xi,u\rangle}-1-\langle
          u,h_i(\xi)\rangle\big)\mu_i(\ud \xi),
\end{split}
\end{align}
while $(b,m,\alpha_i,\beta_i,\mu_i)_{1\leq i\leq d}$ are \textit{admissible parameters}---see Definition 2.6 in \cite{DuffieFilipovicSchachermayer03} for details---and $h_i\colon\Rp^d\rightarrow\R^d$ are suitable truncation functions. 
The infinitesimal generator of a process satisfying Assumption ($\mathbb A$) is provided by
\begin{align}\label{eq:generator}
\mathcal{A}f(x) 
 &= \left\langle b+\sum_{i=1}^{d}\beta_ix_i,\nabla f(x)\right\rangle
  + \frac1{2} \sum_{i=1}^{d} \alpha_{i,kl}x_i 
    \frac{\partial^{2}}{\partial x_k\partial x_{l}}f(x) \nonumber \\
 &\quad + \int_D \big(f(x+\xi)-f(x)\big) m(\ud \xi) \\
 &\quad + \sum_{i=1}^{d} \int_D
    \big( f(x+\xi) - f(x) -\langle h_i(\xi),\nabla f(x)\rangle \big)x_{i} 
    \mu_i(\ud \xi), \nonumber 
\end{align}
for all $f\in C_0^{2}(D)$ and $x\in D$.

Additional results are summarized in the following lemma. 
In the sequel, inequalities have to be understood componentwise, in the sense that $(a_1,a_2)\leq(b_1,b_2)$ if and only if $a_1\leq b_1$ and $a_2\leq b_2$.

\begin{lemma}\label{lem:phi_psi_prop}
The functions $\phi$ and $\psi$ satisfy the following:
\begin{enumerate}
\item $\phi_{t}(0)=\psi_{t}(0)=0$ for all $t\in[0,T]$.
\item $\mathcal{I}_{T}$ is a convex set; moreover for each $t\in[0,T]$, the 
      functions $u\mapsto\phi_{t}(u)$ and $u\mapsto\psi_{t}(u)$, for
      $u\in\mathcal{I}_{T}$, are (componentwise) convex.
\item $\phi_{t}(\cdot)$ and $\psi_{t}(\cdot)$ are order preserving: let
      $(t,u),\,(t,v)\in[0,T]\times\mathcal{I}_{T}$, with $u\leq v$. Then
      \begin{equation}
      \phi_{t}(u)\leq\phi_{t}(v)
      \quad \text{ and } \quad
      \psi_{t}(u)\leq\psi_{t}(v).\label{eq:phi,psi order pres.}
      \end{equation}
\item $\psi_{t}(\cdot)$ is strictly order-preserving: let 
      $(t,u),\,(t,v)\in[0,T]\times\mathcal{I}_{T}^{\circ}$, with $u<v$. Then 
      $\psi_{t}(u)<\psi_{t}(v)$.
\item $\phi$ and $\psi$ are jointly continuous on 
      $[0,T]\times\mathcal{I}_{T}^{\circ}$.
\item The partial derivatives 
      \[
	\frac{\partial}{\partial u_{i}}\phi_{t}(u)
	\quad \text{ and } \quad
	\frac{\partial}{\partial u_{i}}\psi_{t}(u)
	,\quad i=1,\dots,d
      \]
      exist and are continuous for 
      $\left(t,u\right)\in[0,T]\times\mathcal{I}_{T}^{\circ}$.
\end{enumerate}
\end{lemma}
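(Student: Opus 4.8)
The plan is to read off items (1)--(3) directly from the defining affine property \eqref{eq:affine_property}, to obtain items (5)--(6) from the generalized Riccati equations \eqref{Riccati}, and to reserve the real work for item (4).

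\textbf{Items (1)--(3).} Putting $u=0$ in \eqref{eq:affine_property} gives $1=\exp\big(\phi_t(0)+\scal{\psi_t(0)}{\rx}\big)$ for every $\rx\in D=\Rp^d$; evaluating at $\rx=0$ yields $\phi_t(0)=0$, and then $\scal{\psi_t(0)}{\rx}=0$ on the spanning set $\Rp^d$ forces $\psi_t(0)=0$, which is (1). For (2), H\"older's inequality shows at once that $\mathcal I_T$ is convex and that $u\mapsto\log\E_\rx\big[\e^{\scal{u}{X_t}}\big]=\phi_t(u)+\scal{\psi_t(u)}{\rx}$ is convex on $\mathcal I_T$ for each $\rx\in D$; taking $\rx=0$ gives convexity of $\phi_t$, and taking $\rx=e_i$ (the $i$-th unit vector) gives convexity of $\psi_{t,i}$. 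For (3), if $u\leq v$ then $\scal{u}{X_t}\leq\scal{v}{X_t}$ $\P_\rx$-a.s.\ since $X_t\in\Rp^d$, hence $\phi_t(u)+\scal{\psi_t(u)}{\rx}\leq\phi_t(v)+\scal{\psi_t(v)}{\rx}$ for all $\rx\in\Rp^d$; evaluating at $\rx=0$ gives $\phi_t(u)\leq\phi_t(v)$, and then letting $\rx$ run to infinity along each coordinate ray gives $\psi_t(u)\leq\psi_t(v)$.

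\textbf{Items (5)--(6).} Here I would work from \eqref{Riccati}: the functions $F$ and $R$ in \eqref{F-R-def} are of L\'evy--Khintchine form, hence real-analytic (in particular $C^1$) on the interior of their effective domains, and --- the technical point --- $\psi_t(u)$ (and the segments needed below) stays in this interior for $(t,u)\in[0,T]\times\mathcal I_T^\circ$, a domain-invariance fact I would import from \cite{DuffieFilipovicSchachermayer03,Filipovic05}. Joint continuity (5) then follows from the classical continuous dependence of ODE solutions on initial data (it is also a consequence of (2), since a function convex in $u$ and $C^1$ in $t$, with locally $t$-uniform Lipschitz constants, is jointly continuous). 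For (6) I would differentiate \eqref{Riccati} in $u_i$: the variational equation $\tfrac{d}{dt}\partial_{u_i}\psi_t(u)=\mathrm{D}R(\psi_t(u))\,\partial_{u_i}\psi_t(u)$ with $\partial_{u_i}\psi_0(u)=e_i$, together with $\partial_{u_i}\phi_t(u)=\int_0^t\scal{\nabla F(\psi_s(u))}{\partial_{u_i}\psi_s(u)}\ds$, shows that the partials exist and are continuous on $[0,T]\times\mathcal I_T^\circ$.

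\textbf{Item (4), the crux.} Set $d(t):=\psi_t(v)-\psi_t(u)$; by (3) we have $d(t)\geq0$, and $d(0)=v-u>0$ componentwise. Using \eqref{Riccati} and the fundamental theorem of calculus, $d$ solves the linear system
\begin{equation*}
\dot d(t)=M(t)\,d(t),\qquad M(t):=\int_0^1\mathrm{D}R\big(\psi_t(u)+r\,d(t)\big)\,\dr,
\end{equation*}
which is well defined and continuous because $R$ is $C^1$ on the interior of its domain and the segment $[\psi_t(u),\psi_t(v)]$ lies there (convexity plus the domain invariance above). By the admissibility conditions on $(\beta_i,\alpha_i,\mu_i)_{1\le i\le d}$ the map $R$ is quasi-monotone increasing, i.e.\ $M(t)$ has nonnegative off-diagonal entries (a Metzler matrix). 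A Gr\"onwall-type bootstrap then upgrades $d(0)>0$ to $d(t)>0$ for all $t$: with $\lambda$ a lower bound for the diagonal entries of $M$ over $[0,t]$, the matrix $M-\lambda\,\mathrm{Id}$ is entrywise nonnegative, so if some component of $\tilde d:=\e^{-\lambda\,\cdot}\,d$ first vanished at a time $s^*>0$, that component of $\tilde d(s^*)$ would equal the strictly positive number $\tilde d(0)$ plus a nonnegative integral --- a contradiction. (Alternatively one can argue probabilistically: affine processes on $\Rp^d$ enjoy the branching property, so under $\P_{e_i}$ the law of $X_t$ is the convolution of its law under $\P_0$ with an infinitely divisible law $\nu_t^{i}$ on $\Rp^d$ satisfying $\E\big[\e^{\scal{u}{Z}}\big]=\e^{\psi_{t,i}(u)}$; then $\psi_{t,i}(v)-\psi_{t,i}(u)=\log\E\big[\e^{\scal{v}{Z}}\big]-\log\E\big[\e^{\scal{u}{Z}}\big]$ with $Z\sim\nu_t^{i}$, $Z\geq0$, which is strictly positive since $\nu_t^{i}\neq\delta_0$.)

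\textbf{Expected obstacle.} Everything except (4) is bookkeeping once \eqref{eq:affine_property} and \eqref{Riccati} are on the table. The substance of (4) lies in two structural inputs --- the quasi-monotonicity of $R$ supplied by the admissibility conditions (ODE route), or the branching decomposition together with the non-degeneracy $\nu_t^{i}\neq\delta_0$ (probabilistic route) --- and in both cases one must know that $\psi_t(u)$, and the segment joining it to $\psi_t(v)$, remains in the interior of the effective domain of $R$ for $(t,u)\in[0,T]\times\mathcal I_T^\circ$; securing this domain invariance is the genuine technical burden, which I would take from \cite{DuffieFilipovicSchachermayer03,KellerRessel08,Filipovic05}.
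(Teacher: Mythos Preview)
The paper does not actually prove this lemma; its entire proof reads ``See \citet[Lem.~4.2]{KellerResselPapapantoleonTeichmann09} for statements (1)--(4) and \citet[Prop.~3.16 and Lem.~3.17]{KellerRessel08} for the last two.'' Your proposal therefore goes well beyond what the paper offers, supplying genuine arguments that are essentially those of the cited sources and are correct in outline.

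Two minor quibbles. In (2), evaluating at $\rx=e_i$ gives convexity of $u\mapsto\phi_t(u)+\psi_{t,i}(u)$, not of $\psi_{t,i}$ alone; the clean fix is the scaling $\rx=\lambda e_i$, $\lambda\to\infty$ (a device you already deploy in (3)). In your probabilistic alternative for (4), the non-degeneracy $\nu_t^{i}\neq\delta_0$ is asserted without justification; fortunately your primary Metzler/Gr\"onwall route does not rely on it and is sound once the domain-invariance of $\psi_t$ --- which you rightly identify as the real technical burden and defer to \citet{DuffieFilipovicSchachermayer03,KellerRessel08} --- is in place.
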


\begin{proof}
See \citet[Lem. 4.2]{KellerResselPapapantoleonTeichmann09} for statements (1)--(4) and \citet[Prop. 3.16 and Lem. 3.17]{KellerRessel08} for the last two.\end{proof}

Affine processes have rich structural properties which have been proved particularly useful when it comes to financial modeling. 
However, there are situations where the condition of time-homogeneity cannot be met; for example, time-inhomogeneity may be introduced through an equivalent change of measure. 
\cite{Filipovic05} introduced time-inhomogeneous affine processes, whose conditional moment generating function takes the form
\[
\E_\rx \big[ \exp \langle u,X_t \rangle \big|\F_s \big] 
  = \exp \big( \phi_{s,t}(u)+ \langle \psi_{s,t}(u),X_{s} \rangle \big),
\]
for all $0\leq s\leq t\leq T$ and $u\in\mathcal{I}_{T}$. Theorem 2.7 in \cite{Filipovic05} yields that the infinitesimal generator is provided by
\begin{align}\label{eq:inhom_generator}
\mathcal{A}_t \e^{\langle u,\rx \rangle }
 &= \left. - \frac{\partial}{\partial s^{-}} \E\left[\left. 
    \e^{\scal{u}{X_t}}\right|X_{s}=\rx \right]\right|_{s=t} \nonumber \\
 &= \left(F(t,u)+\left\langle R(t,u),\rx\right\rangle \right)
     \e^{\left\langle u,\rx\right\rangle },
\end{align}
where the functions $F$ and $R$ retain the same form as in the time-homogeneous case, however the (admissible) parameters are now time-dependent. 
If the process $X$ is \textit{strongly regular affine}---that is, the parameters satisfy some continuity conditions, see Definition 2.9 in \cite{Filipovic05} for more details---then $\phi_{s,t}(u)$ and $\psi_{s,t}(u)$ satisfy generalized Riccati equations with time-dependent functional characteristics $F(s,u)$ and $R(s,u)$, i.e. for all $0\leq s\leq t\leq T$
\begin{align}\label{eq:time_inhomgeneous_riccati}
\begin{split}
-\frac{\partial}{\partial s}\phi_{s,t}(u) &=  
  F\left(s,\psi_{s,t}(u)\right),\quad\phi_{t,t}(u)=0,\\
-\frac{\partial}{\partial s}\psi_{s,t}(u) &= 
  R\left(s,\psi_{s,t}(u)\right),\quad\psi_{t,t}(u)=u.
\end{split}
\end{align}

\section{Affine LIBOR models with multiple curves}
\label{sec:3}

\subsection{A multiple curve setting}
\label{sub:A-multiple-curve-setting}

We start by introducing some basic notation and the main concepts used in multiple curve LIBOR models, following the approach introduced by \citet{Mercurio_2010}; see also \citet{Grbac_Papapantoleon_Schoenmakers_Skovmand_2014} for an overview and more details. 

The emergence of significant spreads between the OIS and LIBOR rates which depend on the investment horizon, also called tenor, means that we cannot work with a single tenor structure any longer. 
Let $\mathcal{T}=\{0=T_0<T_1<\cdots<T_N = T\}$ denote a discrete, equidistant time structure where $T_k$, for $k\in\mathcal{K}=\{1,\dots,N\}$, denote the relevant market dates, e.g. payment dates and maturities of traded instruments. 
The set of tenors is denoted by $\mathcal{X}=\{x_1,\dots,x_n\}$, where we typically have $\mathcal{X}=\{1,3,6,12\}$ months. 
Then, for every $x\in\mathcal{X}$ we consider the corresponding tenor structure $\mathcal{T}^x=\{0=T_0^x<T_1^x<\cdots<T_{N^x}^x=T_N\}$ with constant tenor length $\delta_x=T_k^x-T_{k-1}^x$. 
We denote by $\mathcal{K}^x=\{1,\dots,N^x\}$ the collection of all subscripts related to the tenor structure $\mathcal{T}^x$, and assume that $\mathcal{T}^x\subseteq\mathcal{T}$ for all $x\in\mathcal{X}$.

The Overnight Indexed Swap (OIS) rate is regarded as the best market proxy for the risk-free interest rate.
Moreover, the majority of traded interest rate derivatives are nowadays collateralized and the remuneration of the collateral is based on the overnight rate. 
Therefore the discount factors $B(0,T)$ are assumed to be stripped from OIS rates and defined for every possible maturity $T\in\mathcal{T}$; see also \citet[\S1.3.1]{Grbac_Runggaldier_2015}.
$B(t,T)$ denotes the discount factor, i.e. the time-$t$ price of a zero coupon bond with maturity \emph{T}, which is assumed to coincide with the corresponding OIS-based zero coupon bond.

Let $(\Omega,\F,\mathbb F,\P_N)$ be a complete stochastic basis, where $\P_N$ denotes the terminal forward measure, i.e. the martingale measure associated to the numeraire $B(\cdot,T_N)$. 
We consider the forward measures $(\P_k^x)_{x,k}$ associated to the numeraires $\{B(\cdot,T_k^x)\}_{x,k}$ for every pair $(x,k)$ with $x\in\mathcal{X}$ and $k\in\mathcal{K}^x$. 
Assuming that the processes $B(\cdot,T_k^x)/B(\cdot,T_N)$ are true $\P_N$-martingales for every pair $(x,k)$, the forward measures $\P_k^x$ are absolutely continuous with respect to $\P_N$ and defined in the usual way, i.e. via the Radon-Nikodym density 
\[
\frac{\ud\P_k^x}{\ud\P_N} 
  = \frac{B(0,T_N)}{B(0,T_k^x)} \cdot \frac1{B(T_k^x,T_N)}.
\]
Therefore, the forward measures are associated to each other via
\begin{equation}\label{eq:forward_measure_dens}
\frac{\ud\P_k^x}{\ud\P_{k+1}^x}\Big|_{\F_t}
  = \frac{B(0,T_{k+1}^x)}{B(0,T_k^x)} \cdot \frac{B(t,T_k^x)}{B(t,T_{k+1}^x)},
\end{equation}
hence they are related to the terminal measure via 
\begin{equation}\label{eq:forward_measure_terinal_dens}
\frac{\ud\P_k^x}{\ud\P_N} \Big|_{\F_t}
  = \frac{B(0,T_N)}{B(0,T_k^x)} \cdot \frac{B(t,T_k^x)}{B(t,T_N)}.
\end{equation}
The expectations with respect to the forward measures $\P_k^x$ and the terminal measure $\P_N$ are denoted by $\E_k^x$ and $\E_N$ respectively.

Next, we define the main modeling objects in the multiple curve LIBOR setting: the OIS forward rate, the forward LIBOR rate and the corresponding spread. 

\begin{definition}\label{def:OIS-rate}
The time-$t$ \emph{OIS forward rate} for the time interval $[T_{k-1}^x,T_k^x]$ is defined by 
\begin{equation}\label{eq:def-OIS-forward-rate}
F_k^x(t) := \frac1{\delta_x} \left( \frac{B(t,T_{k-1}^x)}{B(t,T_k^x)}-1 \right).
\end{equation}
\end{definition}

\begin{definition}\label{def:FRA-rate}
The time-$t$ \emph{forward LIBOR rate} for the time interval $[T_{k-1}^x,T_k^x]$ is defined by 
\begin{equation}\label{eq:def-FRA-rate}
L_k^x(t) := \E_k^x \left[ L(T_{k-1}^x,T_k^x)|\F_t \right],
\end{equation}
where $L(T_{k-1}^x,T_k^x)$ denotes the spot LIBOR rate at time $T_{k-1}^x$ for the time interval $[T_{k-1}^x,T_k^x]$.
\end{definition}

\noindent The forward LIBOR rate is the rate implied by a forward rate agreement where the future spot LIBOR rate is exchanged for a fixed rate; cf. \citet[pp.~12-13]{Mercurio_2009a}. 
The spot LIBOR rate $L(T_{k-1}^x,T_k^x)$ is set in advance, hence it is $\F_{T_{k-1}^x}$-measurable, therefore we have that the forward LIBOR rate coincides with the spot LIBOR rate at the corresponding tenor dates, i.e.
\[
L_k^x(T_{k-1}^x) 
= \E_k^x \big[ L(T_{k-1}^x,T_k^x)|\F_{T_{k-1}^x}\big]
= L(T_{k-1}^x,T_k^x).
\]

\begin{definition}\label{def:spread}
The \emph{(additive) spread} between the forward LIBOR rate and the OIS forward rate is defined by
\[
S_k^x(t) := L_k^x(t) - F_k^x(t).
\]
\end{definition}

\begin{remark}
In a single curve setup, the forward LIBOR rate is defined via \eqref{eq:def-OIS-forward-rate} and the spread equals zero for all times. 
However, in a multiple curve model these rates are not equal any more and we typically have that $L_k^x \geq F_k^x$. $F_k^x$ and $L_k^x$ can also be interpreted as forward rates corresponding to a riskless and a risky bond respectively; see e.g. \citet{Crepey_Grbac_Nguyen_2011}.
\end{remark}

\subsection{Affine LIBOR models with multiple curves}

We turn now our attention to the affine LIBOR models developed by \citet{KellerResselPapapantoleonTeichmann09} and extended to the multiple curve setting by \citet{Grbac_Papapantoleon_Schoenmakers_Skovmand_2014}. 
An important ingredient are martingales that are greater than, or equal to, one. 
Consider a process $X$ satisfying Assumption ($\mathbb{A}$) and starting at the canonical value $\boldsymbol1=(1,1,\dots,1)$, and let $u\in\mathcal{I}_T$. 
Then, the process \prozess[M^u] defined by 
\begin{align}\label{eq:def-M^u}
M_t^u
:= \E_{\boldsymbol1} \big[ \e^{\langle u,X_T\rangle}\big|\F_t \big]
 = \exp\big( \phi_{T-t}(u)+\langle\psi_{T-t}(u),X_t\rangle \big)
\end{align}
is a martingale. 
Moreover, if $u\in\I_T\cap\Rp^d$ the mapping $u \mapsto M_t^u$ is increasing and $M^u_t\ge1$ for every $t\in[0,T]$; see \citet[Thm~5.1]{KellerResselPapapantoleonTeichmann09} and \cite{Papapantoleon10b}.

The multiple curve \alms are defined as follows:

\begin{definition}
A multiple curve \alm $(X,\mathcal{X},T_N,u,v)$ consists of the following elements:
\begin{itemize}
\item An affine process $X$ under $\P_N$ satisfying Assumption ($\mathbb{A}$) and starting at the canonical value $\1$.
\item A set of tenors $\mathcal{X}$.
\item A terminal maturity $T_N$.
\item A sequence of vectors $u=(u_1,\dots,u_N)$ with $u_l=:u_k^x\in\I_T\cap\Rp^d$, for all $l=kT_1^x/T_1$ and $x\in\mathcal{X}$, such that
      \begin{align}\label{eq:def-ineq-u}
       u_1 \ge u_2 \ge \cdots \ge u_N=0.
      \end{align}
\item A collection of sequences of vectors $v=\set{(v_1^x,\dots,v^x_{N^x})}_{x\in\mathcal{X}}$ with $v_k^x\in\I_T\cap\Rp^d$, such that
      \begin{align}\label{eq:def-ineq-v}
       v_k^x \ge u_k^x \quad \text{for all } k\in\mathcal{K}^x,x\in\mathcal{X}.
      \end{align}
\end{itemize}
The dynamics of the OIS forward rates and the forward \lib rates in the model evolve according to 
\begin{equation}\label{eq:model-OIS-FRA-rate}
1 + \delta_x F_k^x(t) = \frac{M_t^{u_{k-1}^x}}{M_t^{u_k^x}}
 \quad\mbox{ and }\quad
1 + \delta_x L_k^x(t) = \frac{M_t^{v_{k-1}^x}}{M_t^{u_k^x}},
\end{equation}
for all $t\in[0,T_k^x]$, $k\in\mathcal{K}^x$ and $x\in\mathcal{X}$.
\end{definition}

The definition of multiple curve \alms implies that the dynamics of OIS forward rates and forward LIBOR rates, more precisely of $1+\delta_xF_k^x$ and $1+\delta_xL_k^x$, exhibit an exponential-affine dependence in the driving process $X$; see \eqref{eq:def-M^u} and \eqref{eq:model-OIS-FRA-rate}.
\citet{Glau_Grbac_Papapantoleon_2016} recently showed that models that exhibit this exponential-affine dependence are the only ones that produce structure preserving LIBOR models; cf. Proposition 3.11 therein.
The denominators in \eqref{eq:model-OIS-FRA-rate} are the same in both cases, since both rates have to be $\P_k^x$-martingales by definition.
On the other hand, different sequences $(u_l)_{l\in\mathcal{K}}$ and $(v_k^x)_{k\in\mathcal{K}^x}$ are used in the numerators in \eqref{eq:model-OIS-FRA-rate} producing different dynamics for OIS and LIBOR rates.
These sequences are used to fit the multiple curve \alm to a given initial term structure of OIS and LIBOR rates.
In particular, the subsequent propositions show that by fitting the model to the initial term structure we obtain automatically sequences $(u_l)_{l\in\mathcal{K}}$ and $(v_k^x)_{k\in\mathcal{K}^x}$ that satisfy \eqref{eq:def-ineq-u} and \eqref{eq:def-ineq-v} respectively; see also \citet[Rem. 4.4 and 4.5]{Grbac_Papapantoleon_Schoenmakers_Skovmand_2014} for further comments on these sequences.

The following quantity measures the ability of a multiple curve affine LIBOR model to fit a given initial term structure
\begin{equation}\label{eq:gamma-X}
\gamma_{X}
 := \underset{u\in\I_T\cap\R_{>0}^d} \sup \E_{\boldsymbol1} 
    \big[ \exp\langle u,X_{T}\rangle \big].
\end{equation}
In several models commonly used in mathematical finance, such as the Cox--Ingersoll--Ross model and Ornstein--Uhlenbeck processes driven by subordinators, this quantity equals infinity.
The following propositions show that the \alms are well-defined and can fit any initial term structure under mild conditions. 

\begin{proposition}\label{prop:Fitting-OIS} 
Consider the time structure $\mathcal{T}$, let $B(0,T_l)$, $l\in\mathcal{K}$, be the initial term structure of OIS discount factors and assume that 
\begin{equation}\label{eq:initial-term-struct-ordering}
B(0,T_1)\geq\cdots\geq B(0,T_N)>0.
\end{equation}
Then the following hold:
\begin{enumerate}
\item If $\gamma_{X}>B(0,T_1)/B(0,T_N)$, there exists a sequence $(u_l)_{l\in\mathcal{K}}$ in $\I_T\cap\Rp^d$ satisfying \eqref{eq:def-ineq-u} such that
      \[
		M_0^{u_l} = \frac{B(0,T_l)}{B(0,T_N)}\quad\text{for all }l\in\mathcal{K}.
      \]
      In particular, if $\gamma_{X}=\infty$, then the multiple curve affine LIBOR model can fit any initial term structure of OIS rates.
\item If X is one-dimensional, the sequence $(u_l)_{l\in\mathcal{K}}$ is unique.
\item If all initial OIS rates are positive, the sequence $(u_l)_{l\in\mathcal{K}}$ is strictly decreasing.
\end{enumerate}
\end{proposition}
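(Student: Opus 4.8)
The plan is to derive all three claims from a single elementary fact: the map
\[
u\longmapsto M_0^u=\exp\big(\phi_T(u)+\langle\psi_T(u),\1\rangle\big)
\]
is continuous and \emph{strictly} increasing on $\I_T^\circ\cap\Rp^d$, with $M_0^0=1$. Indeed, continuity is Lemma~\ref{lem:phi_psi_prop}(5), $M_0^0=1$ follows from $\phi_T(0)=\psi_T(0)=0$ (Lemma~\ref{lem:phi_psi_prop}(1)), and for $u<v$ in $\I_T^\circ$ one combines $\phi_T(u)\le\phi_T(v)$ (Lemma~\ref{lem:phi_psi_prop}(3)) with $\psi_T(u)<\psi_T(v)$ (Lemma~\ref{lem:phi_psi_prop}(4)), the latter giving $\langle\psi_T(u),\1\rangle<\langle\psi_T(v),\1\rangle$ after pairing with the strictly positive vector $\1$. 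I abbreviate $c_l:=B(0,T_l)/B(0,T_N)$, so that \eqref{eq:initial-term-struct-ordering} becomes $c_1\ge c_2\ge\cdots\ge c_N=1$ with all $c_l\ge1$, and the task in (1) is to produce $u_l\in\I_T\cap\Rp^d$ with $M_0^{u_l}=c_l$ forming a decreasing sequence ending at $u_N=0$.

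For part (1) I would fix a single ray and rescale along it. Since $\gamma_X>c_1$, definition \eqref{eq:gamma-X} yields some $\bar u\in\I_T\cap\R_{>0}^d$ with $M_0^{\bar u}>c_1$. Because $0\in\I_T^\circ$ and $\bar u\in\I_T$, convexity of $\I_T$ (Lemma~\ref{lem:phi_psi_prop}(2)) places $s\bar u$ in $\I_T^\circ$ for all $s\in[0,1)$, so $g(s):=M_0^{s\bar u}$ is continuous and strictly increasing on $[0,1)$ with $g(0)=1$; and since $\langle s\bar u,X_T\rangle=s\langle\bar u,X_T\rangle$ increases to $\langle\bar u,X_T\rangle$ as $s\uparrow1$ (using $\bar u>0$, $X_T\ge0$), monotone convergence gives $g(s)\uparrow\E_{\1}\big[\e^{\langle\bar u,X_T\rangle}\big]=M_0^{\bar u}$. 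Thus $g$ extends to a continuous strictly increasing bijection $[0,1]\to[1,M_0^{\bar u}]$. Since $c_l\in[1,c_1]\subset[1,M_0^{\bar u})$ for every $l\in\mathcal K$, there is a unique $s_l\in[0,1)$ with $g(s_l)=c_l$; I set $u_l:=s_l\bar u\in\I_T\cap\Rp^d$. Then $M_0^{u_l}=c_l$, and strict monotonicity of $g$ turns $c_1\ge\cdots\ge c_N=1$ into $s_1\ge\cdots\ge s_N=0$, hence $u_1\ge\cdots\ge u_N=0$ because $\bar u>0$, which is \eqref{eq:def-ineq-u}. If $\gamma_X=\infty$ the hypothesis is automatic, giving the final assertion.

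For part (2) I would note that when $d=1$ the set $\I_T\cap\Rp$ is an interval with left endpoint $0$ on which $u\mapsto M_0^u$ is continuous, strictly increasing, equal to $1$ at $u=0$, and with supremum $\gamma_X$ (by \eqref{eq:gamma-X} and monotonicity); so its range contains $[1,\gamma_X)$ and it is injective there. Since $c_l\le c_1<\gamma_X$, the equation $M_0^{u_l}=c_l$ then has exactly one solution, so the whole sequence is uniquely determined (and automatically satisfies \eqref{eq:def-ineq-u}). For part (3), the hypothesis that all initial OIS forward rates are positive is precisely that the inequalities in \eqref{eq:initial-term-struct-ordering} are strict, i.e. $c_{l-1}>c_l$ for $l=2,\dots,N$; then $u_{l-1}=u_l$ is impossible, since it would force $c_{l-1}=M_0^{u_{l-1}}=M_0^{u_l}=c_l$, so together with $u_{l-1}\ge u_l$ the sequence is strictly decreasing --- and in the construction of (1) one in fact gets $s_{l-1}>s_l$, hence $u_{l-1}>u_l$ componentwise.

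The one delicate point is the endpoint $s=1$ in the definition of $g$: the vector $\bar u$ may lie on $\partial\I_T$, where Lemma~\ref{lem:phi_psi_prop}(5) gives no information, so the continuous extension of $g$ to $s=1$ must be obtained from monotone convergence of the expectations $\E_{\1}[\e^{\langle s\bar u,X_T\rangle}]$ rather than from regularity of $\phi,\psi$; correspondingly one must check that the \emph{open} segment $\{s\bar u:0\le s<1\}$ stays in $\I_T^\circ$, so that the strict order-preservation in Lemma~\ref{lem:phi_psi_prop}(4) applies on $[0,1)$. Everything else is a routine intermediate-value and inversion argument.
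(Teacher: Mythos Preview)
Your proof is correct and follows essentially the same route as the one in the cited reference (Proposition~6.1 in \citet{KellerResselPapapantoleonTeichmann09}), which the paper invokes without reproducing: as described in Remark~\ref{rem:u_k_not_unique}, one fixes a point $\bar u\in\I_T\cap\R_{>0}^d$ with $M_0^{\bar u}>B(0,T_1)/B(0,T_N)$ and selects the $u_l$ along the straight segment from $\bar u$ to $0$ via the intermediate value theorem, exactly as you do. Your handling of the endpoint $s=1$ via monotone convergence, and of strict componentwise monotonicity in part~(3) via $s_{l-1}>s_l$ and $\bar u>0$, are the right refinements.
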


\begin{proof}
See Proposition 6.1 in \citet{KellerResselPapapantoleonTeichmann09}.
\end{proof}

\begin{proposition}\label{prop:Fitting-FRA} 
Consider the setting of the previous proposition, fix $x\in\mathcal{X}$ and the corresponding tenor structure $\mathcal{T}^x$. Let $L_k^x(0)$, $k\in\mathcal{K}^x$, be the initial term structure of non-negative forward LIBOR rates and assume that for every $k\in\mathcal{K}^x$
\begin{equation}\label{eq:FRA-OIS-spread}
L_k^x(0) 
  \geq \frac1{\delta_x}\left(\frac{B(0,T_{k-1}^x)}{B(0,T_k^x)}-1\right)
  = F_k^x\left(0\right).
\end{equation}
Then, the following hold:
\begin{enumerate}
\item If $\gamma_{X}>(1+\delta_xL_k^x(0))B(0,T_k^x)/B(0,T_N)$ for all $k\in\mathcal{K}^x$, then there exists a sequence $(v_k^x)_{k\in\mathcal{K}^x}$ in $\I_T\cap\Rp^d$ satisfying \eqref{eq:def-ineq-v} such that
      \[
	M_0^{v_k^x} = \left(1+\delta_xL_{k+1}^x(0)\right) M_0^{u_{k+1}^x},
	\quad \text{for all } k\in\mathcal{K}^x\setminus\{N^x\}.
      \]
      In particular, if $\gamma_{X}=\infty$, then the multiple curve affine LIBOR model can fit any initial term structure of forward LIBOR rates.
\item If X is one-dimensional, the sequence $(v_k^x)_{k\in\mathcal{K}^x}$ is unique.
\item If all initial LIBOR-OIS spreads are positive (i.e. \eqref{eq:FRA-OIS-spread} becomes strict), then $v_k^x>u_k^x$, for all $k\in\mathcal{K}^x\setminus\{N^x\}$.
\end{enumerate}
\end{proposition}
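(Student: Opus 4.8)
The plan is to follow the strategy behind Proposition~\ref{prop:Fitting-OIS} (see also \citet[Prop.~6.1]{KellerResselPapapantoleonTeichmann09}): I would rephrase the required identities as conditions on the values taken by the map $u\mapsto M_0^u$, use its continuity and monotonicity together with the definition of $\gamma_X$ to hit those values, and read off the order relations \eqref{eq:def-ineq-v} from the term-structure hypothesis \eqref{eq:FRA-OIS-spread}.

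First I would record the targets. By Proposition~\ref{prop:Fitting-OIS} and the identification $u_k^x=u_{kT_1^x/T_1}$ one has $M_0^{u_k^x}=B(0,T_k^x)/B(0,T_N)$, so the required identity $M_0^{v_k^x}=(1+\delta_xL_{k+1}^x(0))M_0^{u_{k+1}^x}$ amounts to $M_0^{v_k^x}=c_k$, where
\begin{equation*}
c_k:=\big(1+\delta_xL_{k+1}^x(0)\big)\,\frac{B(0,T_{k+1}^x)}{B(0,T_N)},\qquad k\in\mathcal{K}^x\setminus\{N^x\}.
\end{equation*}
Two bounds on $c_k$ are then needed. On the one hand, feeding the spread inequality \eqref{eq:FRA-OIS-spread} at index $k+1$, i.e.\ $1+\delta_xL_{k+1}^x(0)\ge B(0,T_k^x)/B(0,T_{k+1}^x)$, into the definition of $c_k$ gives $c_k\ge B(0,T_k^x)/B(0,T_N)=M_0^{u_k^x}\ (\ge 1)$, with strict inequality precisely when the spread at $k+1$ is strictly positive. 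On the other hand, the standing assumption $\gamma_X>(1+\delta_xL_k^x(0))B(0,T_k^x)/B(0,T_N)$, read at index $k+1$, is exactly $c_k<\gamma_X$. Hence $M_0^{u_k^x}\le c_k<\gamma_X$ for every relevant $k$.

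The heart of the argument is to produce, for each $k$, a vector $v_k^x\in\I_T\cap\Rp^d$ with $v_k^x\ge u_k^x$ and $M_0^{v_k^x}=c_k$, and I would do this along a single ray, exactly as I would organise the proof of Proposition~\ref{prop:Fitting-OIS}. Since the finitely many numbers $B(0,T_l)/B(0,T_N)$, $l\in\mathcal{K}$, and $c_k$, $x\in\mathcal{X}$, $k\in\mathcal{K}^x\setminus\{N^x\}$, are all strictly below $\gamma_X=\sup_{u\in\I_T\cap\R_{>0}^d}M_0^u$, one can fix a single $u^\ast\in\I_T\cap\R_{>0}^d$ with $M_0^{u^\ast}$ exceeding all of them. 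For $s\in[0,1)$ the point $su^\ast$ lies in $\I_T^\circ$ (convexity, using $0\in\I_T^\circ$), the function $s\mapsto M_0^{su^\ast}=\E_{\1}[\e^{s\langle u^\ast,X_T\rangle}]$ is continuous by Lemma~\ref{lem:phi_psi_prop}(5) and non-decreasing (since $u^\ast\ge0$ and $X$ takes values in $\Rp^d$), with $M_0^{0}=1$ and $M_0^{su^\ast}\uparrow M_0^{u^\ast}$ as $s\uparrow1$ by monotone convergence; so its range is $[1,M_0^{u^\ast})$, which contains every target. Choosing the $u_l$ on this ray (which one may, by re-running the fitting of Proposition~\ref{prop:Fitting-OIS} along the same $u^\ast$) and then setting $v_k^x:=t_ku^\ast$ with $t_k\in[0,1)$ chosen monotonically so that $M_0^{v_k^x}=c_k$, the order conditions \eqref{eq:def-ineq-u} and \eqref{eq:def-ineq-v} follow from the scalar monotonicity of $s\mapsto M_0^{su^\ast}$ and the inequalities already established ($B(0,T_1)/B(0,T_N)\ge\cdots\ge1$ for the $u_l$, and $c_k\ge M_0^{u_k^x}$ for the $v_k^x$).

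The remaining points are short. If $\gamma_X=\infty$ the assumption in (1) is vacuous, so the construction applies to any non-negative LIBOR curve satisfying \eqref{eq:FRA-OIS-spread}. For (2), when $d=1$ the map $u\mapsto M_0^u=\exp(\phi_T(u)+\psi_T(u))$ is strictly increasing below the level $\gamma_X$, since $\psi_T$ is strictly order-preserving by Lemma~\ref{lem:phi_psi_prop}(4) and $\phi_T$ is order-preserving; as $c_k<\gamma_X$, the equation $M_0^{v_k^x}=c_k$ has a unique solution, so $(v_k^x)_{k\in\mathcal{K}^x}$ is unique. For (3), if all initial spreads are strictly positive then $c_k>M_0^{u_k^x}$ strictly, hence $t_k$ strictly exceeds the ray-parameter of $u_k^x$ and therefore $v_k^x>u_k^x$. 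The step I expect to be the genuine obstacle is the one in the third paragraph: ensuring that the $v$-sequence can be chosen to dominate the $u$-sequence componentwise when $X$ is multi-dimensional and $u\mapsto M_0^u$ need not be injective. It is this that makes the common-ray (equivalently, ``interior-compatible'') construction preferable to an unconstrained intermediate-value argument, the latter delivering some $v_k^x$ with $M_0^{v_k^x}=c_k$ but no control on the relation between $v_k^x$ and $u_k^x$.
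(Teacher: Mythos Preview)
Your proposal is correct and follows essentially the same approach as the cited source. The paper itself does not give a proof but refers to \citet[Prop.~4.2]{Grbac_Papapantoleon_Schoenmakers_Skovmand_2014}; as Remark~\ref{rem:u_k_not_unique} indicates, the underlying construction there (as in \citet[Prop.~6.1]{KellerResselPapapantoleonTeichmann09}) places the parameters on a single ray through the origin, which is exactly what you do. Your identification of the targets $c_k$, the two-sided bound $M_0^{u_k^x}\le c_k<\gamma_X$, and the observation that the common-ray construction is what secures the componentwise inequality $v_k^x\ge u_k^x$ in dimension $d>1$ are all on point.
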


\begin{proof}
See Proposition 4.2 in \cite{Grbac_Papapantoleon_Schoenmakers_Skovmand_2014}.  
\end{proof}

\begin{remark}\label{rem:u_k_not_unique}
The proofs of these propositions are constructive and provide an easy algorithm for fitting an affine LIBOR model to a given initial term structure of OIS and \lib rates. 
However, for $d>1$ the sequences $u$ and $v^x$ are not unique, hence questions about optimality arise; see the discussion in subsections \ref{sec:on-interpolating-function} and \ref{discussion}. 
In the proof of Proposition \ref{prop:Fitting-OIS}, the sequence $(u_l)_{l\in\mathcal{K}}$ is chosen along a straight line in $\mathcal{I}_T\cap\R^d_{\geqslant0}$ from some $u\in\mathcal{I}_T$ to $0$, such that $u$ satisfies
\begin{align}\label{eq:cond-fit}
 M_{0}^{u}
  = \mathds{E}_{\boldsymbol{1}} \left[ \exp \left \langle u,X_{T} \right 
    \rangle \right]
  > \gamma_{X}-\varepsilon
  >\frac{B\left(0,T_{1}\right)}{B\left(0,T_{N}\right)}.
\end{align}
However, any other continuous path from another $u'$ to $0$, that satisfies \eqref{eq:cond-fit} and is componentwise decreasing, would have worked as well. 
\end{remark}

The next proposition shows that multiple curve affine LIBOR models are analytically tractable, in the sense that the affine structure is preserved under any forward measure. 

\begin{proposition}\label{prop:XistimeinhomunderPxk} 
The underlying process $X$ is a time-inhomogeneous affine process under the measure $\P_k^x$, for every $x\in\mathcal{X}$ and $k\in\mathcal{K}^x$. 
The moment generating function is provided by
\[
\E_{x,\rx}^k \big[ \exp \langle w,X_{t} \rangle \big]
  = \exp \Big(\phi_t^{x,k}(w)+\big\langle \psi_t^{x,k}(w),\rx\big\rangle \Big),
\]
for every $w$ such that $w+\psi_{T_N-t}(u_k^x)\in\mathcal{I}_{T}$, where 
\begin{eqnarray*}
\phi_{t}^{x,k}(w) & = 
	\phi_{t}\left(\psi_{T_N-t}(u_k^x)+w\right)-\phi_{t}\left(\psi_{T_N-t}(u_{k}^x)\right),\\
\psi_{t}^{x,k}(w) & = 
	\psi_{t}\left(\psi_{T_N-t}(u_k^x)+w\right)-\psi_{t}\left(\psi_{T_N-t}(u_{k}^x)\right).
\end{eqnarray*}
\end{proposition}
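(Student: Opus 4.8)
The plan is to rewrite the Radon--Nikodym density $\ud\P_k^x/\ud\P_N$ in terms of the martingale $M^{u_k^x}$ from \eqref{eq:def-M^u}, and then to obtain the conditional moment generating function of $X$ under $\P_k^x$ from the abstract Bayes formula, reducing everything to the affine property \eqref{eq:affine_property} of $X$ under $\P_N$ and the semi-flow equations \eqref{eq:flows}. Structurally this mirrors the single-curve argument of \citet{KellerResselPapapantoleonTeichmann09} and its multiple-curve counterpart in \citet{Grbac_Papapantoleon_Schoenmakers_Skovmand_2014}.

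First I would identify the density. Telescoping the OIS relation in \eqref{eq:model-OIS-FRA-rate} along the tenor structure $\mathcal{T}^x$, and using $T_{N^x}^x = T_N$, $u_{N^x}^x = u_N = 0$ (by \eqref{eq:def-ineq-u}), and $M^0 \equiv 1$ (immediate from $\phi_t(0) = \psi_t(0) = 0$, Lemma \ref{lem:phi_psi_prop}(1)), one gets $B(t,T_k^x)/B(t,T_N) = M_t^{u_k^x}$; evaluating at $t=0$ gives in particular $M_0^{u_k^x} = B(0,T_k^x)/B(0,T_N)$. Hence \eqref{eq:forward_measure_terinal_dens} becomes $\ud\P_k^x/\ud\P_N|_{\F_t} = M_t^{u_k^x}/M_0^{u_k^x}$.

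Next, for $0 \le s \le t \le T_N$ and $w$ with $w + \psi_{T_N-t}(u_k^x) \in \mathcal{I}_T$, the Bayes formula yields $\E_k^x[\e^{\langle w,X_t\rangle}\mid\F_s] = (M_s^{u_k^x})^{-1}\,\E_N[\e^{\langle w,X_t\rangle}M_t^{u_k^x}\mid\F_s]$. Substituting $M_t^{u_k^x} = \exp(\phi_{T_N-t}(u_k^x)+\langle\psi_{T_N-t}(u_k^x),X_t\rangle)$ and evaluating the remaining expectation $\E_N[\exp\langle w+\psi_{T_N-t}(u_k^x),X_t\rangle\mid\F_s]$ via the Markov property, time-homogeneity and \eqref{eq:affine_property}, I obtain an exponential-affine expression in $X_s$. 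Expanding $M_s^{u_k^x}$ in the same way and applying the semi-flow equations \eqref{eq:flows} to the splitting $T_N-s = (t-s)+(T_N-t)$ — i.e. $\phi_{T_N-s}(u_k^x)=\phi_{T_N-t}(u_k^x)+\phi_{t-s}(\psi_{T_N-t}(u_k^x))$ and $\psi_{T_N-s}(u_k^x)=\psi_{t-s}(\psi_{T_N-t}(u_k^x))$ — the factor $\e^{\phi_{T_N-t}(u_k^x)}$ cancels and one is left with $\exp(\phi_{s,t}^{x,k}(w)+\langle\psi_{s,t}^{x,k}(w),X_s\rangle)$, where $\phi_{s,t}^{x,k}(w)=\phi_{t-s}(\psi_{T_N-t}(u_k^x)+w)-\phi_{t-s}(\psi_{T_N-t}(u_k^x))$ and similarly for $\psi_{s,t}^{x,k}$. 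This is precisely the time-inhomogeneous affine form (the functional characteristics depend on $t$ through $\psi_{T_N-t}(u_k^x)$), and the case $s=0$, $X_0=\rx$ is the displayed formula.

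I expect the main obstacle to be the domain and integrability bookkeeping rather than the algebra: one must check that $w+\psi_{T_N-t}(u_k^x)$ lies in the effective domain at time $t$ so that \eqref{eq:affine_property} legitimately applies — this uses the standing hypothesis $w+\psi_{T_N-t}(u_k^x)\in\mathcal{I}_T$, the monotonicity of $t\mapsto\mathcal{I}_t$, and the order-preserving properties of $\psi$ from Lemma \ref{lem:phi_psi_prop} to control $\psi_{T_N-t}(u_k^x)$ for $u_k^x\in\mathcal{I}_T\cap\Rp^d$ — together with the fact that $M^{u_k^x}$ is a strictly positive $\P_N$-martingale, which justifies the change of measure and the use of the Bayes formula. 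The remaining steps are the routine semi-flow manipulations indicated above.
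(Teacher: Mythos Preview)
Your proposal is correct and follows exactly the approach of the cited reference (Proposition 4.6 in \citet{Grbac_Papapantoleon_Schoenmakers_Skovmand_2014}), which the paper simply invokes for its proof: identify $\ud\P_k^x/\ud\P_N|_{\F_t}=M_t^{u_k^x}/M_0^{u_k^x}$, apply the abstract Bayes formula, and reduce via the affine property \eqref{eq:affine_property} and the semi-flow equations \eqref{eq:flows}. There is nothing to add.
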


\begin{proof}
See Proposition 4.6 in \cite{Grbac_Papapantoleon_Schoenmakers_Skovmand_2014}. 
\end{proof}

The multiple curve \alms defined above and satisfying the prerequisites of Propositions \ref{prop:Fitting-OIS} and \ref{prop:Fitting-FRA} are arbitrage-free discrete tenor models, in the sense that
\begin{center}
$F_k^x$ and $L_k^x$ are $\P_k^x$-martingales 
\end{center}
for every $k\in\mathcal{K}^x$, $x\in\mathcal{X}$, while the interest rates and the spread are positive, i.e.
\begin{center}
$F_k^x(t)\ge0$ and $S_k^x(t)=L_k^x(t)-F_k^x(t)\ge0$ 
\end{center}
for every $t\in[0,T_{k-1}^x], k\in\mathcal{K}^x$, $x\in\mathcal{X}$; cf. Proposition 4.3 in \citet{Grbac_Papapantoleon_Schoenmakers_Skovmand_2014}.

\begin{remark}
The class of \alms with multiple curves can be extended to accomodate negative interest rates alongside positive spreads; see \citet[\S4.1]{Grbac_Papapantoleon_Schoenmakers_Skovmand_2014} for the details.
\end{remark}

\begin{remark}
We could use time-dependent parameters, i.e. time-inhomogeneous affine processes, in the construction of \alms, in particular since the dynamics of $X$ are time-dependent under forward measures; see Proposition \ref{prop:XistimeinhomunderPxk}.
We use affine processes instead, in order to ease the presentation of the model and its properties, and to be consistent with the relevant literature (cf. \citealt{KellerResselPapapantoleonTeichmann09} and \citealt{Grbac_Papapantoleon_Schoenmakers_Skovmand_2014}).
\end{remark}

\section{Continuous tenor extension of affine LIBOR models}
\label{sec:4}

\subsection{Discrete to continuous tenor}

This section is devoted to the extension of the affine LIBOR models from a discrete to a continuous tenor structure, and the derivation of the dynamics of the corresponding instantaneous forward rate and short rate.
The main tool is an interpolating function $\mathpzc{U}$, which is a function defined on $[0,T_N]$ that matches $u_l$ at each tenor date $T_l$. 
This subsection follows and extends \citet{KellerRessel_2009}.

\begin{definition}\label{def:interpol-fun}
An \textit{interpolating function} for the multiple curve affine LIBOR model $(X,\mathcal{X},T_N,u,v)$ is a continuous, componentwise decreasing function 
$\mathpzc{U}:[0,T_N] \to \Rp^d$ with $\mathpzc U(t)\in\mathcal I_T$ for all $t\in[0,T_N]$ and bounded right-hand derivatives, such that $\mathpzc{U}(T_l)=u_l$ for all $T_l\in\mathcal{T}$.
\end{definition}

\begin{remark}\label{rem:interpol-0}
Since $\mathpzc{U}$ is a mapping from $[0,T_N]$, it makes sense to define a $\mathpzc U_0$ element. 
This can be chosen such that $M_0^{\mathpzc{U}(0)}=\frac{1}{B(0,T_N)}$, which is consistent with Proposition \ref{prop:Fitting-OIS}.
\end{remark}

The interpolating function allows to derive an explicit expression for the dynamics of zero coupon bond prices in the multiple curve affine LIBOR model. 
In particular, they belong to the class of \textit{affine term structure models}; see e.g. \citet[\S24.3]{Bjoerk09}.

\begin{lemma}\label{lem:ATSM}
Let $\mathpzc{U}$ be an interpolating function for the multiple curve affine LIBOR model $(X,\mathcal{X},T_N,u,v)$ and define the (OIS zero coupon) bond price 
$B(t,\mathpzc{T})$ by 
\begin{equation}\label{eq:def-continuous-tenor}
\frac{B(t,\mathpzc{T})}{B(t,T_N)} = M_t^{\mathpzc{U}(\mathpzc{T})},
\end{equation}
for $0\le t\le \mathpzc{T}\le T_N$, where $M^\mathpzc{U(T)}$ is the martingale defined by \eqref{eq:def-M^u}. 
Then, bond prices satisfy 
\begin{equation}\label{eq:Bond(t,tau)_equals_exp}
B(t,\mathpzc{T}) = \exp\big( \alpha(t,\mathpzc{T})
		 + \langle \beta(t,\mathpzc{T}),X_t\rangle \big), 
\end{equation}
where
\begin{align}
\begin{split}\label{eq:bond-AB}
\alpha(t,\mathpzc{T}) &= 
  \phi_{T_N-t}(\mathpzc{U}(\mathpzc{T}))-\phi_{T_N-t}(\mathpzc{U}(t)),\\
\beta(t,\mathpzc{T}) &= 
  \psi_{T_N-t}(\mathpzc{U}(\mathpzc{T}))-\psi_{T_N-t}(\mathpzc{U}(t)).
\end{split}
\end{align}
\end{lemma}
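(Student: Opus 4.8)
The plan is to reduce the statement to a one-line substitution, once the reference bond $B(\cdot,T_N)$ has been eliminated from the defining relation \eqref{eq:def-continuous-tenor}; the only ingredients are the exponential-affine representation \eqref{eq:def-M^u} of the martingales $M^u$ and the fact that a zero coupon bond is worth par at its own maturity. First I would observe that, since $\mathpzc U$ takes values in $\I_T\cap\Rp^d$ by Definition \ref{def:interpol-fun}, the processes $M^{\mathpzc U(t)}$ and $M^{\mathpzc U(\mathpzc T)}$ are well-defined $\P_N$-martingales by \eqref{eq:def-M^u} and are bounded below by $1$ (in particular strictly positive) by the remark following \eqref{eq:def-M^u}; hence one may divide by $M^{\mathpzc U(t)}_t$.

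The key step is then to specialise \eqref{eq:def-continuous-tenor} to the admissible boundary value $\mathpzc T=t$: since a bond matures at par, $B(t,t)=1$, whence
\[
 B(t,T_N)=\big(M_t^{\mathpzc U(t)}\big)^{-1},
\]
which at $t=0$ is exactly the normalisation of Remark \ref{rem:interpol-0}. Substituting this identity back into \eqref{eq:def-continuous-tenor} gives
\[
 B(t,\mathpzc T)=\frac{M_t^{\mathpzc U(\mathpzc T)}}{M_t^{\mathpzc U(t)}},\qquad 0\le t\le\mathpzc T\le T_N .
\]
It then remains to insert the representation \eqref{eq:def-M^u} — in which, for the present model, the relevant horizon is $T=T_N$, so $M_t^u=\exp\big(\phi_{T_N-t}(u)+\scal{\psi_{T_N-t}(u)}{X_t}\big)$ for $u\in\I_T$ — into both factors, and to collect the $X_t$-linear and $X_t$-free terms; this immediately produces \eqref{eq:Bond(t,tau)_equals_exp} with $\alpha$ and $\beta$ exactly as in \eqref{eq:bond-AB}. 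I would round things off with two consistency checks: $\alpha(\mathpzc T,\mathpzc T)=\beta(\mathpzc T,\mathpzc T)=0$, so $B(\mathpzc T,\mathpzc T)=1$; and, at a tenor date $\mathpzc T=T_l$, using $\mathpzc U(T_l)=u_l$ together with $u_N=0$ and $\phi_t(0)=\psi_t(0)=0$ (Lemma \ref{lem:phi_psi_prop}), formula \eqref{eq:Bond(t,tau)_equals_exp} recovers the discrete-tenor identity $B(t,T_l)/B(t,T_N)=M_t^{u_l}$.

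I do not expect any real obstacle here — the argument is a direct computation. The only points warranting (minor) care are the well-definedness and strict positivity of the martingales $M^{\mathpzc U(t)},M^{\mathpzc U(\mathpzc T)}$, so that their quotient is meaningful, which are guaranteed by the requirement $\mathpzc U([0,T_N])\subseteq\I_T\cap\Rp^d$ built into Definition \ref{def:interpol-fun}, and the bookkeeping of identifying the time argument of $\phi,\psi$ in \eqref{eq:def-M^u} with $T_N-t$. The substance of the lemma is therefore conceptual rather than technical: it exhibits the continuous-tenor extension as an affine term structure model in the sense of \citet[\S24.3]{Bjoerk09}, on which the later derivation of the instantaneous forward rate and the short rate will rest.
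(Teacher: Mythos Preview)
Your argument is correct and, in fact, more direct than the paper's. Both proofs arrive at the same key identity $B(t,\mathpzc{T})=M_t^{\mathpzc U(\mathpzc T)}/M_t^{\mathpzc U(t)}$ and then simply insert \eqref{eq:def-M^u}; the difference lies only in how that quotient is obtained. You specialise the defining relation \eqref{eq:def-continuous-tenor} to $\mathpzc T=t$ and use $B(t,t)=1$ to read off $B(t,T_N)=\big(M_t^{\mathpzc U(t)}\big)^{-1}$ in one step. The paper instead writes $B(t,\mathpzc T)$ as a telescoping product of consecutive bond ratios through the discrete tenor points between $t$ and $\mathpzc T$, replaces each ratio by a quotient of $M$-martingales via \eqref{eq:def-continuous-tenor}, and lets the intermediate factors cancel. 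The telescoping detour emphasises compatibility with the discrete-tenor construction but is logically unnecessary; your shortcut loses nothing and makes the affine term-structure form drop out immediately.
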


\begin{proof}
Using the definition of the OIS forward rate, \eqref{eq:model-OIS-FRA-rate} and the positivity of bond prices, we get in the discrete tenor case, 
\[
B(T_k,T_i)
  = \prod_{l=k}^{i-1} \frac{B(T_k,T_{l+1})}{B(T_k,T_{l})}
  = \prod_{l=k}^{i-1} \frac{M_{T_k}^{u_{l+1}}}{M_{T_{k}}^{u_{l}}}
  = \frac{M_{T_k}^{u_{i}}}{M_{T_k}^{u_k}}
\]
for every $T_k,T_i\in\mathcal{T}$ such that $T_k\leq T_i\leq T_N$. 
Similarly in the continuous tenor case, using \eqref{eq:def-continuous-tenor} we get for $0\leq t\leq \mathpzc{T}\leq T_N$ that
\begin{align}\label{eq:bonds_for_all_maturities2}
B(t,\mathpzc{T}) \nonumber
 &= \frac{B\left(t,T_{\lfloor t \rfloor+1}\right)}{B\left(t,t\right)} 
    \left(\prod_{l=\lfloor t \rfloor+1}^{\lfloor \mathpzc{T} \rfloor-1}
    \frac{B(t,T_{l+1})}{B(t,T_l)}\right) 
    \frac{B(t,\mathpzc{T})}{B\left(t,T_{\lfloor \mathpzc{T} \rfloor}\right)} \\
 &= \frac{M_t^{u_{\lfloor t \rfloor+1}}}{M_t^{\mathpzc{U}(t)}}
    \left(\prod_{l=\lfloor t \rfloor+1}^{\lfloor \mathpzc{T} \rfloor-1} 
    \frac{M_t^{u_{l+1}}}{M_{t}^{u_{l}}}\right) 
    \frac{M_t^{\mathpzc{U}(\mathpzc{T})}}{M_t^{u_{\lfloor \mathpzc{T} \rfloor-1}}}
  = \frac{M_t^{\mathpzc{U}(\mathpzc{T})}}{M_t^{\mathpzc{U}(t)}},
\end{align}
where $\lfloor t \rfloor$ is such that $T_{\lfloor t \rfloor}$ is the largest element in the time structure $\mathcal{T}$ less than or equal to $t$. 
Hence, since $M_t^{\mathpzc{U(T)}}$ depends exponentially-affine on $X_{t}$, we arrive at \eqref{eq:Bond(t,tau)_equals_exp}--\eqref{eq:bond-AB}.
\end{proof}

Next, we will show that the extension of an affine LIBOR model from a discrete to a continuous tenor is an arbitrage-free term structure model. 
Following \citet[Def. 2.3]{MusielaRutkowski97b}, we say that a family of bond prices $(B\brac{t,\mathpzc{T}})_{0\leq t\leq \mathpzc{T}\leq T_N}$ satisfies a \emph{no-arbitrage condition} if there exists a measure $\mathds Q$ such that $B\brac{\cdot,\mathpzc{T}}/B\brac{\cdot,T_N}$ is a $\mathds Q$-local martingale and $B\brac{t,\mathpzc{T}}\leq 1$, for any $0\leq t\leq \mathpzc{T}\leq T_N$.

\begin{theorem}\label{thm:continuous-tenor}
Let $\mathpzc{U}$ be an interpolating function for the multiple curve affine LIBOR model $(X,\mathcal{X},T_N,u,v)$. 
Then $(B(t,\mathpzc{T}))_{0\leq t\leq\mathpzc{T}\leq T_N}$ is a \emph{continuous tenor extension of the affine LIBOR model}, i.e. an arbitrage-free model for all maturities $\mathpzc{T}\in[0,T_N]$, such that for all maturities $T\in\mathcal{T}$ the bond prices coincide with those of the (discrete tenor) affine LIBOR model. 
\end{theorem}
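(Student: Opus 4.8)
The plan is to check the two defining properties of a continuous tenor extension --- the Musiela--Rutkowski no-arbitrage condition and agreement with the discrete tenor model at the dates of $\mathcal{T}$ --- separately, using Lemma~\ref{lem:ATSM} and the monotonicity of $u\mapsto M_t^u$ as the only real inputs.

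I would start from the identity
\[
B(t,\mathpzc{T}) \;=\; \frac{M_t^{\mathpzc{U}(\mathpzc{T})}}{M_t^{\mathpzc{U}(t)}},
\qquad 0\le t\le \mathpzc{T}\le T_N,
\]
established in \eqref{eq:bonds_for_all_maturities2} inside the proof of Lemma~\ref{lem:ATSM}; in particular it yields $B(t,t)=1$ and $B(t,T_N)=1/M_t^{\mathpzc{U}(t)}$, which is consistent with the normalization \eqref{eq:def-continuous-tenor} at $\mathpzc{T}=T_N$ (since $\mathpzc{U}(T_N)=u_N=0$ and $M^0\equiv1$). Agreement with the discrete tenor model is then immediate: for $\mathpzc{T}=T_l\in\mathcal{T}$, Definition~\ref{def:interpol-fun} gives $\mathpzc{U}(T_l)=u_l$, so $B(t,T_l)/B(t,T_N)=M_t^{u_l}$, which is exactly the normalized zero coupon bond price of the discrete tenor affine LIBOR model obtained by telescoping \eqref{eq:model-OIS-FRA-rate} (as recalled at the start of the proof of Lemma~\ref{lem:ATSM}).

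For the no-arbitrage condition I would take $\mathds{Q}=\P_N$. By \eqref{eq:def-continuous-tenor}, $B(\cdot,\mathpzc{T})/B(\cdot,T_N)=M^{\mathpzc{U}(\mathpzc{T})}$, and since $\mathpzc{U}(\mathpzc{T})\in\mathcal{I}_T$ by Definition~\ref{def:interpol-fun}, formula \eqref{eq:def-M^u} shows that $M^{\mathpzc{U}(\mathpzc{T})}$ is a true $\P_N$-martingale, hence in particular a $\P_N$-local martingale, as required. For the unit bound I would use the displayed identity together with $\mathpzc{T}\ge t$: since $\mathpzc{U}$ is componentwise decreasing, $\mathpzc{U}(\mathpzc{T})\le\mathpzc{U}(t)$ componentwise, and since $\mathpzc{U}(\mathpzc{T}),\mathpzc{U}(t)\in\mathcal{I}_T\cap\Rp^d$ with $u\mapsto M_t^u$ increasing on that set (\citet[Thm.~5.1]{KellerResselPapapantoleonTeichmann09}, \cite{Papapantoleon10b}), we obtain $M_t^{\mathpzc{U}(\mathpzc{T})}\le M_t^{\mathpzc{U}(t)}$, i.e. $B(t,\mathpzc{T})\le1$ for all $0\le t\le\mathpzc{T}\le T_N$.

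I do not expect a genuine obstacle here: the construction via the martingales $M^u$ builds in the (local) martingale property automatically, and the monotonicity of $M^u$ in $u$ turns the monotonicity of the interpolating function directly into the bound $B\le1$. The only points that need a little care are bookkeeping ones --- verifying that every argument fed to $M$ stays in $\mathcal{I}_T\cap\Rp^d$ so that both \eqref{eq:def-M^u} and the monotonicity statement apply, and matching the forced normalization $B(t,T_N)=1/M_t^{\mathpzc{U}(t)}$ with \eqref{eq:def-continuous-tenor}.
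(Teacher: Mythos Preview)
Your argument is correct and in fact checks the no-arbitrage condition exactly as it is \emph{defined} in the paragraph preceding the theorem: a $\mathds Q$-local martingale property for the normalized bonds together with the unit bound $B(t,\mathpzc{T})\le 1$. The paper takes a slightly longer route: instead of verifying this definition directly, it appeals to a set of three sufficient conditions from \citet[\S2.3]{MusielaRutkowski97b}, namely (i) that each $B(\cdot,\mathpzc{T})$ is a strictly positive special semimartingale with strictly positive left limits, (ii) that $B(\cdot,\mathpzc{T})/B(\cdot,T_N)$ is a $\P_N$-martingale, and (iii) a monotonicity condition on maturities. Conditions (ii) and (iii) are essentially the same ingredients you use --- (iii) in the paper is deduced from the order-preserving property of $\phi,\psi$ in Lemma~\ref{lem:phi_psi_prop}(3) rather than from the monotonicity of $u\mapsto M_t^u$, but these are equivalent here since $X$ lives in $\Rp^d$. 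The genuine extra work in the paper is condition (i): showing that $B(\cdot,\mathpzc{T})$ is \emph{special} requires bounding the jumps $|\Delta B(t,\mathpzc{T})|\le 1$, which uses that $\beta(t,\mathpzc{T})\le 0$ and that $\Delta X\ge 0$ (the latter coming from the support of the jump compensator of an $\Rp^d$-valued affine process). Your approach bypasses this entirely, which is fine for the no-arbitrage condition as stated; the paper's approach buys a little more structural information about the bond price processes at the cost of the additional jump argument.
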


\begin{proof}
The definition of the interpolating function yields immediately that bond prices in the continuous tenor extension coincide with those of the discrete tenor affine LIBOR model for all maturities.

According to \citet[\S2.3]{MusielaRutkowski97b}, in order to show that the model is arbitrage-free it suffices to verify the following conditions on the family $(B(t,\mathpzc{T}))_{0\leq t\leq \mathpzc{T}\leq T_N}$ of bond prices:
\begin{enumerate}[label=(\roman*)]
\item $B(\cdot,\mathpzc{T})$ is a strictly positive special semimartingale and the left-hand limit process $B\left(\cdot-,\mathpzc{T}\right)$ is also strictly 			positive, for every $\mathpzc{T}\in\left[0,T_N\right].$
\item The bond price quotients $B(\cdot,\mathpzc{T})/B(\cdot,T_N)$ are $\P_N$-martingales.
\item $B(t,S)\leq B(t,U)$ for all $0\leq t\leq S\leq U\leq T_N$. 
\end{enumerate}

The second condition follows immediately from \eqref{eq:def-continuous-tenor} and the construction of $M^u$ as a $\P_N$-martingale. 
In order to check the first and the third conditions, we shall use the representation for the bond prices from Lemma \ref{lem:ATSM}. 
Indeed, the last condition follows directly from representation \eqref{eq:Bond(t,tau)_equals_exp}--\eqref{eq:bond-AB}, using the monotonicity of the function $\mathpzc{U}$ and the order preserving property of $\phi$ and $\psi$; cf. Lemma \ref{lem:phi_psi_prop}. Moreover, the continuity of $\phi$ and $\psi$ together with \eqref{eq:Bond(t,tau)_equals_exp} imply that $B(t-,\mathpzc{T}) = \exp(\alpha(t,\mathpzc{T}) + \langle\beta(t,\mathpzc{T}),X_{t-}\rangle)$, which ensures the positivity of $B(\cdot-,\mathpzc{T})$.

Finally, $B(\cdot,\mathpzc{T})$ is a smooth function of $X$ hence it is also a semimartingale, which is special if its associated jump process $\Delta 
B(t,\mathpzc{T})=B(t,\mathpzc{T})-B\left(t-,\mathpzc{T}\right)$ is absolutely bounded; cf. \citet[Lemma I.4.24]{JacodShiryaev03}. 
The processes $X$ and $X_{-}$ are non-negative a.s. and the same is true for $\Delta X$, since the compensator of the jump measure of $X$ is entirely supported on the positive half-space; cf. \citet[Definition~2.6]{DuffieFilipovicSchachermayer03}. 
Using again Lemma \ref{lem:phi_psi_prop}(4) and the monotonicity of $\mathpzc{U}$, we get that $\alpha(t,\mathpzc{T})$ and $\beta(t,\mathpzc{T})$ take values in the negative half space. 
Thus we can estimate the jump process $\Delta B(t,\mathpzc{T})$ as follows:
\begin{align*}
\left| \Delta B(t,\mathpzc{T}) \right| 
 &= \left|\e^{\alpha(t,\mathpzc{T})
	+ \left\langle \beta(t,\mathpzc{T}),X_{t}\right\rangle} 
   - \e^{\alpha(t,\mathpzc{T}) 
	+ \left\langle \beta(t,\mathpzc{T}),X_{t-}\right\rangle }\right|\\
 &= \e^{ \alpha(t,\mathpzc{T}) + \left\langle \beta(t,\mathpzc{T}),X_{t-}\right\rangle } 
    \left|\e^{\left\langle \beta(t,\mathpzc{T}),\Delta X_{t}\right\rangle}-1\right| 
    \le 1. \qedhere
\end{align*}
\end{proof}

Having bond prices for all maturities $\mathpzc{T}\in[0,T_N]$ at hand, we can now calculate the dynamics of the instantaneous forward rate $f(t,\mathpzc{T})$ 
with maturity $\mathpzc{T}$ prevailing at time $t$ and of the short rate $r_{t}$ prevailing at time $t$. 
These quantities are commonly defined as
\[
f(t,\mathpzc{T})
 = -\frac{\partial_{+}\log B(t,\mathpzc{T})}{\partial \mathpzc{T}}
 \quad\mbox{ and }\quad
r_{t}=f(t,t).
\]
Together with the requirement that $B(\mathpzc{T},\mathpzc{T})=1$, we get that the former is equivalent to 
\begin{equation}\label{eq:bonds_for_all_maturities}
B(t,\mathpzc{T}) = \exp\Bigg(-\int_{t}^{\mathpzc{T}}f(t,s)\ds\Bigg).
\end{equation}

\begin{lemma}\label{prop:Instantaneous-forward-and-short-rate}
Let $\mathpzc U$ be an interpolating function and consider the continuous tenor extension of the affine LIBOR model \afflm. 
Then, the instantaneous forward rate and the short rate are provided by 
\begin{align}\label{eq:short-rate}
f(t,\mathpzc{T}) 
  = p(t,\mathpzc{T}) + \left\langle q(t,\mathpzc{T}),X_{t}\right\rangle
\quad \text{ and } \quad 
r_{t} = p_{t} + \left\langle q_{t},X_{t}\right\rangle,
\end{align}
where 
\begin{align}\label{eq:p-q}
\begin{split}
p(t,\mathpzc{T}) 
  &= - \scal{\left. \nabla_u\phi_{T_N-t}\brac{u} 
      \right|_{u=\mathpzc{U}(\mathpzc{T})}} 
      {\left. \frac{\ud \mathpzc{U}(t)}{\ud t+}\right|_{t=\tc}}, 
\quad p_t = p(t,t),      
\\
q(t,\mathpzc{T}) 
  &=-\left. \nabla_u \psi_{T_N-t}\brac{u} 
    \right|_{u=\mathpzc{U}(\mathpzc{T})}\circ 
    \left.\frac{\ud \mathpzc{U}(t)}{\ud t+}\right|_{t=\tc}, 
\quad q_t = q(t,t),
\end{split}
\end{align}
for all $0\le t\le \tc \le T_N$.
Here $a\circ b$ denotes the componentwise multiplication of two vectors $a$ and $b$ having the same dimension.
\end{lemma}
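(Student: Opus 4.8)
The plan is to differentiate the explicit bond price formula from Lemma \ref{lem:ATSM} with respect to the maturity argument $\mathpzc{T}$, and then specialize to $\mathpzc T = t$. Starting from \eqref{eq:Bond(t,tau)_equals_exp}--\eqref{eq:bond-AB}, I would write
\[
\log B(t,\mathpzc{T}) = \phi_{T_N-t}(\mathpzc U(\mathpzc T)) - \phi_{T_N-t}(\mathpzc U(t)) + \big\langle \psi_{T_N-t}(\mathpzc U(\mathpzc T)) - \psi_{T_N-t}(\mathpzc U(t)), X_t \big\rangle,
\]
so that only the terms involving $\mathpzc U(\mathpzc T)$ depend on $\mathpzc T$. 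Taking the right-hand partial derivative $\partial_+/\partial\mathpzc T$ and using the chain rule gives
\[
\frac{\partial_+ \log B(t,\mathpzc T)}{\partial \mathpzc T} = \Big\langle \nabla_u \phi_{T_N-t}(u)\big|_{u=\mathpzc U(\mathpzc T)}, \tfrac{\ud \mathpzc U(\mathpzc T)}{\ud \mathpzc T+}\Big\rangle + \Big\langle \nabla_u \psi_{T_N-t}(u)\big|_{u=\mathpzc U(\mathpzc T)}\circ \tfrac{\ud \mathpzc U(\mathpzc T)}{\ud \mathpzc T+}, X_t\Big\rangle.
\]
Negating this yields exactly the claimed form \eqref{eq:short-rate}--\eqref{eq:p-q} for $f(t,\mathpzc T)$, and setting $\mathpzc T = t$ gives the short rate $r_t = f(t,t)$.

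The technical content is justifying that this term-by-term differentiation is legitimate. First I would invoke Lemma \ref{lem:phi_psi_prop}(6), which guarantees that the partial derivatives $\partial_{u_i}\phi_{T_N-t}(u)$ and $\partial_{u_i}\psi_{T_N-t}(u)$ exist and are continuous on $[0,T_N]\times\mathcal I_T^\circ$; this handles the ``outer'' derivative. For the ``inner'' derivative, the definition of an interpolating function (Definition \ref{def:interpol-fun}) requires $\mathpzc U$ to be continuous, componentwise decreasing with bounded right-hand derivatives, so $\ud \mathpzc U(\mathpzc T)/\ud\mathpzc T+$ exists for every $\mathpzc T\in[0,T_N]$. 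Composing, the right-hand derivative of $\mathpzc T\mapsto\phi_{T_N-t}(\mathpzc U(\mathpzc T))$ exists and equals the inner product written above; the same for $\psi$. One must also note that $\mathpzc U(\mathpzc T)$ may land on the boundary $\partial\mathcal I_T$ rather than in the interior — but since the construction of the sequences $u_l$ in Remark \ref{rem:u_k_not_unique} and Proposition \ref{prop:Fitting-OIS} keeps them in $\mathcal I_T\cap\Rp^d$ and in practice in the interior along the chosen path, and since $\nabla\phi$, $\nabla\psi$ extend continuously, this causes no difficulty; alternatively one restricts attention to $\mathpzc U$ taking values in $\mathcal I_T^\circ$, which is the relevant case.

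I expect the only real subtlety to be notational bookkeeping: keeping straight that the differentiation is in the \emph{second} (maturity) slot of $B(t,\mathpzc T)$ while $t$ is held fixed, so the $-\phi_{T_N-t}(\mathpzc U(t))$ and $-\langle\psi_{T_N-t}(\mathpzc U(t)),X_t\rangle$ pieces are constants that drop out, and then evaluating the resulting $p(t,\mathpzc T)$, $q(t,\mathpzc T)$ at $\mathpzc T = t$. After that, reading off $p_t = p(t,t)$ and $q_t = q(t,t)$ and substituting into $r_t = f(t,t)$ is immediate. Finally, as a consistency check, one can verify that integrating $f(t,s)$ over $s\in[t,\mathpzc T]$ recovers \eqref{eq:bonds_for_all_maturities}, i.e. $\int_t^{\mathpzc T} f(t,s)\,\ds = -\log B(t,\mathpzc T)$, which follows by the fundamental theorem of calculus together with $B(t,t)=1$ (so $\alpha(t,t)=\beta(t,t)=0$).
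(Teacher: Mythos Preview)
Your proposal is correct and follows exactly the same approach as the paper: invoke the log-affine bond price representation from Lemma~\ref{lem:ATSM}, take the right-hand derivative in the maturity variable $\mathpzc{T}$ via the chain rule, and justify existence of the derivatives using Definition~\ref{def:interpol-fun} and Lemma~\ref{lem:phi_psi_prop}(6). The paper's proof is simply a terser version of what you wrote (and additionally remarks that $p(t,\mathpzc{T})$ and $q(t,\mathpzc{T})$ are nonnegative, which you could optionally add as a one-line observation).
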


\begin{proof}
We know from Lemma \ref{lem:ATSM} that bond prices are log-affine functions of $X$, in particular they are provided by \eqref{eq:Bond(t,tau)_equals_exp}--\eqref{eq:bond-AB}. 
The result now follows by taking the right hand derivative of $\alpha(t,\mathpzc{T})+\left\langle\beta(t,\mathpzc{T}),X_t\right\rangle$ w.r.t. $\tc$, which exists by Definition \ref{def:interpol-fun} and Lemma \ref{lem:phi_psi_prop}(6). 
Note also that $p\brac{t,\tc}$ and $q\brac{t,\tc}$ are positive, for all $t\in[0,\tc]$.
\end{proof}

\noindent Moreover, the continuously compounded bank account $B^\star$ is 
defined as usual:
\begin{equation}\label{eq:Bank-account}
  B^\star = \exp\Bigg(\int_0^\cdot r_{s}\ds\Bigg),
\end{equation}
while the associated spot measure $\P_\star$, under which bond prices are 
provided by
\[
B(t,\mathpzc{T}) 
 = \E_\star \left[\frac{B_{t}^\star}{B_{\mathpzc{T}}^\star}\Big|\F_t\right],
\]
is calculated next.

\begin{lemma}\label{thm:spot_measure_density}
Let $\mathpzc U$ be an interpolating function and consider the continuous tenor extension of the affine LIBOR model \afflm. 
Then, the spot measure $\P_\star$ is determined by the density process
\[
\left. \frac{\ud\P_\star}{\ud\P_N} \right|_{\F_t} 
  = \frac{1}{M_0^{\mathpzc{U}(0)}}
    \exp\left( P_t + \left\langle Q_t,X_t\right\rangle 
	      + \int_0^t r_s\ds\right),
\]
where $P_t:=\phi_{T_N-t}(\mathpzc{U}(t))$ and $Q_t:=\psi_{T_N-t}(\mathpzc{U}(t))$.
\end{lemma}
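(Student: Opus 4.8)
The plan is to identify the spot measure density by computing the Radon--Nikodym density of $\P_\star$ with respect to $\P_N$ directly from the defining relation $B(0,\mathpzc T)=\E_\star[1/B^\star_{\mathpzc T}]$ versus the terminal-measure pricing formula, and then rewrite the result in terms of the martingale $M^{\mathpzc U(t)}$. First I would recall that by Lemma~\ref{lem:ATSM} the bond price $B(t,\mathpzc T)/B(t,T_N)=M_t^{\mathpzc U(\mathpzc T)}$ is a $\P_N$-martingale, so $\P_N$ is precisely the martingale measure for the numeraire $B(\cdot,T_N)$; on the other hand, $\P_\star$ is by definition the martingale measure for the numeraire $B^\star$. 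The standard change-of-numeraire formula then gives
\[
\left.\frac{\ud\P_\star}{\ud\P_N}\right|_{\F_t}
  = \frac{B^\star_t}{B^\star_0}\cdot\frac{B(0,T_N)}{B(t,T_N)}
  = \exp\!\Big(\int_0^t r_s\,\ds\Big)\cdot\frac{B(0,T_N)}{B(t,T_N)},
\]
using $B^\star_0=1$ and \eqref{eq:Bank-account}. So the whole task reduces to expressing $B(0,T_N)/B(t,T_N)$ in the stated exponential-affine form.

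The key step is to evaluate $B(t,T_N)$. Since $M^{\mathpzc U(T_N)}_t = M^0_t = 1$ (because $\mathpzc U(T_N)=u_N=0$ and $M^0\equiv1$ by Lemma~\ref{lem:phi_psi_prop}(1)), setting $\mathpzc T=T_N$ in \eqref{eq:def-continuous-tenor} gives $B(t,T_N)=M_t^{\mathpzc U(t)}\cdot B(t,T_N)$—which is the trivial identity $1=M_t^{\mathpzc U(t)}/M_t^{\mathpzc U(t)}$ and not directly what I want. Instead I would use \eqref{eq:bonds_for_all_maturities2} with $\mathpzc T=T_N$, which reads $B(t,T_N) = M_t^{\mathpzc U(T_N)}/M_t^{\mathpzc U(t)} = 1/M_t^{\mathpzc U(t)}$. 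Hence
\[
\frac{B(0,T_N)}{B(t,T_N)} = \frac{M_t^{\mathpzc U(t)}}{M_0^{\mathpzc U(0)}},
\]
using $B(0,T_N)=1/M_0^{\mathpzc U(0)}$ from Remark~\ref{rem:interpol-0}. Now I substitute the explicit form of $M_t^{\mathpzc U(t)}$ from \eqref{eq:def-M^u}, namely $M_t^{\mathpzc U(t)}=\exp(\phi_{T_N-t}(\mathpzc U(t))+\langle\psi_{T_N-t}(\mathpzc U(t)),X_t\rangle)=\exp(P_t+\langle Q_t,X_t\rangle)$ with $P_t,Q_t$ as defined in the statement. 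Assembling the pieces yields exactly
\[
\left.\frac{\ud\P_\star}{\ud\P_N}\right|_{\F_t}
  = \frac{1}{M_0^{\mathpzc U(0)}}\exp\!\Big(P_t+\langle Q_t,X_t\rangle+\int_0^t r_s\,\ds\Big),
\]
as claimed.

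The main obstacle, and the only place requiring genuine care, is justifying the change-of-numeraire formula itself—i.e. that a well-defined probability measure $\P_\star$ with the stated density exists and is equivalent to $\P_N$. This amounts to checking that the candidate density process is a strictly positive $\P_N$-martingale with expectation one. Strict positivity is clear from the exponential form together with $M^{\mathpzc U(0)}_0>0$. For the martingale property one observes that, since $B^\star$ is continuous and of finite variation, integration by parts gives $\ud\big(M_t^{\mathpzc U(t)}/B^\star_t\big)$ with drift governed by $r_t$; alternatively, and more cleanly, one verifies that the process $\exp(P_t+\langle Q_t,X_t\rangle+\int_0^t r_s\ds)/M_0^{\mathpzc U(0)}$ coincides with $M_t^{\mathpzc U(t)}B^\star_t$, and since $B(\cdot,T_N)^{-1}=M^{\mathpzc U(\cdot)}_\cdot$ is the reciprocal of a numeraire asset that is a $\P_N$-martingale, the product with $B^\star$ is a local martingale whose expectation equals $B(0,T_N)^{-1}=M_0^{\mathpzc U(0)}$, so the normalized process is a true martingale with unit expectation. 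Here one may invoke the integrability already guaranteed by $\mathpzc U(t)\in\mathcal I_T$ for all $t$ (Definition~\ref{def:interpol-fun}), which ensures $M^{\mathpzc U(t)}$ is a genuine $\P_N$-martingale, not merely a local one. Once existence is secured, the identification of the density is the short computation above.
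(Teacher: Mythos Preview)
Your proof is correct and follows essentially the same route as the paper: invoke the standard change-of-numeraire density $\left.\tfrac{\ud\P_\star}{\ud\P_N}\right|_{\F_t}=B^\star_t\,B(0,T_N)/B(t,T_N)$, use $\mathpzc U(T_N)=0$ together with Lemma~\ref{lem:phi_psi_prop}(1) to obtain $B(t,T_N)=1/M_t^{\mathpzc U(t)}$, and substitute the exponential-affine form of $M_t^{\mathpzc U(t)}$. One cautionary remark on your closing paragraph: the process $t\mapsto M_t^{\mathpzc U(t)}$ (with time-varying exponent) is \emph{not} itself a $\P_N$-martingale, so the phrase ``$M^{\mathpzc U(\cdot)}_\cdot$ is \ldots\ a $\P_N$-martingale'' is imprecise; the correct statement is that $B^\star/B(\cdot,T_N)$ is a $\P_N$-martingale because $B^\star$ is a traded asset and $B(\cdot,T_N)$ is the numeraire for $\P_N$---which the paper simply defers to \citet[\S13.2.2]{MusielaRutkowski05}.
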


\begin{proof}
The spot measure and the terminal forward measure are related via 
$$ \left. \frac{\ud\P_\star}{\ud\P_N} \right|_{\F_t} = \frac{B_t^\star B\left(0,T_N\right)}{B(t,T_N)}; $$
cf. \citet[\S13.2.2]{MusielaRutkowski05}.
Then, the representation above follows easily from \eqref{eq:Bank-account}, \eqref{eq:Bond(t,tau)_equals_exp}--\eqref{eq:bond-AB}, Lemma  \ref{lem:phi_psi_prop}(1) and Remark \ref{rem:interpol-0}, using the fact that $\mathpzc{U}(T_N)=0$, hence $B(t,T_N)=1/M_t^{\mathpzc{U}(t)}$. 
\end{proof}

The next result resembles Proposition \ref{prop:XistimeinhomunderPxk} and shows that the driving process $X$ remains an affine process under the spot measure 
$\P_\star$. 
In other words, the multiple curve \alm remains analytically tractable under the spot measure as well.

\begin{theorem}
Let $\mathpzc{U}$ be an interpolating function and consider the continuous tenor extension of the affine LIBOR model \afflm. 
Then the underlying process $X$ is a time-inhomogeneous affine process under the spot measure $\P_\star$. 
In particular, $X$ is strongly regular affine and the functional characteristics under $\P_\star$ are provided by
\begin{align*}
F^{\star}\left(t,w\right)=F\left(w+Q_t\right)-F\left(Q_t\right)
\quad \text{and} \quad
R^{\star}\left(t,w\right)=R\left(w+Q_t\right)-R\left(Q_t\right),
\end{align*}
for every $w$ such that $w+Q_t\in\mathcal{I}_{T}$.
\end{theorem}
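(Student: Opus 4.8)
The plan is to compute the conditional moment generating function of $X_t$ under $\P_\star$ directly, using the density process from Lemma \ref{thm:spot_measure_density}, and read off that it has the exponentially-affine form characterising a time-inhomogeneous affine process. First I would fix $0\le s\le t\le T_N$ and $w\in\R^d$ with $w+Q_t\in\mathcal I_T$, and write, via Bayes' rule for the change of measure,
\[
\E_\star\big[\e^{\langle w,X_t\rangle}\big|\F_s\big]
 = \frac{\E_N\big[\e^{\langle w,X_t\rangle}\,Z_t\,\big|\F_s\big]}{Z_s},
 \qquad Z_t:=\left.\frac{\ud\P_\star}{\ud\P_N}\right|_{\F_t}.
\]
Substituting the explicit form of $Z_t$ from Lemma \ref{thm:spot_measure_density}, the factors $\exp(\int_0^t r_u\du)$ combine with $\exp(P_t+\langle Q_t,X_t\rangle)$; crucially the integrated short rate is $\F_t$-adapted but the increment $\int_s^t r_u\du$ depends on the whole path of $X$ on $[s,t]$, so the naive approach stalls. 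The clean way around this is to avoid the bank account altogether: use instead the alternative expression $Z_t = B_t^\star B(0,T_N)/B(t,T_N)$ together with $B(t,T_N)=1/M_t^{\mathpzc U(t)}$ and $B_t^\star/B_s^\star = (B(s,T_N)/B(t,T_N))\cdot(\text{martingale ratio})$ — or, more directly, observe that $Z_t/Z_s$ telescopes so that the $\int r\,\du$ terms cancel against bond-price ratios, leaving
\[
\frac{Z_t}{Z_s}
 = \frac{M_t^{\mathpzc U(t)}}{M_s^{\mathpzc U(s)}}\cdot\frac{B(s,T_N)}{B(t,T_N)}\cdot(\cdots),
\]
which I would reorganise using $B(s,t)=M_s^{\mathpzc U(t)}/M_s^{\mathpzc U(s)}$ (Lemma \ref{lem:ATSM}) to express everything in terms of the $M^u$ martingales.

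Concretely, the key identity I expect to use is that under $\P_N$,
\[
\E_\star\big[\e^{\langle w,X_t\rangle}\big|\F_s\big]
 = \frac{1}{B(s,t)}\,\E_N\!\left[\e^{\langle w,X_t\rangle}\,\frac{M_t^{\mathpzc U(t)}}{M_s^{\mathpzc U(t)}}\,\middle|\,\F_s\right],
\]
because $M^{\mathpzc U(t)}_\cdot/M^{\mathpzc U(t)}_s$ is exactly (a rescaling of) the density of the $T_t$-forward measure $\P$-style numeraire $B(\cdot,t)$ relative to $\P_N$, and under that forward measure Proposition \ref{prop:XistimeinhomunderPxk} already tells us $X$ is affine. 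So the computation reduces to: (i) substitute $M^{\mathpzc U(t)}_t = \exp(\phi_{T_N-t}(\mathpzc U(t))+\langle\psi_{T_N-t}(\mathpzc U(t)),X_t\rangle) = \exp(P_t+\langle Q_t,X_t\rangle)$; (ii) apply the affine property \eqref{eq:affine_property} of $X$ under $\P_N$ to the $\F_s$-conditional expectation of $\exp(\langle w+Q_t,X_t\rangle)$, obtaining $\exp(\phi_{t-s}(w+Q_t)+\langle\psi_{t-s}(w+Q_t),X_s\rangle)$ after shifting by the $\F_s$-measurable pieces and using the semi-flow equations \eqref{eq:flows} to peel off the $[0,s]$ part; (iii) collect the $X_s$-coefficient and the constant, verifying they have the required form with $\psi^\star_{s,t}(w)$ depending affinely on $X_s$. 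This establishes the time-inhomogeneous affine property and identifies $\phi^\star_{s,t},\psi^\star_{s,t}$.

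For the statement about functional characteristics, I would then differentiate. Strong regularity follows because the candidate $\phi^\star,\psi^\star$ are built from $\phi,\psi$ — which are $C^1$ in $u$ on $\mathcal I_T^\circ$ and satisfy the Riccati equations \eqref{Riccati} — composed with $t\mapsto Q_t=\psi_{T_N-t}(\mathpzc U(t))$, which is continuous (indeed absolutely continuous, by joint continuity of $\psi$, the chain rule, and the bounded right-derivatives of $\mathpzc U$ from Definition \ref{def:interpol-fun}). To get $F^\star(t,w)=F(w+Q_t)-F(Q_t)$ and $R^\star(t,w)=R(w+Q_t)-R(Q_t)$, I would compute $\mathcal A^\star_t\,\e^{\langle w,\rx\rangle}$ via the generator formula \eqref{eq:inhom_generator}: the measure change by $\exp(P_t+\langle Q_t,X_t\rangle)$ is an Esscher-type transform, under which the affine generator $\mathcal Af = (F(\cdot)+\langle R(\cdot),x\rangle)\e^{\langle\cdot,x\rangle}$ of $X$ gets its argument shifted by $Q_t$ and re-normalised, yielding precisely the stated differences. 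Equivalently, one differentiates the time-inhomogeneous Riccati equations \eqref{eq:time_inhomgeneous_riccati} for $\phi^\star,\psi^\star$ at $s=t$ and matches. The main obstacle is the bookkeeping in step (ii)–(iii): disentangling which terms are $\F_s$-measurable (and hence move outside the expectation, combining with $B(s,t)^{-1}=M_s^{\mathpzc U(s)}/M_s^{\mathpzc U(t)}$ and the $\phi,\psi$ evaluated at time $s$) from the genuine $t-s$ semi-flow, and correctly invoking the semi-flow identity $\phi_{T_N-s}(\mathpzc U(t)) = \phi_{t-s}(\psi_{T_N-t}(\mathpzc U(t))) + \text{(stuff)}$ — i.e.\ $\psi_{T_N-s}(\mathpzc U(t))=\psi_{t-s}(\psi_{T_N-t}(\mathpzc U(t)))=\psi_{t-s}(Q_t)$ — so that the exponent reorganises cleanly into $\phi^\star_{s,t}(w)+\langle\psi^\star_{s,t}(w),X_s\rangle$. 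Once that algebra is set up, the affine form and the characteristics drop out.
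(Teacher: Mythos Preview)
Your key identity
\[
\E_\star\big[\e^{\langle w,X_t\rangle}\big|\F_s\big]
 = \frac{1}{B(s,t)}\,\E_N\!\left[\e^{\langle w,X_t\rangle}\,\frac{M_t^{\mathpzc U(t)}}{M_s^{\mathpzc U(t)}}\,\middle|\,\F_s\right]
\]
is false: take $w=0$, so the left side equals $1$, while on the right $M_\cdot^{\mathpzc U(t)}$ is a $\P_N$-martingale and the expression collapses to $1/B(s,t)\neq 1$. More structurally, the density ratio is $Z_t/Z_s=\exp\big(\int_s^t r_u\,\du\big)\,M_t^{\mathpzc U(t)}/M_s^{\mathpzc U(s)}$, and the factor $\exp(\int_s^t r_u\,\du)$ is genuinely path-dependent; it does \emph{not} ``cancel against bond-price ratios'' because bond prices at $s$ and $t$ are functions of $X_s$ and $X_t$ alone. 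You correctly spotted the obstruction (``the naive approach stalls'') but then assumed it away. The right-hand side of your identity is actually $B(s,t)^{-1}\E_t[\e^{\langle w,X_t\rangle}|\F_s]$, an expectation under the $t$-forward measure, which is a different object from the spot-measure expectation.

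This is precisely the point where the paper's proof diverges from yours. Rather than trying to eliminate the integrated short rate, the paper keeps it and evaluates the $\P_N$-conditional expectation of $\exp\big(\langle w+Q_t,X_t\rangle+\int_0^t\langle q_u,X_u\rangle\,\du\big)$ by invoking a result on time-integrated affine processes (Theorem~\ref{thm:time-integral}, extending \citet[Thm.~4.10]{KellerRessel08}): the pair $\big(X_t,\int_0^t q_u\circ X_u\,\du\big)$ is itself time-inhomogeneous affine on $\Rp^{2d}$, so its conditional moment generating function is available in closed form. Only after this enlargement does the exponent reorganise into $\phi^\star_{s,t}(w)+\langle\psi^\star_{s,t}(w),X_s\rangle$, and one can then differentiate at $s=t$ to extract $F^\star,R^\star$ as stated. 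Your Esscher-transform remark at the end is in the right spirit and could be developed into an independent derivation of the generator, but as written it is presented as a reading of characteristics off a conditional MGF you have not actually established.
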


\begin{proof}
We will first show that the moment generating function of $X$ has an exponential-affine form under $\P_\star$. 
Starting from the moment generating function of $X$ under $\P_\star$, and using the conditional density process in Lemma \ref{thm:spot_measure_density} and the dynamics of the short rate process in \eqref{eq:short-rate}--\eqref{eq:p-q}, we arrive at 
\begin{align}\label{eq:AB}
\E_{\star,\rx} \Big[ \e^{\scal{w}{X_t}} \big|\F_{s} \Big] 
  &= \E_{N,\rx} \left[\left. \e^{\scal{w}{X_t}}    
      \frac{B_{t}^{\star}\, B(s,T_N)}{B_{s}^{\star}\, B(t,T_N)}
      \right|\F_{s}\right] \nonumber \\ \nonumber
  &= \exp\left( P_t-P_s-\scal{Q_s}{X_s}
      + \int_{s}^{t}p_{u}\du - \int_0^{s}\scal{q_u}{X_u}\du\right)\\
  &\quad\times \E_{N,\rx} \left[\left. \exp\left( \scal{w+Q_t}{X_t}
      + \int_0^t \scal{q_u}{X_u}\du \right)\right|\F_s\right] \nonumber\\
  &=: A \times B.
\end{align}
Theorem 4.10 in \citet{KellerRessel08} provides an elegant way to calculate the functional characteristics of a time integrated affine process. 
This result is proved for $Y_\cdot=\int_0^\cdot X_u \du$ and is extended to $\widetilde{Y}_\cdot=\big(\int_0^\cdot \theta_u^iX_u^i \du\big)_{1\le i\le d}$ for a deterministic, bounded and positive $\theta$ in Theorem \ref{thm:time-integral} of the Appendix. 
Then, we have that the functional characteristics of the joint process $(X_t,\widetilde{Y}_t)$ are provided by
\[
\widetilde{F}(t,w_x,w_y) = F(w_x)
  \quad\text{and}\quad
\widetilde{R}(t,w_x,w_y) = \left( \begin{array}{c}
  R\left(w_x\right)+\theta_t \circ w_{y}\\0
  \end{array}\right).
\]
The definition of the interpolating function together with Lemma \ref{lem:phi_psi_prop}(6) yield that $q$ in \eqref{eq:p-q} is bounded.
Hence, applying Theorem \ref{thm:time-integral} yields that $B$ in \eqref{eq:AB} takes the form
\[
B = \exp\left( \widetilde{\phi}_{t-s}(w+Q_t,\mathbf1) 
  + \left\langle \widetilde{\psi}_{t-s}(w+Q_t,\mathbf1), (X_s,\widetilde{Y}_s) 
\right\rangle \right),
\]
where $\widetilde{\phi}$ and $\widetilde{\psi}$ are the solutions of the generalized Riccati equations defined by $\widetilde{F}$ and $\widetilde{R}$; cf. \eqref{Riccati}. 
The form of $\widetilde{R}$ implies that the components of $\widetilde{\psi}$ corresponding to $\widetilde{Y}$ satisfy $\widetilde{\psi}_{t-s}(w_x,w_{y})_{y}=w_{y}$. 
Hence, we get from \eqref{eq:AB} that
\begin{multline*}
\E_{\star,\rx} \Big[ \e^{\scal{w}{X_t}} \big|\F_{s} \Big] \\
  = \exp\Bigg( P_t - P_s - \scal{Q_s}{X_s} + \int_s^tp_u\du 
    + \widetilde{\phi}_{t-s}(w+Q_t,\mathbf1) 
  + \left\langle \widetilde{\psi}_{t-s}(w+Q_t,\mathbf1)_x,X_s\right\rangle 
    \Bigg).
\end{multline*}

Now, conditioning on $X_s=\rx$ and taking the right-hand derivatives with respect to $s$ at $t=s$, we arrive at the generator of $X$ under $\P_\star$:
\begin{align}\label{eq:Pstar-generator}
\mathcal{A}_t \e^{\left\langle w,x\right\rangle} 
  = \big( F(w+Q_t)-F(Q_t) + \langle R(w+Q_t)-R(Q_t),x\rangle 
    \big)\e^{\langle w,x\rangle} ;
\end{align}
compare with \eqref{eq:inhom_generator}. 
The semigroup of the affine process $X$ under $\P_\star$ is weakly regular in the sense of \citet[Def. 2.3]{Filipovic05}, since the process $X$ is stochastically continuous under $\P_\star$ and the generator exists and is continuous at $w=0$ for all $(t,x)\in[0,T]\times D$. 
Moreover, $X$ is strongly regular affine under $\P_\star$ since the weakly admissible parameters $(\alpha^\star(t),b^\star(t),\beta^\star(t), m^\star(t),\mu^\star(t))$ implied by \eqref{eq:Pstar-generator} are continuous transformations of $\brac{\alpha,b,\beta,m,\mu}$.
\end{proof}

Using the last proposition, we get that the conditional moment generating function of $X$ under $\P_\star$ is given by
\begin{align}\label{eq:Pstar-mgf}
\E_{\star,\rx} \big[ \exp\left\langle w,X_t \right\rangle \big|\F_s \big]
 = \exp\left( \phi_{s,t}^{\star}(w) + \left\langle 
	\psi_{s,t}^{\star}(w),X_{s}\right\rangle \right),
\end{align}
where $\phi_{s,t}^{\star}$ and $\psi_{s,t}^{\star}$ are the solutions of the generalized Riccati equations with functional characteristics $F^{\star}$ and $R^{\star}$; cf. \eqref{eq:time_inhomgeneous_riccati}. 
Since both the instantaneous forward rate and the short rate are time-dependent affine transformations of the driving process $X$, they will inherit many (distributional) properties from $X$. 
In fact, once we have computed the characteristics of the driving process $X$ under the spot measure $\P_\star$, it is easy to see that also the short rate 
$r$ has time-inhomogeneous characteristics, that are affine w.r.t. $X$. 
Indeed, from \eqref{eq:short-rate} and \eqref{eq:Pstar-mgf} we get 
\begin{align}
\E_{\star,\rx} \big[ \exp\left(wr_{t}\right)|\F_{s}\big] 
&= \exp\left( wp_{t} + \phi_{s,t}^{\star}\left(wq_{t}\right)
	+\left\langle \psi_{s,t}^{\star}\left(wq_{t}\right),X_{s} 
	 \right\rangle \right) \nonumber \\ 
&=: \exp\left( \phi_{s,t}^{r}(w) + \left\langle 
	\psi_{s,t}^{r}(w),X_{s}\right\rangle \right).
	\label{eq:short_notation_MGF_r_t}
\end{align}

\subsection{On the choice of the interpolating function}
\label{sec:on-interpolating-function}

The requirements on the interpolating function $\mathpzc U$ are rather weak, such that even a linear interpolation between the $u_k$'s corresponding to the 
maturities $T_k$, $k\in\mathcal K$, can be used. 
However looking at equations \eqref{eq:short-rate}--\eqref{eq:p-q} for the dynamics of the short rate process $r$, we can immediately observe that jumps will occur at fixed times, the maturities $T_k$, if $\mathpzc U$ is not continuously differentiable. 
A more sophisticated, but still arbitrary, choice for an interpolating function are cubic splines, i.e. piecewise polynomials of degree three, which are 
continuously differentiable and thus do not lead to deterministic discontinuities; see Figure \ref{fig:spline-vs-linear} for an illustration.  

The next corollary is an immediate consequence of Lemma \ref{prop:Instantaneous-forward-and-short-rate} and the fact that $X$ is stochastically continuous.

\begin{figure}
\begin{centering}
\parbox{0.49\textwidth}{\includegraphics[width=0.53\textwidth]{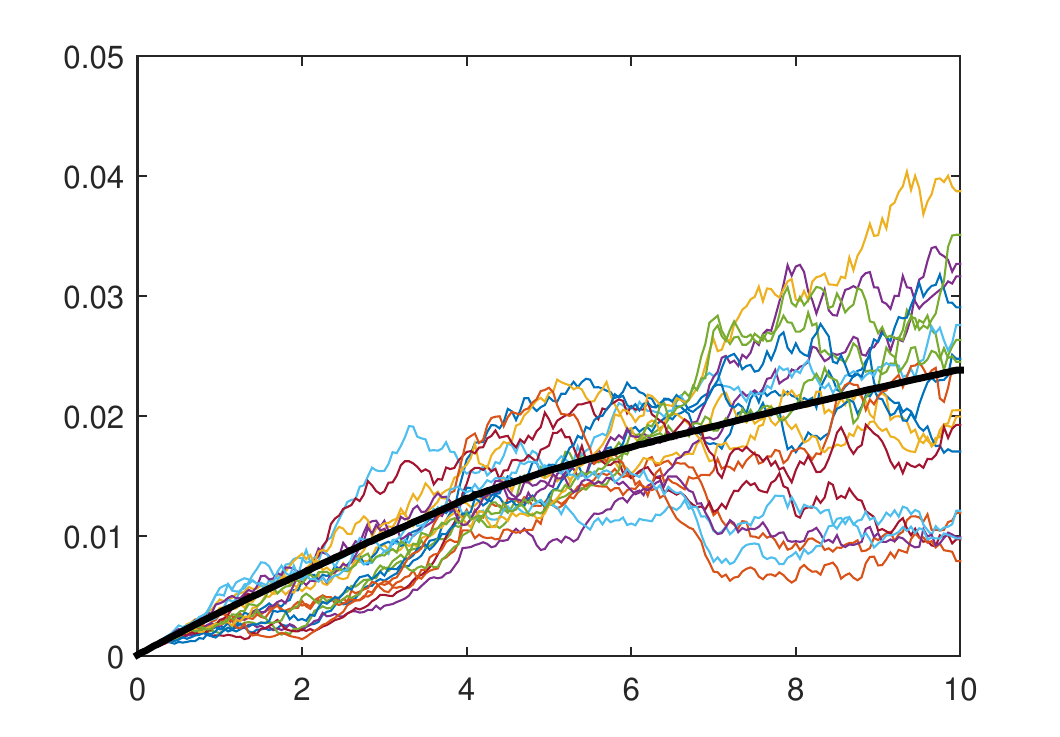}}
\parbox{0.49\textwidth}{\includegraphics[width=0.53\textwidth]{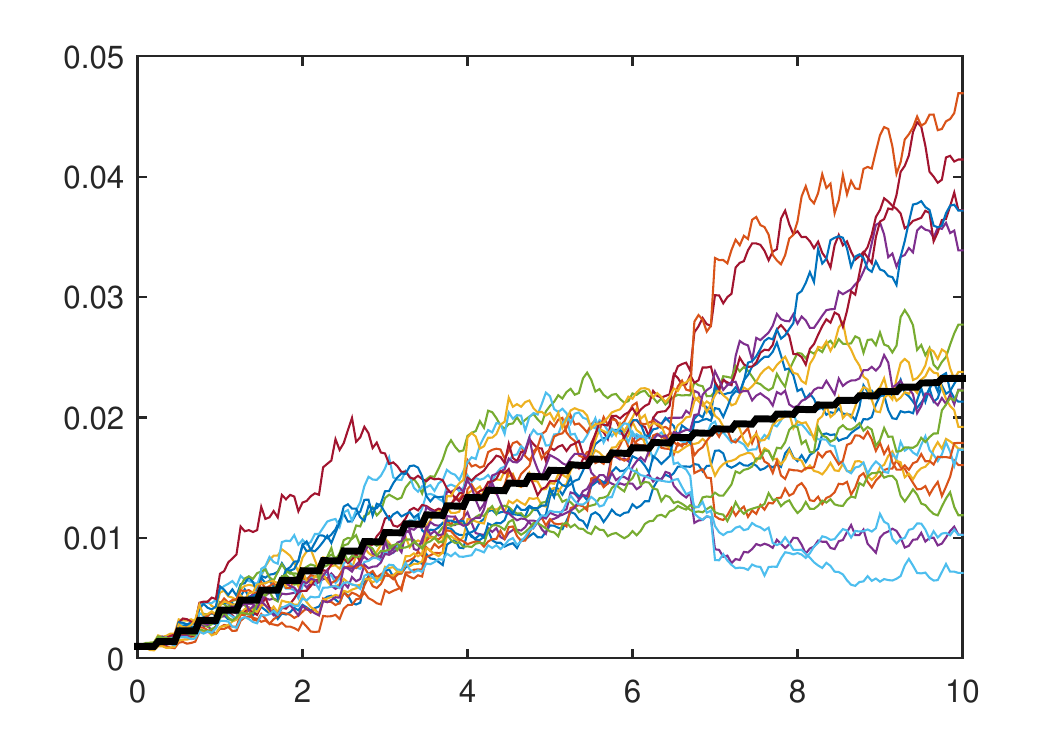}}
\end{centering}
\protect\caption{\label{fig:spline-vs-linear}20 sample paths of the short rate when $X$ is the 1D CIR process, together with the mean as a function of time 
(computed over $10^{5}$ paths). \emph{Left panel:} using a cubic spline nterpolation of $\left(u_k\right)$ to obtain $\mathpzc{U}$. \emph{Right panel: }using a linear interpolation.}
\end{figure}

\begin{corollary}
Let $\mathpzc U$ be a continuously differentiable interpolating function, i.e. $\mathpzc U \in C^1\left(\mathbb{R},\Rp^d\right)$, and consider the continuous 
tenor extension of the \alm \afflm. 
Then the short rate process $r$ is stochastically continuous.
\end{corollary}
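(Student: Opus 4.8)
The plan is to leverage the explicit representation of the short rate derived in Lemma~\ref{prop:Instantaneous-forward-and-short-rate}, namely $r_t = p_t + \langle q_t, X_t\rangle$ with $p_t = p(t,t)$ and $q_t = q(t,t)$ as in \eqref{eq:p-q}, and to show that each of the three ingredients --- the deterministic coefficients $t\mapsto p_t$ and $t\mapsto q_t$, and the process $t\mapsto X_t$ --- varies continuously (in the appropriate sense) in $t$. Since stochastic continuity of $r$ means $r_s \to r_t$ in probability as $s\to t$, it suffices to combine deterministic continuity of the coefficients with stochastic continuity of $X$, which holds by Assumption~($\mathbb A$).

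First I would record that under the hypothesis $\mathpzc U \in C^1(\mathbb R, \Rp^d)$, the right-hand derivative $\ud\mathpzc U(t)/\ud t+$ appearing in \eqref{eq:p-q} is just the ordinary derivative $\mathpzc U'(t)$, and it is a continuous function of $t$. Next I would invoke Lemma~\ref{lem:phi_psi_prop}(6): the partial derivatives $\nabla_u\phi_{s}(u)$ and $\nabla_u\psi_{s}(u)$ exist and are continuous on $[0,T]\times\mathcal I_T^\circ$; together with the joint continuity of $\phi,\psi$ from Lemma~\ref{lem:phi_psi_prop}(5) and the continuity of $t\mapsto \mathpzc U(t)$, this shows that $t\mapsto \nabla_u\phi_{T_N-t}(u)\big|_{u=\mathpzc U(t)}$ and $t\mapsto \nabla_u\psi_{T_N-t}(u)\big|_{u=\mathpzc U(t)}$ are continuous. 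Hence, as compositions and (componentwise) products of continuous functions, both $t\mapsto p_t$ and $t\mapsto q_t$ are continuous on $[0,T_N]$; in particular each $q_t$ stays in a compact subset of $\Rp^d$.

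Then I would fix $t$ and let $s\to t$. Write $r_s - r_t = (p_s - p_t) + \langle q_s - q_t, X_s\rangle + \langle q_t, X_s - X_t\rangle$. The first term tends to $0$ deterministically. For the third term, stochastic continuity of $X$ gives $X_s - X_t \to 0$ in probability, and since $q_t$ is a fixed vector, $\langle q_t, X_s - X_t\rangle \to 0$ in probability. For the middle term, $q_s - q_t \to 0$ deterministically while $X_s$ converges in probability hence is tight (bounded in probability); the product of a deterministic null sequence with a tight sequence tends to $0$ in probability. Adding the three pieces yields $r_s \to r_t$ in probability, i.e. $r$ is stochastically continuous.

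I do not anticipate a serious obstacle here; the corollary is genuinely a direct consequence of the cited results, as the statement already signals. The only point requiring a word of care is the middle term, where one multiplies a vanishing deterministic factor by a random vector: one should not claim this is $L^1$ or a.s. convergence, only convergence in probability, which is all that stochastic continuity requires and which follows from boundedness in probability of $X_s$ for $s$ near $t$ (a consequence of stochastic continuity of $X$). If one prefers to avoid even that mild argument, one can restrict attention to a small neighbourhood of $t$ on which $q$ is bounded by a constant $C$ and bound $|\langle q_s - q_t, X_s\rangle| \le C\sqrt d\,\|X_s\|$ times the (deterministic) modulus $\|q_s - q_t\|/C$, again concluding by tightness of $X_s$.
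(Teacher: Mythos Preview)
Your argument is correct and follows exactly the route the paper indicates: the corollary is stated there as ``an immediate consequence of Lemma~\ref{prop:Instantaneous-forward-and-short-rate} and the fact that $X$ is stochastically continuous,'' with no further proof given. You have simply spelled out the details --- continuity of $t\mapsto p_t$ and $t\mapsto q_t$ via Lemma~\ref{lem:phi_psi_prop}(5)--(6) and the $C^1$ assumption on $\mathpzc U$, combined with stochastic continuity of $X$ --- which is precisely what the paper leaves implicit.
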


However, even when the interpolating function is continuously differentiable there can be sources of undesirable behaviour of the short rate inhereted from 
the sequence $(u_k)$ itself (which is not unique unless $d=1$; cf. Remark \ref{rem:u_k_not_unique}). 
Consider, for example, the following `diagonal' structure for $(u_k)$, which is similar to the one employed by \citet{Grbac_Papapantoleon_Schoenmakers_Skovmand_2014} to model independence between rates of different maturities:
\[
\begin{array}[t]{cccccc}
u_N & = & (0 & \cdots & 0 & 0)\\
u_{N-1} & = & (0 & \reflectbox{$\ddots$}  & 0 & \bar{u}_N)\\
u_{N-2} & = & (0 & \reflectbox{$\ddots$} & \bar{u}_{N-1} & \bar{u}_N)\\
 &  & \vdots & \reflectbox{$\ddots$} & \vdots & \vdots\\
u_1 & = & (\bar{u}_1 & \cdots & \bar{u}_{N-1} & \bar{u}_N)\\
u_0 & = & (\bar{u}_0 & \cdots & \bar{u}_{N-1} & \bar{u}_N)
\end{array}
\]
\noindent with $\bar{u}_i\in\Rp$, for $1\le i\le N$. 
The only paths that can be used to interpolate in this case are the ones connecting the elements $u_k$ and $u_{k+1}$ of the sequence $(u_k)$ via straight lines, otherwise the interpolating function will not be component-wise decreasing. 
Hence, any interpolating function maps onto a non-smooth manifold. 
Then, requiring that the interpolating function is continuously differentiability (in time) will lead to a short rate that drops to zero at every $T_k$, $k\in\mathcal{K}$, since the derivative of the interpolating function will equal zero at each $T_k$; see again \eqref{eq:short-rate} and \eqref{eq:p-q}, and the illustration in Figure \ref{fig:drop-to-zero}.

\begin{figure}
\begin{centering}
\includegraphics[width=0.55\textwidth]{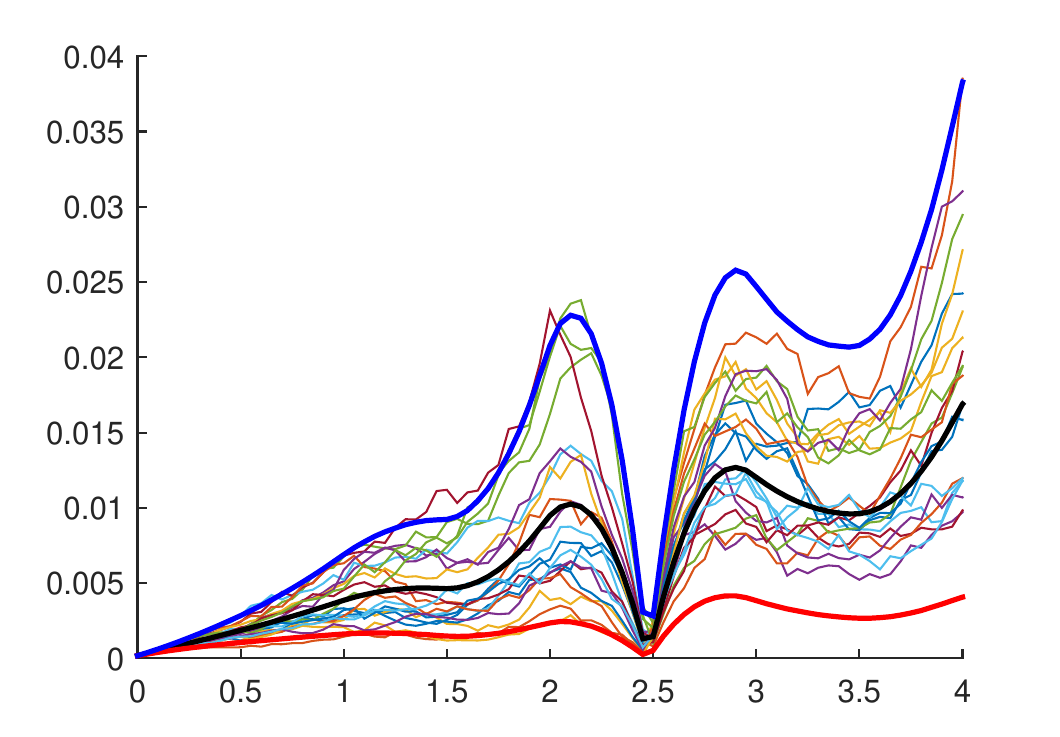}
\end{centering}
\protect\caption{\label{fig:drop-to-zero}20 sample paths of the short rate together with the 2.5/97.5 percentiles induced by a diagonal structure and a continuously differentiable interpolating function.}
\end{figure}

We would like in the sequel to provide conditions such that the short rate resulting from a continuous tenor extension of an affine \lib model exhibits `reasonable' behavior, in the sense that it neither jumps at fixed times, nor drops to zero at each maturity date. 
Moreover, we would like to identify a method for choosing an interpolating function that removes the arbitrariness from this choice. 
A condition for the former is that the sequence $(u_k)_{k\in\mathcal{K}}$ lies on a smooth manifold. 
Regarding the latter, we could require that a continuum of bond prices are fitted as well. 
These two together lead to a uniquely defined, continuously differentiable interpolating function. 

\begin{example}\label{example:interpolating-function}
Let \afflm be an \alm and assume that the sequence $(u_k)_{k\in\mathcal{K}}$ admits an interpolating function $\hat{\mathpzc{U}}$ that maps onto a smooth 
manifold $\mathscr{M}=\set{\hat{\mathpzc U}(t);t\in\left[0,T_N\right]}$. 
Moreover, let $\tilde{f}(0,\cdot)\colon [0,T_N]\rightarrow\Rp$ be an initial forward curve---belonging e.g. to the Nelson--Siegel or Svensson family---that 
is consistent with the initial bond prices, i.e.
\begin{align}\label{eq:initial-bonds}
B(0,T_k)=\exp\left( -\int_0^{T_k} \tilde f(0,s)\ds \right).
\end{align}
Then, we can find an interpolating function $\mathpzc U$ such that the continuous-tenor extended affine LIBOR model fits the given initial forward 
curve $\tilde f$. 
In order to achieve this, the interpolating function $\mathpzc{U}$ should satisfy the following:
\begin{eqnarray}\label{eq:root}
\phi_{T_N}\big(\mathpzc{U(T)}\big) 
  + \big\langle \psi_{T_N}\big(\mathpzc{U(T)}\big),X_0 \big\rangle
  = \int_{\mathpzc T}^{T_N} \tilde f(0,s) \ds
\quad \mbox{ and } \quad 
\mathpzc{U(T)} \in \mathscr{M}
\label{eq:forward-curve-fit}
\end{eqnarray}
for all $\mathpzc T \in [0,T_N]$. 
This equation follows directly from the requirement that \eqref{eq:initial-bonds} holds for all $\mathpzc T \in [0,T_N]$, together with \eqref{eq:Bond(t,tau)_equals_exp}--\eqref{eq:bond-AB} and Remark \ref{rem:interpol-0}. 
Moreover, the dynamics of the instantaneous forward rate are provided by \eqref{eq:short-rate}--\eqref{eq:p-q} for all $\tc\in[0,T_N]$, and satisfy the initial condition $f\brac{0,\tc}=\tilde{f}\brac{0,\tc}$.
\end{example}

The curve fitting problem in \eqref{eq:root} can be solved analogously to the problem of fitting the sequence $(u_k)$ to an initial term structure of bond 
prices; compare with Proposition 6.1 in \citet{KellerResselPapapantoleonTeichmann09} and the corresponding proof. 
The resulting interpolating function is then differentiable with respect to time, as the following result shows.

\begin{proposition}
Let \afflm be a multiple curve affine LIBOR model, assume that $\tilde f(0,\cdot):[0,T_N]\to\Rp$ is continuous and that $(u_k)_{k\in\mathcal{K}}$ allows for an interpolating function that maps onto a $C^1$-manifold $\mathscr{M}\subseteq\mathcal{I}_{T} \cap \R^d_{\geqslant0}$. 
Then, there exists a unique continuously differentiable interpolating function $\mathpzc U \colon [0,T_N]\rightarrow\mathscr{M}$ satisfying \eqref{eq:forward-curve-fit}.
\end{proposition}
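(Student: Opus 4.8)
The plan is to recast the curve-fitting condition \eqref{eq:forward-curve-fit} as the inversion of a single scalar function along the arc $\mathscr{M}$. Put $h(\mathpzc{T}):=\int_{\mathpzc{T}}^{T_N}\tilde f(0,s)\,\ds$ and define $\Gamma\colon\mathscr{M}\to\R$ by $\Gamma(u):=\phi_{T_N}(u)+\langle\psi_{T_N}(u),X_0\rangle=\log M_0^{u}$. Then \eqref{eq:forward-curve-fit} reads $\Gamma(\mathpzc{U}(\mathpzc{T}))=h(\mathpzc{T})$ with $\mathpzc{U}(\mathpzc{T})\in\mathscr{M}$, so it suffices to prove that $\Gamma$ is a $C^1$-diffeomorphism from $\mathscr{M}$ onto $[0,h(0)]$ and then set $\mathpzc{U}:=\Gamma^{-1}\circ h$.

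First I would record the elementary facts. Since $\mathscr{M}$ is the image of the componentwise decreasing interpolating function $\hat{\mathpzc{U}}$ with $\hat{\mathpzc{U}}(T_N)=u_N=0$ and $\hat{\mathpzc{U}}(0)=u_0$, it is a compact $C^1$-arc with endpoints $0$ and $u_0$, every point of which lies (componentwise) between $0$ and $u_0$; we assume $\mathscr{M}\subseteq\mathcal I_T^\circ$, which can always be arranged in the fitting construction (see \eqref{eq:cond-fit} and Remark \ref{rem:u_k_not_unique}). Lemma \ref{lem:phi_psi_prop}(6) then gives that $\Gamma$ is $C^1$; Lemma \ref{lem:phi_psi_prop}(1) gives $\Gamma(0)=0$; and Remark \ref{rem:interpol-0} together with the consistency condition \eqref{eq:initial-bonds} at $k=N$ gives $\Gamma(u_0)=\log M_0^{u_0}=-\log B(0,T_N)=h(0)$. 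As $X$ takes values in $\Rp^d$, $\Gamma$ is order preserving, hence $\Gamma(\mathscr{M})\subseteq[0,h(0)]$; connectedness of $\mathscr{M}$ and the intermediate value theorem give equality. This already yields existence, exactly along the lines of Proposition 6.1 in \citet{KellerResselPapapantoleonTeichmann09} with the straight-line path there replaced by $\mathscr{M}$.

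The heart of the matter is the strict monotonicity of $\Gamma$ along $\mathscr{M}$, which delivers injectivity (hence uniqueness) and the diffeomorphism property in one stroke. Fix a regular $C^1$ parametrization $c\colon[0,1]\to\mathscr{M}$ with $c(0)=0$ and $c(1)=u_0$. Since $\hat{\mathpzc{U}}$ is componentwise monotone, $\mathscr{M}$ is a chain for the coordinatewise order on $\R^d$, so the tangent $c'(s)$ lies in a single closed orthant and, being nonzero by regularity, satisfies (with this orientation) $c'(s)\ge0$ with at least one strictly positive component. On the other hand $\Gamma$ has a strictly positive gradient on $\mathcal I_T^\circ$: differentiating under the expectation, $\partial_{u_i}\Gamma(u)=\E_{\1}\!\big[(X_{T_N})_i\,\e^{\langle u,X_{T_N}\rangle}\big]\big/\E_{\1}\!\big[\e^{\langle u,X_{T_N}\rangle}\big]>0$, because $X_0=\1>0$ forces $(X_{T_N})_i>0$ with positive probability (a coordinate of $X_{T_N}$ that were $\P_\1$-a.s.\ zero would be redundant and could be dropped from the model). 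Hence $\frac{\ud}{\ud s}\Gamma(c(s))=\langle\nabla\Gamma(c(s)),c'(s)\rangle>0$, so $\Gamma\circ c$ is a $C^1$-diffeomorphism of $[0,1]$ onto $[0,h(0)]$, and thus $\Gamma\colon\mathscr{M}\to[0,h(0)]$ is one too. Since $h\in C^1$ with $h'=-\tilde f$ continuous and range $[0,h(0)]$, the map $\mathpzc{U}:=\Gamma^{-1}\circ h$ is $C^1$ by the inverse function theorem; it is componentwise decreasing because $h$ is non-increasing and $\Gamma^{-1}$ is componentwise increasing along $\mathscr{M}$, so it is an interpolating function; and $\mathpzc{U}(T_l)=u_l$ follows from $\Gamma(u_l)=h(T_l)$ (again \eqref{eq:initial-bonds} and Proposition \ref{prop:Fitting-OIS}), $u_l\in\mathscr{M}$, and injectivity of $\Gamma$. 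Uniqueness is then immediate: any solution $\mathpzc{U}'$ of \eqref{eq:forward-curve-fit} has $\Gamma(\mathpzc{U}'(\mathpzc{T}))=h(\mathpzc{T})=\Gamma(\mathpzc{U}(\mathpzc{T}))$ with both values on $\mathscr{M}$, hence $\mathpzc{U}'=\mathpzc{U}$.

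I expect the only genuinely delicate point to be the strict positivity of $\frac{\ud}{\ud s}\Gamma(c(s))$. The order-preserving properties of $\phi,\psi$ give only $\nabla\Gamma\ge0$, and the chain structure of $\mathscr{M}$ only $c'\ge0$, so one has to exclude the possibility that the (a priori possibly degenerate) tangent direction of $\mathscr{M}$ is annihilated by $\nabla\Gamma$. This is precisely where the strictly order-preserving property in Lemma \ref{lem:phi_psi_prop}(4), the initial value $X_0=\1>0$, and the reduction to a non-degenerate model (no coordinate of $X_{T_N}$ being $\P_\1$-a.s.\ zero) come in.
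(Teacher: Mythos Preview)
Your proof is correct and follows essentially the same route as the paper's: both reduce \eqref{eq:forward-curve-fit} to a scalar equation along the one-dimensional manifold $\mathscr{M}$ and invoke the implicit/inverse function theorem, the key nonvanishing-derivative condition being supplied by the strict order-preserving property of $\phi,\psi$ (the paper) or, equivalently, by your explicit formula $\partial_{u_i}\Gamma(u)=\E_{\1}\big[(X_{T_N})_i\,\e^{\langle u,X_{T_N}\rangle}\big]/M_0^u>0$ combined with the observation that the tangent to the chain $\mathscr{M}$ lies in the positive orthant. You actually spell out this last step---and the verification that the resulting $\mathpzc{U}$ is an interpolating function---with more care than the paper, which simply cites Lemma~\ref{lem:phi_psi_prop} and a compactness argument, but the underlying idea is identical.
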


\begin{proof}
This statement is an easy consequence of the implicit function theorem, where the differentiability of $\phi$ and $\psi$ in space and time as well as their 
order preserving property (cf. Lemma \ref{lem:phi_psi_prop}) are used. 
Indeed, the function 
\[
G\brac{t,u} 
  = \phi_{T_N}\brac{u} + \scal{\psi_{T_N}\brac{u}}{X_0} 
  - \int^{T_N}_t\tilde{f}\brac{0, s}\ds
\] 
is continuously differentiable in $t\in[0,T_N]$ and $u\in\mathcal{I}_{T}$. 
Let $\brac{U_\alpha,g_\alpha}_{\alpha\in\mathpzc{A}}$ be an atlas for the $C^1$-manifold $\mathscr{M}$, where $\mathpzc{A}$ is a finite index set, $\brac{U_\alpha}_{\alpha\in\mathpzc{A}}$ is an open covering of $\mathscr{M}$ and $g_\alpha:U_\alpha\rightarrow I_\alpha\subseteq\R$ is a $C^1$-homeomorphism. 
Define $G^\star\colon [0,T_N]\times I_\alpha\rightarrow\R$ by $\brac{t,x}\mapsto G\brac{t,g^{-1}_\alpha\brac{x}}$. 
By the (strict-)order preserving property of $\phi$ and $\psi$ we know that the partial derivative $\frac{\partial}{\partial x}G^\star\brac{t,x}$ is not zero, hence by a compactness argument there exists a unique continuously differentiable function $x\colon[0,T_N]\rightarrow \R$ such that $G^\star\brac{t,x\brac{t}}=0$ for all $t\in[0,T_N]$. 
The interpolating function is then given by $\mathpzc{U}\brac{t}\coloneqq g^{-1}_\alpha(x(t))$.
\end{proof}

\begin{remark}
Figure \ref{fig:spline-vs-linear} reveals another interesting behavior of the short rate implied by an affine LIBOR model. 
In particular, there exists a lower bound for the short rate that is greater than zero. 
Indeed, since the state space of our driving affine process is $\Rp^d$, we have that $r_{t}\geq p_{t}$, which is greater than zero as $\phi$ is strictly order preserving and $u$ is decreasing. 
A similar phenomenon was already observed in the discrete tenor model for the LIBOR rate, compare with \citet[Rem.~6.4]{KellerResselPapapantoleonTeichmann09}. 
\end{remark}

\section{Computation of XVA in affine LIBOR models}
\label{sec:5}

The quoted price of a derivative product in pre-crisis markets was equal to its discounted expected payoff (under a martingale measure), since counterparties 
were considered default-free, there was abundance of liquidity in the markets, and other frictions were also negligible.
In post-crisis markets however these assumptions have been challenged; in particular, counterparty credit risk has emerged as the natural form of default risk, there is shortage of liquidity in financial markets, while other frictions have also gained importance. 
These facts have thus to be factored into the quoted price. 
One way to do that, is to compute first the so-called `clean' price of the derivative, which equals its discounted expected payoff (under a martingale measure), and then add to it several \textit{value adjustments}, collectively abbreviated as XVA, that reflect counterparty credit risk, liquidity costs, etc. 
We refer to \cite{Brigo_Morini_Pallavicini_2013}, \cite{Crepey2012a,Crepey2012b}, \citet{Crepey_Bielecki_2014} and \cite{Bichuch_Capponi_Sturm_2016} among others for more details on XVA.

\subsection{Clean valuation}
\label{sec:Clean-Valuation}

This section reviews basis swaps and provides formulas for their clean price in \alms with multiple curves. 
The clean valuation of caps, swaptions and basis swaptions in these models is extensively studied in \citet{Grbac_Papapantoleon_Schoenmakers_Skovmand_2014}.

The typical example of an interest rate swap is where a floating rate is exchanged for a fixed rate; see, e.g., \citet[\S9.4]{MusielaRutkowski05}. 
The appearance of significant spreads between rates of different tenors has given rise to a new kind of interest rate swap, called basis swap, where two streams of floating payments linked to underlying rates of different tenors are exchanged. 
As an example, in a 3M-6M basis swap linked to the LIBOR, the 3-month LIBOR is paid quarterly and the 6-month LIBOR is received semiannually. 

Let $\mathcal{T}_{p_{1}q_{1}}^{x_{1}} = \left\{T_{p_{1}}^{x_{1}},\dots,T_{q_{1}}^{x_{1}}\right\}$ and $\mathcal{T}_{p_{2}q_{2}}^{x_{2}} = \left\{T_{p_{2}}^{x_{2}},\dots,T_{q_{2}}^{x_{2}}\right\}$ denote two tenor structures, where $T_{p_{1}}^{x_{1}}=T_{p_{2}}^{x_{2}}$, $T_{q_{1}}^{x_{1}}=T_{q_{2}}^{x_{2}}$ and $\mathcal{T}_{p_{2}q_{2}}^{x_{2}}\subset\mathcal{T}_{p_{1}q_{1}}^{x_{1}}$. 
Consider a basis swap that is initiated at $T_{p_{1}}^{x_{1}}=T_{p_2}^{x_2}$, with the first payments due at $T_{p_{1}+1}^{x_{1}}$ and $T_{p_{2}+1}^{x_{2}}$ respectively. 
In order to reflect the possible discrepancy between the floating rates at initiation, the interest rate $L\left(T_{i-1}^{x_{1}},T_{i}^{x_1}\right)$ corresponding to the shorter tenor length $x_{1}$ is replaced by $L\left(T_{i-1}^{x_{1}},T_{i}^{x_1}\right)+S$ for a fixed $S$, which is called the basis swap spread. 
The time-$r$ value of a basis swap with notional amount normalized to $1$, for $0\le r\le T_{p_1}^{x_1}$, is given by
\begin{multline*}
\mathbb{BS}_r\left(S,\mathcal{T}_{p_{1}q_{1}}^{x_{1}},\mathcal{T}_{p_{2}q_{2}}^{
x_ { 2 } } \right)
= \sum_{i=p_{2}+1}^{q_{2}}\delta_{x_{2}}B\left(r,T_{i}^{x_{2}}
\right)\mathds{E}_i^{x_{2}}\left[\left.L\left(T_{i-1}^{x_{2}},T_{i}^{x_{2}}
\right)\right|\mathcal{F}_{r}\right]\\
-\sum_{i=p_1+1}^{q_1}\delta_{x_{1}}B\left(r,T_{i}^{x_{1}}\right)\mathds{E}_i^
{x_{1}}\left[\left.L\left(T_{i-1}^{x_{1}},T_{i}^{x_{1}}
\right)+S\right|\mathcal{F}_{r}\right]\\
=\sum_{i=p_{2}+1}^{q_{2}} \delta_{x_{2}}B(r,T_{i}^{x_{2}})L_{i}^{x_{2
} }(r)-\sum_{i=p_{1}+1}^{q_{1}} \delta_{x_{1}}B(r,T_{i}^{x_{1}}
)\left(L_{i}^{x_{1}}(r)+S\right).
\end{multline*}
The fair basis swap spread $S_r(\mathcal{T}_{p_{1}q_{1}}^{x_{1}},\mathcal{T}_{p_{2}q_{2}}^{x_{2}})$ is then computed such that the value of the swap at inception is zero, i.e. $\mathbb{BS}_r(S,\mathcal{T}_{p_1q_1}^{x_1},\mathcal{T}_{p_2q_2}^{x_2})=0$ for $0\leq r\leq T_{p_{1}}^{x_{1}}$. 
Hence, the fair spread is given by 
\begin{align*}
S_{r}(\mathcal{T}_{p_{1}q_{1}}^{x_{1}},\mathcal{T}_{p_{2}q_{2}}^{x_{2}})  
= \frac{\sum_{i=p_{2+1}}^{q_{2}}\delta_{x_{2}}B(r,T_{i}^{x_{2}})L_{i}^{x_{2}}
(r)-\sum_{i=p_{1}+1}^{q_{1}}\delta_{x_{1}}B(r,T_{i}^{x_{1} 
})L_{i}^{x_{1}}(r)}{\sum_{i=p_{1}+1}^{q_{1}}\delta_{x_{1}} 
B(r,T_{i}^{x_{1}})}.
\end{align*}

\noindent Moreover, the time-$t$ value of the basis swap, for $t\in\left[T_{p_{1}}^{x_{1}},T_{q_{2}}^{x_{2}}\right]$, using \eqref{eq:model-OIS-FRA-rate} and \eqref{eq:bonds_for_all_maturities2}, takes the form:
\begin{align}\label{eq:price_bswap}
\mathbb{BS}_t \left(S_r,\mathcal{T}_{p_1q_1}^{x_1},\mathcal{T}_{p_2q_2}^{x_2}\right) 
&= \frac{1}{M_{t}^{\mathpzc U(t)}} \sum_{i=\lceil t \rceil_2}^{q_2} 
\left(M_{t}^{v_{i-1}^{x_2}} - M_{t}^{u_i^{x_2}}\right) \nonumber \\
& \quad  - \frac{1}{M_{t}^{\mathpzc U(t)}} \sum_{i=\lceil t \rceil_1}^{q_{1}} 
\left(M_{t}^{v_{i-1}^{x_1}} - M_{t}^{u_i^{x_1}} \left( 1- \delta_{x_1} S_r
\right)\right),
\end{align}
where $S_r=S_r\left(\mathcal{T}_{p_1q_1}^{x_1},\mathcal{T}_{p_2q_2}^{x_2}\right)$, for $r\in[0,T_{p_1}^{x_1}]$ being the date of inception, while $\lceil t \rceil_i = \min\left\{ k\in\mathcal{K}^{x_i}: t<T_k^{x_i} \right\}$.

\begin{remark}
Basis swaps are post-crisis financial products, which can only be priced in models accounting for the multiple curve nature of interest rates. 
In a single curve model, the price of a basis swap is zero; cf. \citet[p.~181]{Crepey_Grbac_Nguyen_2011}
\end{remark}

\subsection{XVA equations}

The pricing formulas in the previous subsection reflect valuation in an environment without counterparty credit risk, funding constraints and other market frictions. 
In order to include the latter into the pricing framework, several value adjustments have been introduced: credit and debt valuation adjustment (CVA and DVA), liquidity valuation adjustment (LVA), as well as replacement cost (RC), among others. 
The various valuation adjustments are typically abbreviated by XVA, while we will refer to their sum as the total valuation adjustment (TVA), i.e.
\[
\mbox{TVA}=\mbox{CVA}+\mbox{DVA}+\mbox{LVA}+\mbox{RC};
\]
see also \citet{Crepey_Gerboud_Grbac_Ngor_2012}. 
Our approach to the computation of TVA follows closely the work of \cite{Crepey2012a,Crepey2012b}.

We consider two counterparties, called a bank and an investor in the sequel, that are both defaultable, and denote by $\tau_b$ the default time of the bank, by $\tau_i$ the default time of the investor, while we set $\tau = \tau_b \wedge \tau_i \wedge T$. 
The default intensities of $\tau_b,\tau_i$ and $\tau$ are denoted $\gamma_b, \gamma_i$ and $\gamma$, respectively. 
We also consider the `full model' filtration $\mathbb{G}$, which is given by $\mathbb F$ enlarged by the natural filtrations of the default times $\tau_b$ and $\tau_i$, and assume that the immersion hypothesis holds, that is, every $\mathbb F$-martingale stopped at $\tau$ is a $\mathbb G$-martingale. 

The TVA can be viewed as the price of a dividend paying option on the debt of the bank to the investor, paying off at the first-to-default time $\tau$. 
Here, we have implicitly adopted the point of view of the bank. 
The TVA from the point of view of the investor is similar, but not identical, due to e.g. different funding conditions. 
The effective conclusion of \cite{Crepey2012b} is that the TVA in the setting described above can be computed in a `pre-default' framework, where the default 
risk of the counterparties appears only through the default intensities; see, in particular, Section 3 therein. 
More specifically, the TVA $\Theta$ is the solution of the following BSDE under a martingale measure $\P$:
\begin{align}\label{eq:TVA-bsde}
\Theta_{t} = \E_{t} \left( \int_{t}^{T} g_{s}\left(r_s,P_s,\Theta_{s} 
  \right)\ds\right), \quad t\in[0,T],
\end{align}
where $r$ denotes the short rate process, $P$ the clean price process and $g$ the TVA coefficient. 
The overall price of the contract for the bank, in other words, the cost of the hedge incorporating the various risks, is then given by the diffference between the clean price and the TVA:
\[
\Pi_t = P_t - \Theta_t, \quad t\in[0,T].
\]
The TVA coefficient $g$ has the following form:
\begin{align*}
& \!\!\!\!\!\!\!\!\!\!\!\!\! 
g_t(r_t, P_t, \Theta_t) + r_t\Theta_t \ = \\  
& -\gamma_{t}^{i}\left(1-\rho^{i}\right)\left(Q_t-\Gamma_{t}\right)^{-}
	\tag{CVA}\\
& +\gamma_{t}^{b}\left(1-\rho^{b}\right)\left(Q_t-\Gamma_{t}\right)^{+}
	\tag{DVA}\\
& + b_{t}\Gamma_{t}^{+} - \bar{b}_{t}\Gamma_{t}^{-}
  + \lambda_{t}\left(P_t-\Theta_t-\Gamma_{t}\right)^{+}
  - \tilde\lambda_t \left(P_t-\Theta_t-\Gamma_{t}\right)^{-}
  	\tag{LVA}\\
& +\gamma_{t}\left(P_t-\Theta_t-Q_t\right), \tag{RC}
\end{align*}
where $\tilde\lambda_t = \bar{\lambda}_{t}-\gamma_{t}^{b}(1-\mathfrak{r})$, while each line on the right hand side corresponds to one of the four components 
of the TVA. 
The parameters in the above equation have the following financial interpretation:
\begin{itemize}
\item $\gamma_{t}^{i},\,\gamma_{t}^{b}$ and $\gamma_{t}$ are the default 
      intensities of the investor, the bank and the first to default intensity, 
      respectively.
\item $\rho^{i},\,\rho^{b}$ are the recovery rates of the investor and the bank 
      to each other, and $\mathfrak{r}$ is the recovery rate of the bank to its 
      unsecured funder (which is a third party that jumps in when the banks' 
      internal sources of funding have been depleted; this funder is assumed to 
      be risk free).
\item $Q_t$ is the value of the contract according to some valuation scheme 
      specified in the credit support annex (CSA), which is a common part in an 
      over-the-counter contract.
\item $\Gamma_{t}=\Gamma_{t}^{+}-\Gamma_{t}^{-}$ is the value of the collateral 
      posted by the bank to the investor.
\item $b_{t},\,\bar{b}_{t}$ and $\lambda_{t},\,\bar{\lambda}_{t}$ are the 
      spreads over the risk free rate $r_{t}$ corresponding to the remuneration 
      of collateral and external lending and borrowing (from the unsecured 
      funder).
\end{itemize}
The value of the contract $Q$ and of the collateral $\Gamma$, as well as the funding coefficients $b$ and $\bar{b}$ are specified in the CSA of the contract, 
which is used to mitigate counterparty risk. 
Different CSA specifications will result in different behavior of the TVA; see \citet[Sec.~3]{Crepey_Gerboud_Grbac_Ngor_2012} for more details and also the next 
subsection.

\begin{remark}
The immersion hypothesis implies weak or indirect dependence between the contract and the default times of the involved parties. 
Therefore not every contract can be priced within the pre-default TVA framework. 
As interest rate contracts exhibit weak dependence on the default times, this approach is appropriate for our setting; see also \citet[Rem. 2.3]{Crepey2012b}.
\end{remark}

\subsection{XVA computation in \alms}

We are interested now in computing the value adjustments for interest rate derivatives, and focus on basis swaps as a prime example of a post-crisis product. 
The OIS forward rate and the forward \lib rate for each tenor are modeled according to the \alms with  multiple curves, and the model is calibrated to caplet data; see \citet[\S8]{Grbac_Papapantoleon_Schoenmakers_Skovmand_2014} for details on the calibration of \alms. 
An interpolating function is subsequently chosen and the dynamics of the short rate process are derived. 
Afterwords, the computation of the value adjustments is a straightforward application of the TVA BSDE in \eqref{eq:TVA-bsde}.

This methodology allows us to compute option prices and value adjustments consistently since we only have to calibrate the discrete-tenor \alm, while the dynamics of the short rate process, which is essential in the computation of the TVA, follows from the interpolation. 
In particular, we do not need to introduce and calibrate (or estimate) an `exogenous' model for the short rate, as is done in other approaches. 
The interpolating function plays thus a crucial role in our methodology, since this is the only `free' ingredient once the \alm has been calibrated. 
At the same time, it introduces an element of model risk, through the different possible choices of interpolating functions. 
In the sequel, we are going thus to examine the impact of different interpolating functions on the value adjustments. 

\begin{figure}
\centering
\includegraphics[width=0.8\textwidth]{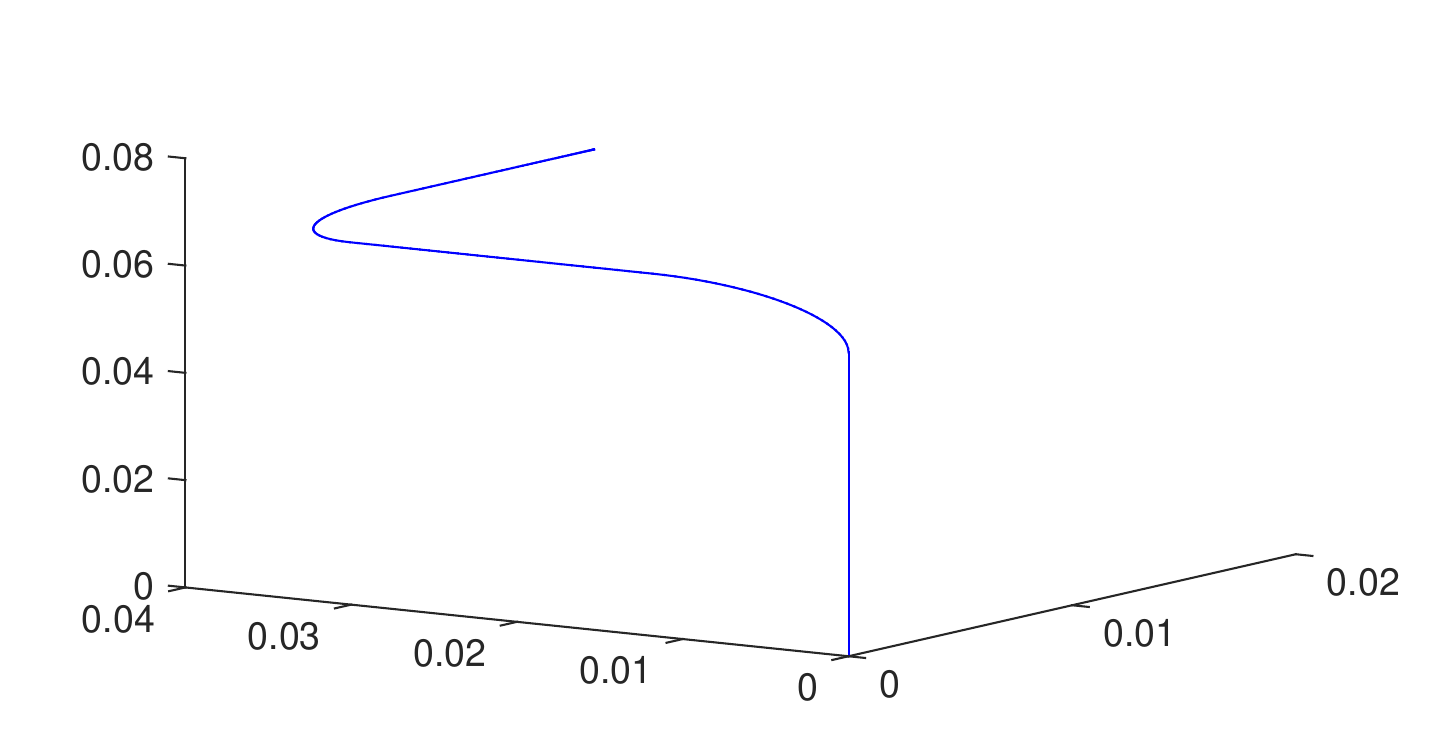}
\label{fig:smooth-manifold}
\caption{The smooth manifold used for fitting the sequences $\brac{u_k}$ and 
$\brac{v_k}$.}
\end{figure}

The data we use for our numerical experiments correspond to the EUR market on 27 May 2013 and were collected from Bloomberg; see also \citet[\S8.4]{Grbac_Papapantoleon_Schoenmakers_Skovmand_2014} for more details. 
The \alm with multiple curves was calibrated to caplet data on a 10 year horizon, where the tenor lengths were 3 and 6 months. 
The driving affine process consists of three independent CIR processes. 
The sequences $(u_k)$ and $(v_k)$ were constructed such that they lie on a smooth manifold on $\mathcal I_T \cap \R^3_{\geqslant0}$; see Figure \ref{fig:smooth-manifold}. 
In particular, $u_k$ lies on straight lines for $k\in\set{0,\dots,k_1}\cup\set{k_2,\dots,k_3}\cup\set{k_4,\dots,N}$ and on elliptical segments for $k\in\set{k_1+1,\dots,k_2-1}\cup\set{k_3+1,\dots,k_4-1}$. 
The sequence $u$ thus looks as follows:
\[
\begin{array}[t]{ccccc}
u_{N-1} & = & ( 0 & 0 & \bar{u}_{N-1} ) \\
		& \vdots  &   &     &           \\
u_{k_4} & = & (0  & 0 & \bar{u}_{k_4} \ \ ) \\
u_{k_4-1} & = & (0  & \tilde u_{k_4-1} & \bar{u}_{k_4-1} ) \\
u_{k_4-2} & = & (0  & \tilde u_{k_4-2} & \bar{u}_{k_4-2} ) \\
		& \vdots  &   &     &           \\
\end{array}
\]
where $\bar u_j, \tilde u_j\in\Rp$ and satisfy $\bar u_j\ge\bar u_{j+1}$ and $\tilde u_j\ge\tilde u_{j+1}$ for all relevant $j\in\mathcal K$; see \eqref{eq:def-ineq-u} again. 
The structure of the sequence $v^x$ for each tenor $x$ is analogous. 
In other words, short term forward \lib rates are driven by all three components of the driving process $X$, medium term rates by two components, while long term rates are only driven by the last component of $X$. 
Once the manifolds have been constructed, the sequences $(u_k)$ and $(v_k)$ were obtained by fitting the model to OIS and EURIBOR data from the same date. 
(Note that in this example we have $N=40$ and we chose $k_1=9, k_2=16, k_3=21$ and $k_4=28$.)

In order to illustrate the effect of different interpolating functions on the value adjustments, we consider three different specifications for the interpolating function $\mathpzc U$:
\begin{itemize}
\item[(IF$_1$):] interpolation by fitting an entire forward curve (see Example \ref{example:interpolating-function});
\item[(IF$_2$):] linear interpolation between the $u_k$'s; 
\item[(IF$_3$):] spline interpolation on sectors where all but one component of the vector $u_k$ are constant in $k$, and linear interpolation in between       				 these sectors (i.e. when the $u_k$'s lie on curved segments of the manifold).	
\end{itemize}

Let us now turn our attention to the computation of value adjustments. 
We consider a 3M-6M basis swap on the LIBOR, with inception at $t=0$ and maturity in 10 years. 
We follow \citet{Crepey_Gerboud_Grbac_Ngor_2012} and consider five different CSA specifications, provided by
\begin{equation*}
\begin{array}{lcclccl}
\hspace{-4.75em} (\text{CSA}_1):
&&&\left(\mathfrak{r},\rho^b,\rho^{i}\right)=\left(0.4,0.4,0.4\right), & Q=P,    && \Gamma=0, \\
\hspace{-4.75em} (\text{CSA}_2):
&&&\left(\mathfrak{r},\rho^b,\rho^{i}\right)=\left(1,0.4,0.4\right),   & Q=P,    && \Gamma=0, \\
\hspace{-4.75em} (\text{CSA}_3):
&&&\left(\mathfrak{r},\rho^b,\rho^{i}\right)=\left(1,1,0.4\right),     & Q=P,    && \Gamma=0, \\
\hspace{-4.75em} (\text{CSA}_4):
&&&\left(\mathfrak{r},\rho^b,\rho^{i}\right)=\left(1,1,0.4\right),     & Q=\Pi,  && \Gamma=0, \\
\hspace{-4.75em} (\text{CSA}_5):
&&&\left(\mathfrak{r},\rho^b,\rho^{i}\right)=\left(1,0.4,0.4\right),   & Q=P,    && \Gamma=Q=P,
\end{array}
\end{equation*}
while the default intensities and spreads equal
\[
\gamma^b=5\%, \ \gamma^i=7\%, \ \gamma=10\%, \ b=\bar{b}=\lambda=1.5\%
\quad \text{and} \quad 
\bar{\lambda}=4.5\%.
\]
The first three CSA specifications correspond to a `clean' recovery scheme without collateralization, since the value of the contract $Q$ equals the clean price and there is no collateral posted. 
The fourth specification corresponds to a `pre-default' recovery scheme without collateralization, while the last one corresponds to a fully collateralized contract. 
Moreover, the first specification yields a linear BSDE in the TVA $\Theta$, which allows to use (forward) Monte Carlo simulations for the computation of the TVA.	

The price $P_t$ of the basis swap is provided by \eqref{eq:price_bswap} for each $t\in[T_{p_1}^{x_1},T_{q_2}^{x_2}]$, and we can observe that $P_t$ is a deterministic transformation of $X_t$. 
Moreover, the short rate $r_t$ is a deterministic, affine, transformation of $X_t$; cf. \eqref{eq:short-rate}. 
Therefore, the TVA coefficient $g_t(r_t,P_t,\Theta_t)$ is also a deterministic transformation of $X_t$, and we can define a deterministic function $\hat g$ such that
\[
\hat g(t,X_t,\Theta_t) :=  g_t(r_t,P_t,\Theta_t).
\]
In other words, the TVA BSDE is Markovian in this case, and the TVA is also provided by the solution of a semi-linear PDE. 
In order to compute the TVA for the basis swap numerically, we worked under the spot martingale measure $\P_\star$, using a space grid consisting of $10^5$ paths and a time grid with $n=200$ steps of step size $h$. 
We applied a backwards regression on the space-time grid, i.e.
\[
\Theta_{t_l}^{n}
 = \mathds{E}_\star \left[\left. \Theta_{t_{l+1}}^{n} 
 + h\hat{g}\left(t_{l+1},X_{t_{l+1}}^{n},\Theta_{t_{l+1}}^{n}\right) 
  \right| X_{t_l}^{n} \right]
 \quad \text{ and }\quad \Theta_{t_{n}}^{n}=0,
\] 
and approximated the conditional expectation using an $m$-nearest neighbors estimator with $m=3$. 
This choice turned out to be optimal when compared to (forward) Monte Carlo simulations in the case of a linear TVA coefficient.

\begin{figure}[ht]
\centering\hspace{-1cm}
\begin{minipage}[b]{0.31\linewidth}
\centering\title{\qquad (IF$_1$)}
\includegraphics[width=1.15\textwidth]{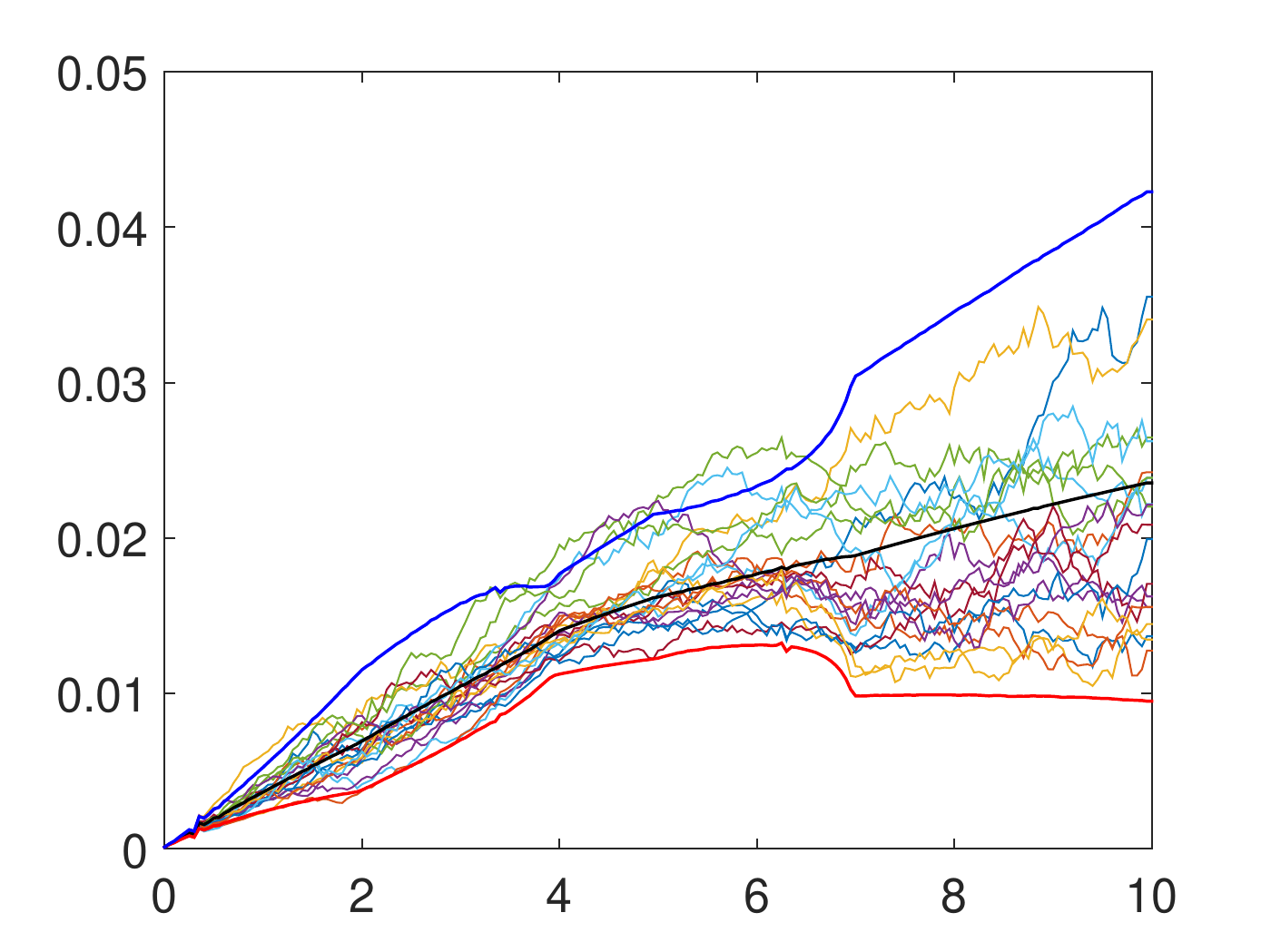}
\includegraphics[width=1.15\textwidth]{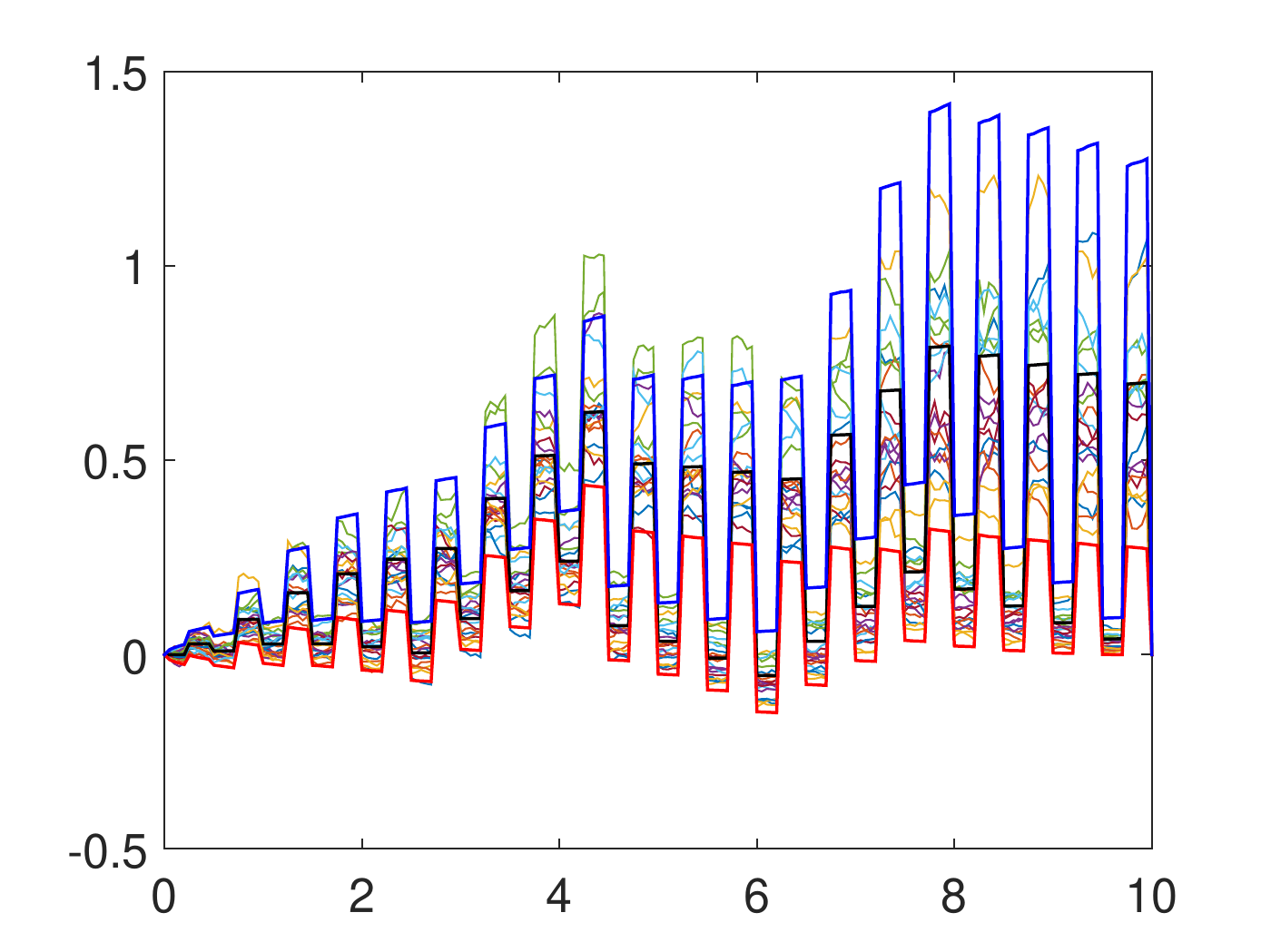}
\end{minipage}
\,
\begin{minipage}[b]{0.31\linewidth}
\centering\title{\qquad (IF$_2$)}
\includegraphics[width=1.15\textwidth]{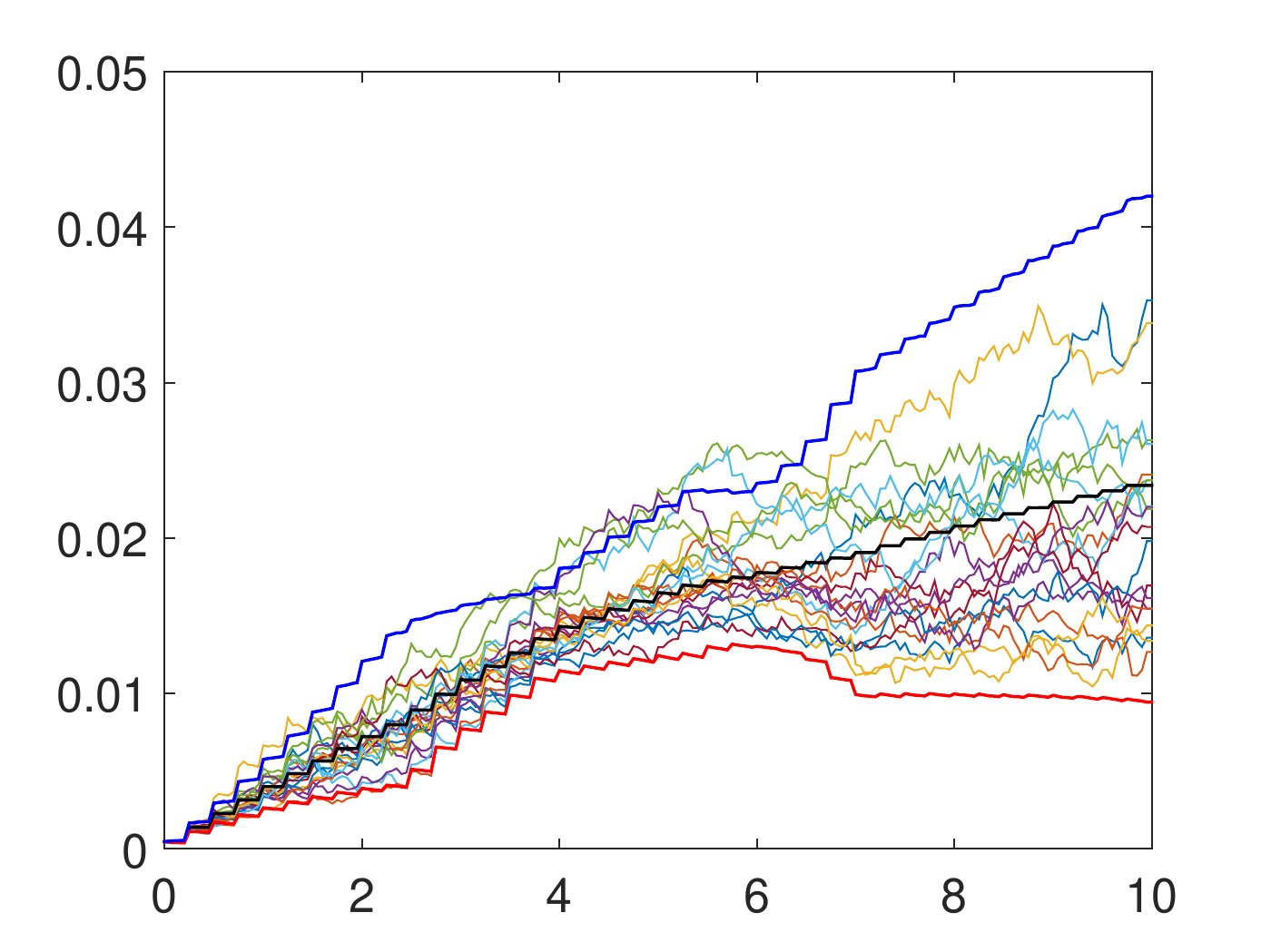}
\includegraphics[width=1.15\textwidth]{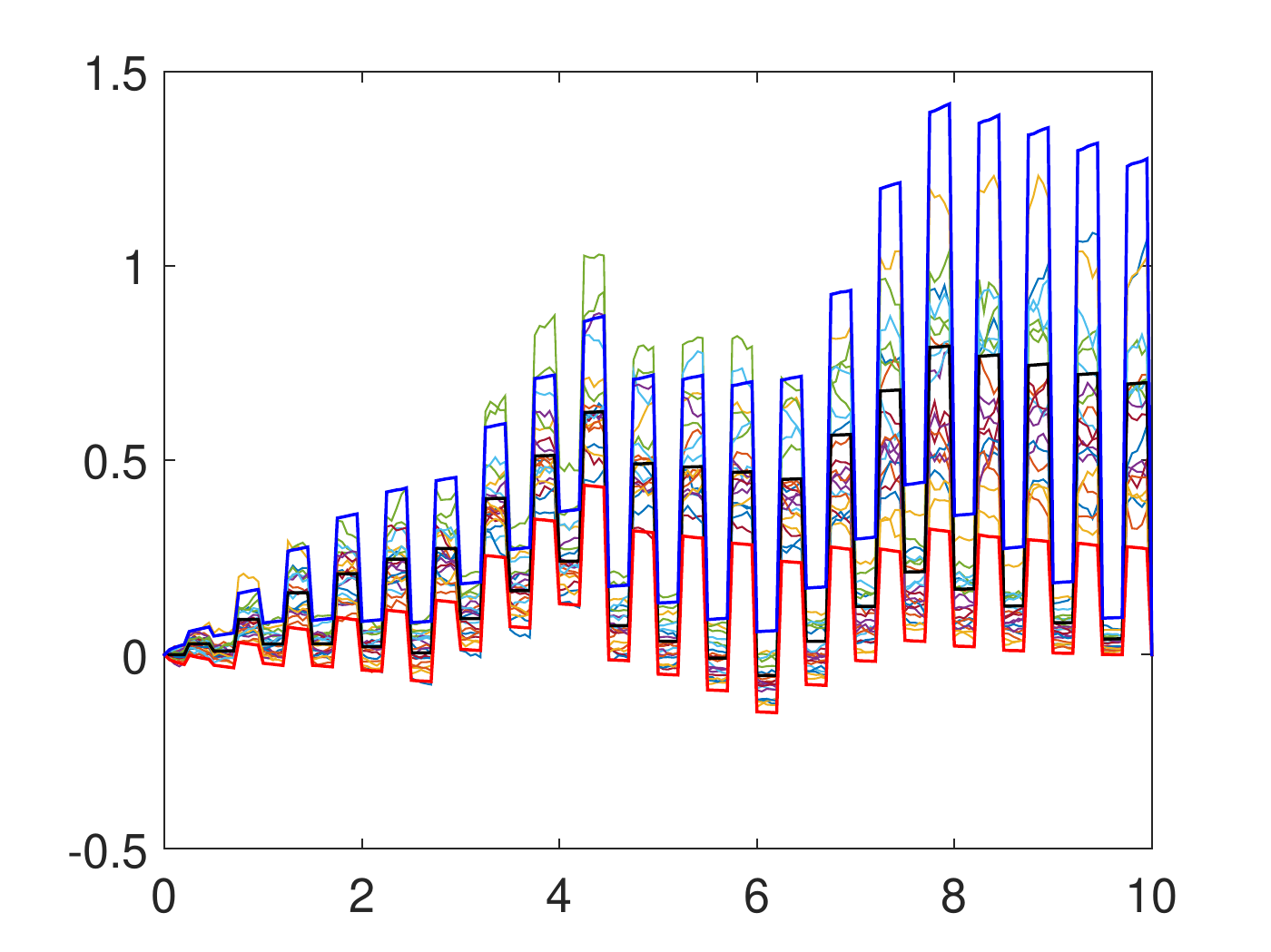}
\end{minipage}
\,
\begin{minipage}[b]{0.31\linewidth}
\centering\title{\qquad (IF$_3$)}
\includegraphics[width=1.15\textwidth]{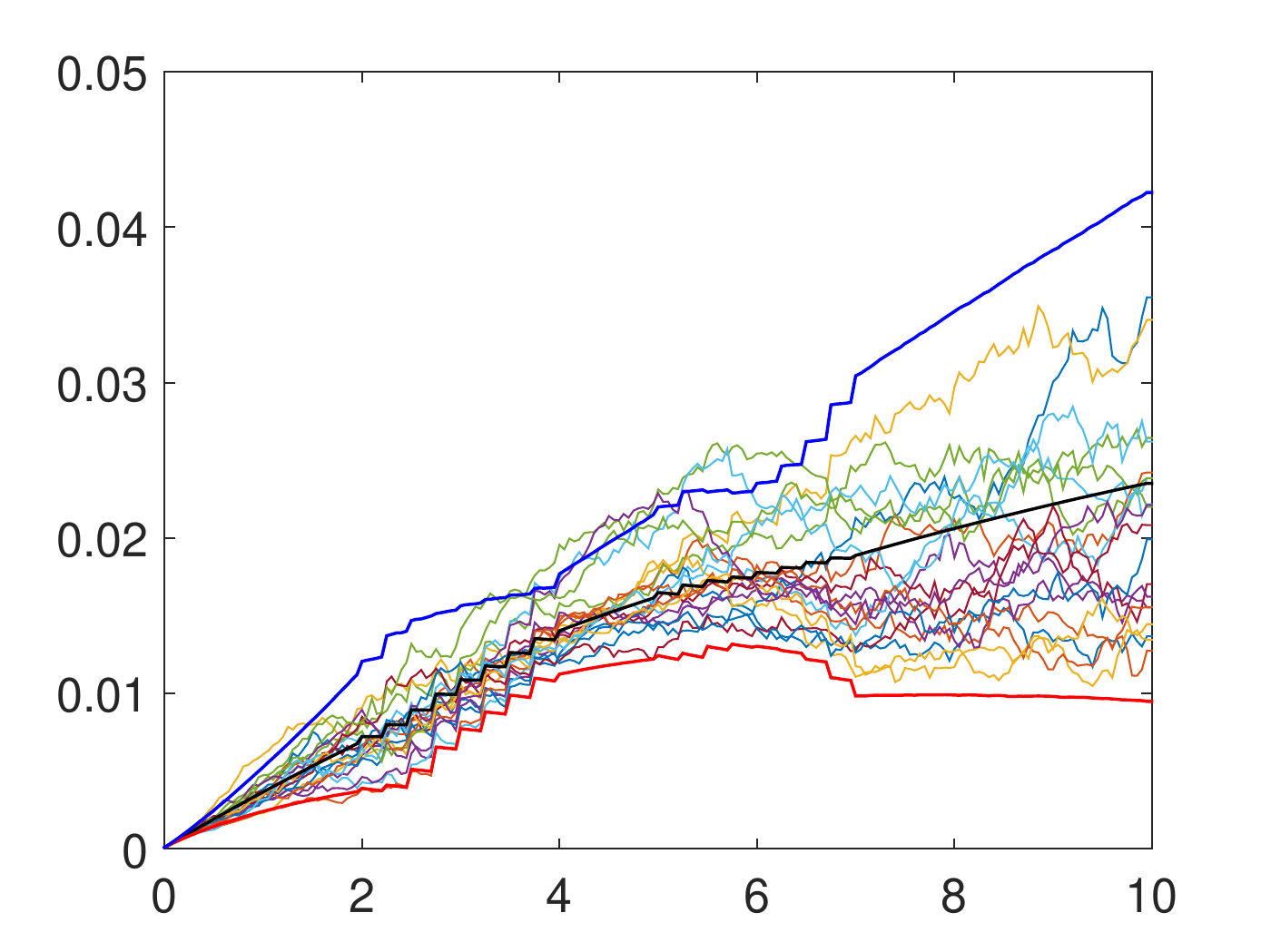}
\includegraphics[width=1.15\textwidth]{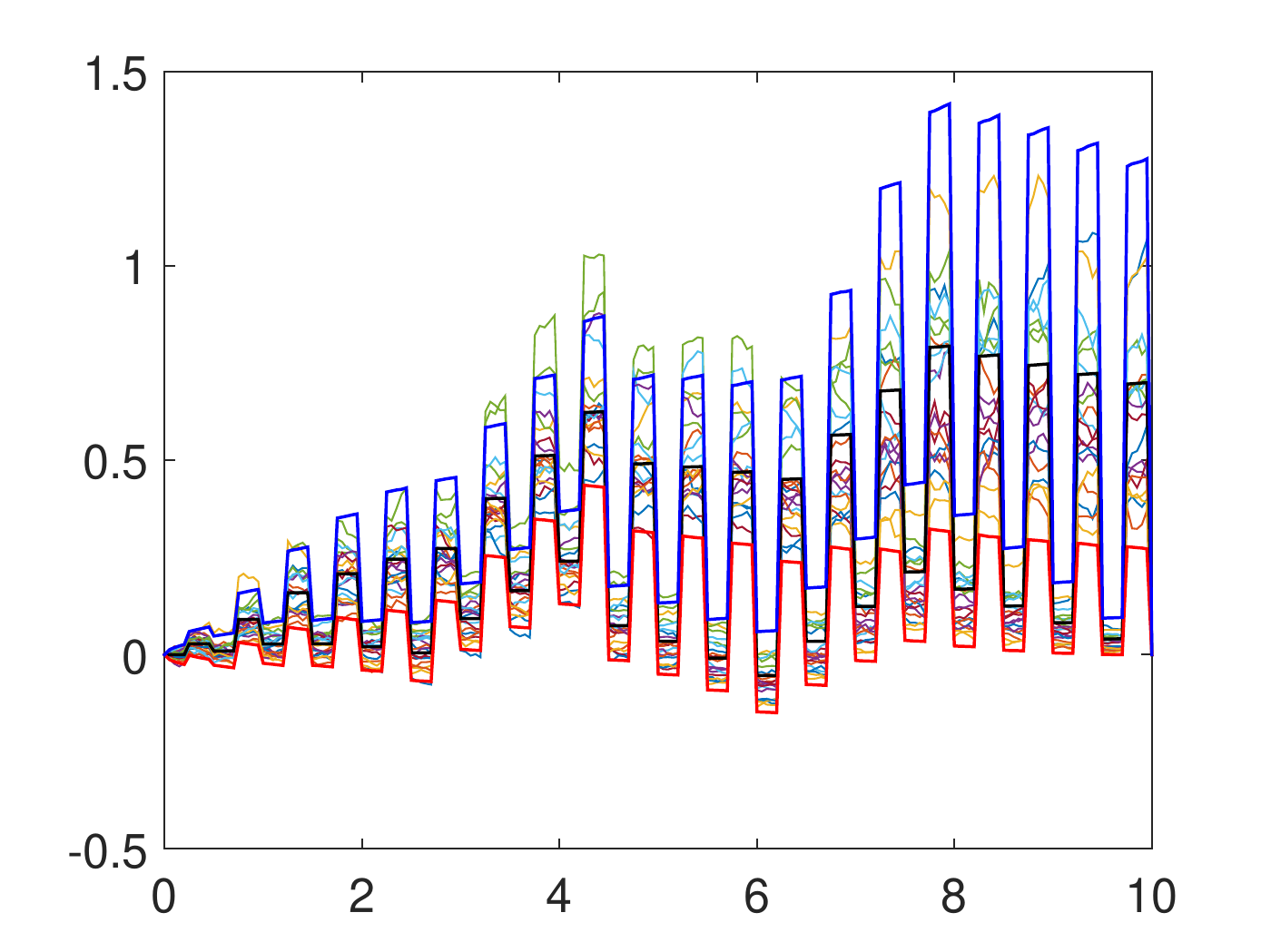}
\end{minipage}
\caption{20 sample paths of the short rate (top panels) and the price process 
of a basis swap (bottom panels) for each interpolating function along with the 
mean (black line) and 97.5\% and 2.5\% percentiles computed over $10^5$ 
realizations.}
\label{fig:r-P-paths}
\end{figure}

\begin{figure}[ht]
\centering\hspace{-1cm}
\begin{minipage}[b]{0.31\linewidth}
\centering\title{\qquad (IF$_1$) vs. (IF$_2$)}
\includegraphics[width=1.15\textwidth]{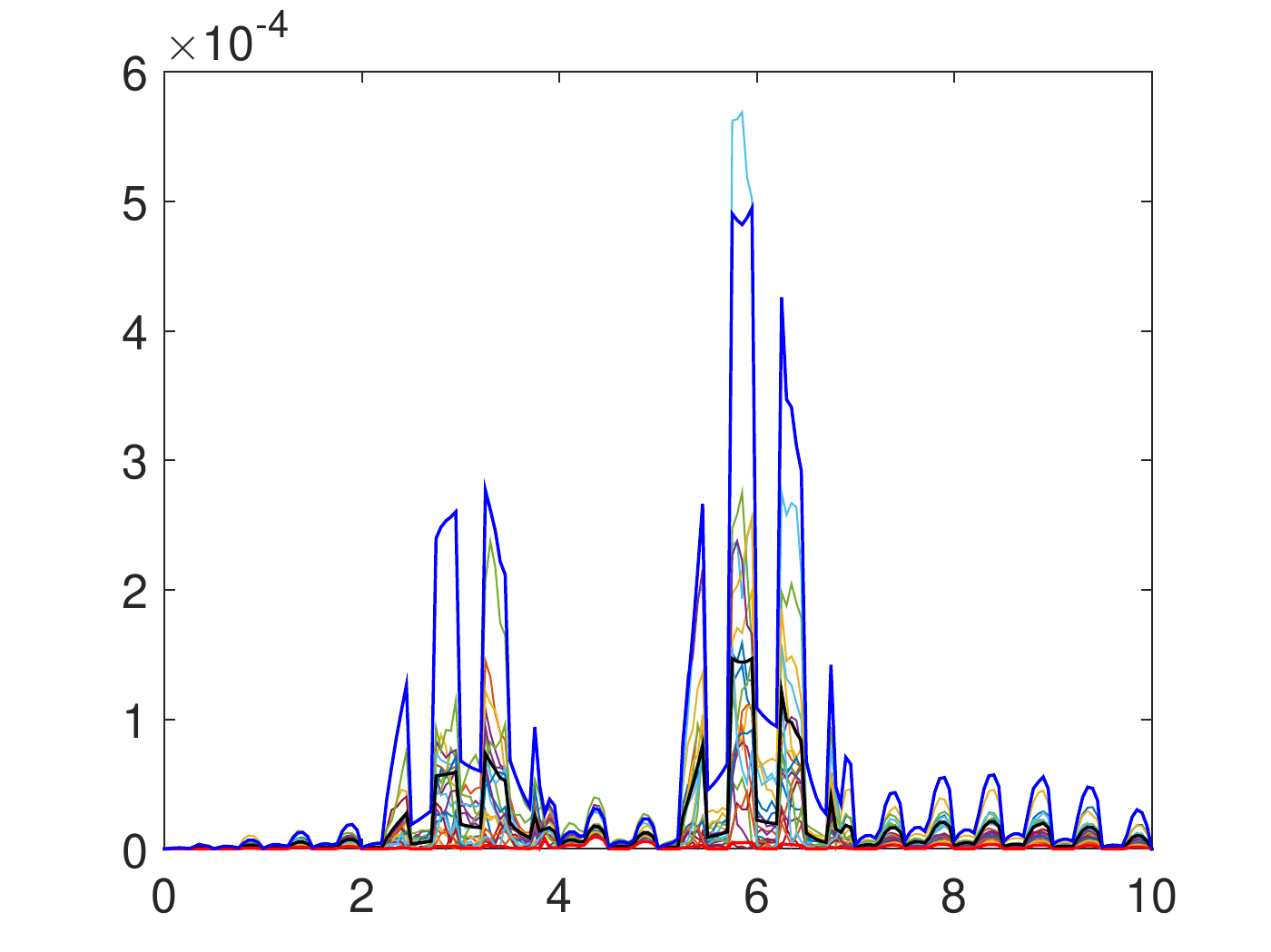}
\end{minipage}
\,
\begin{minipage}[b]{0.31\linewidth}
\centering\title{\qquad (IF$_1$) vs. (IF$_3$)}
\includegraphics[width=1.15\textwidth]{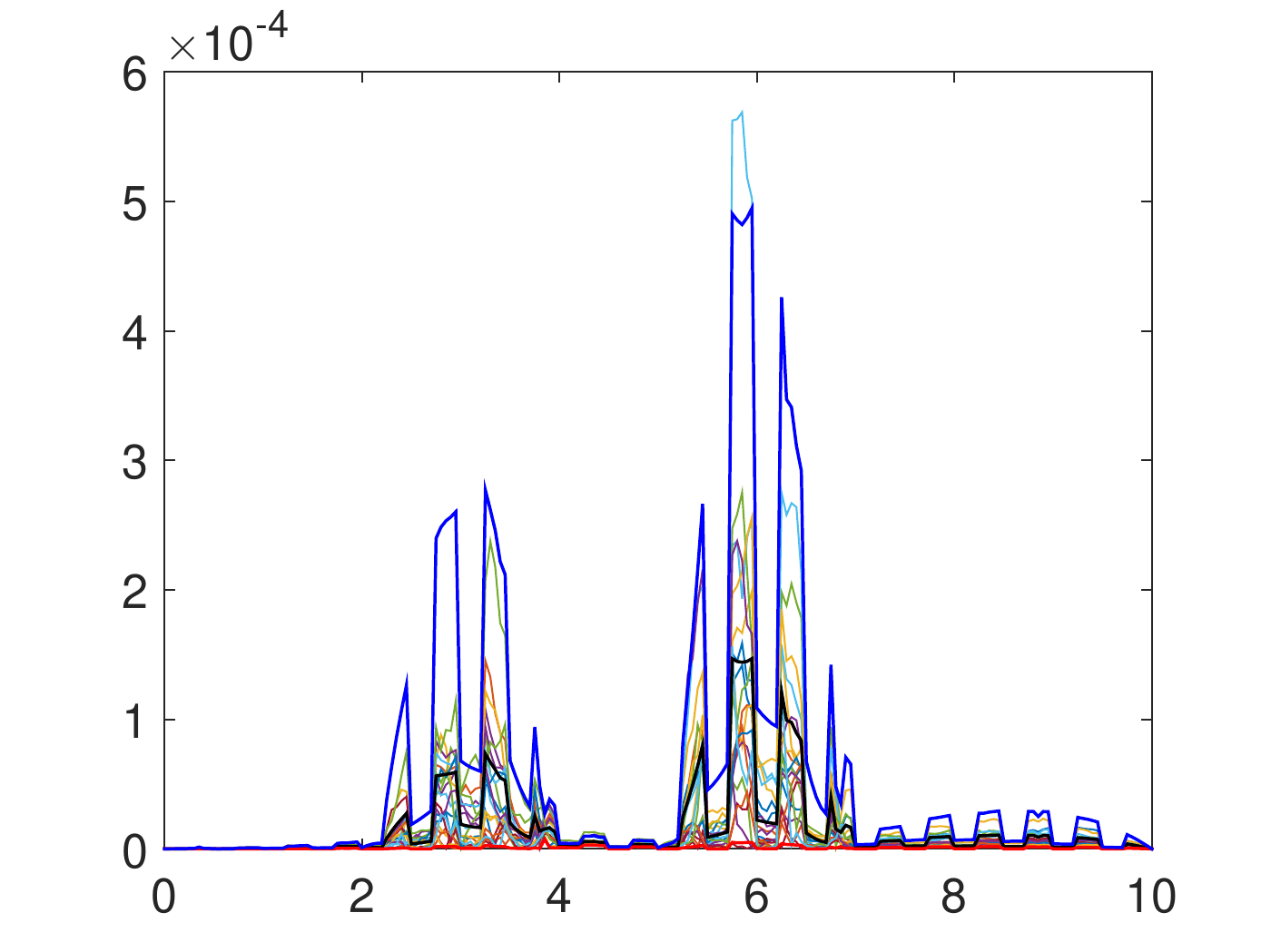}
\end{minipage}
\,
\begin{minipage}[b]{0.31\linewidth}
\centering\title{\qquad (IF$_2$) vs. (IF$_3$)}
\includegraphics[width=1.15\textwidth]{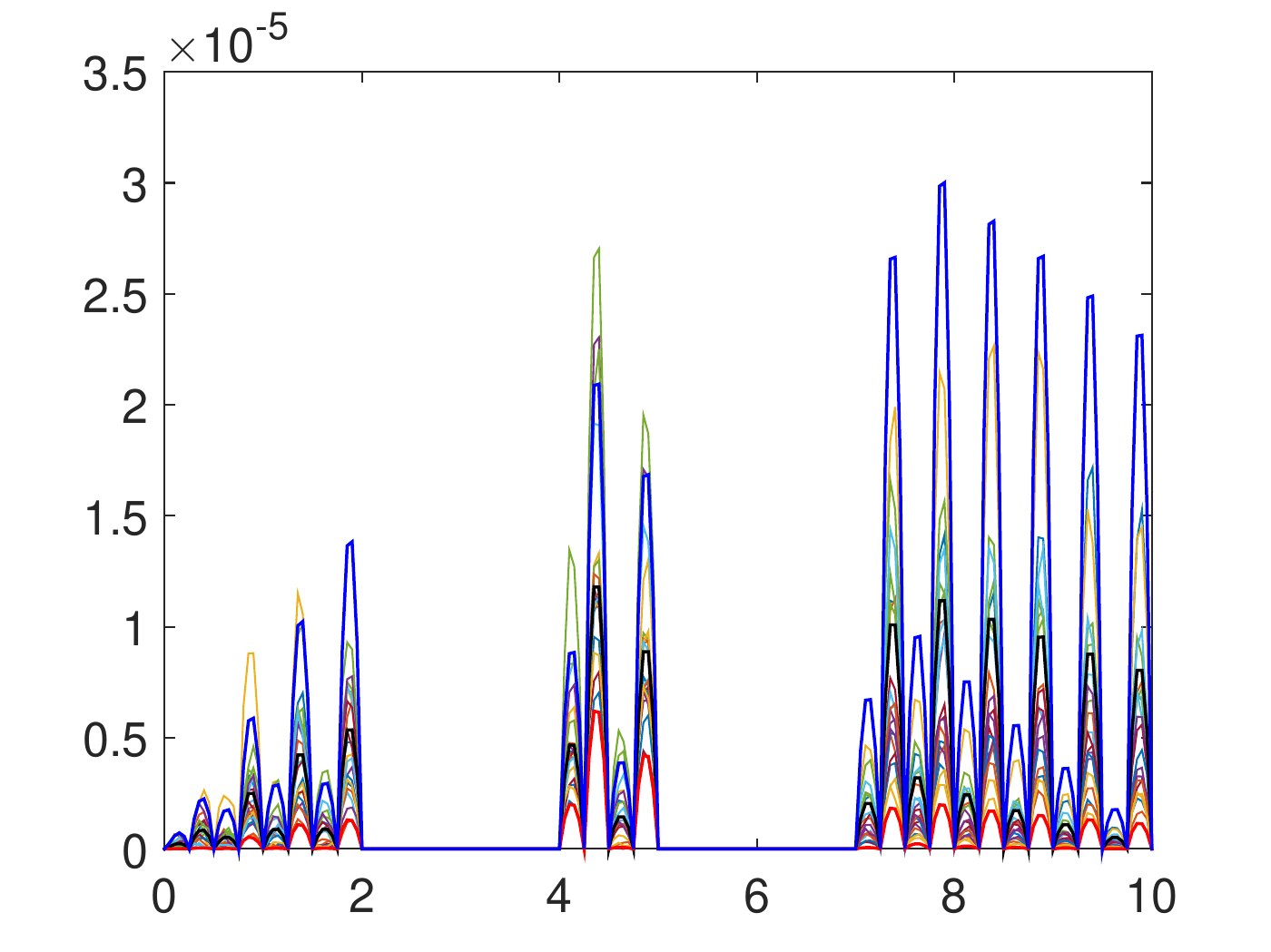}
\end{minipage}
\caption{Absolute difference between the basis swap price processes for 
different interpolating functions.}
\label{fig:r-P-differences}
\end{figure}

\begin{figure}[ht]
\centering\hspace{-1cm}
\begin{minipage}[b]{0.245\linewidth}
\includegraphics[width=1.15\textwidth]{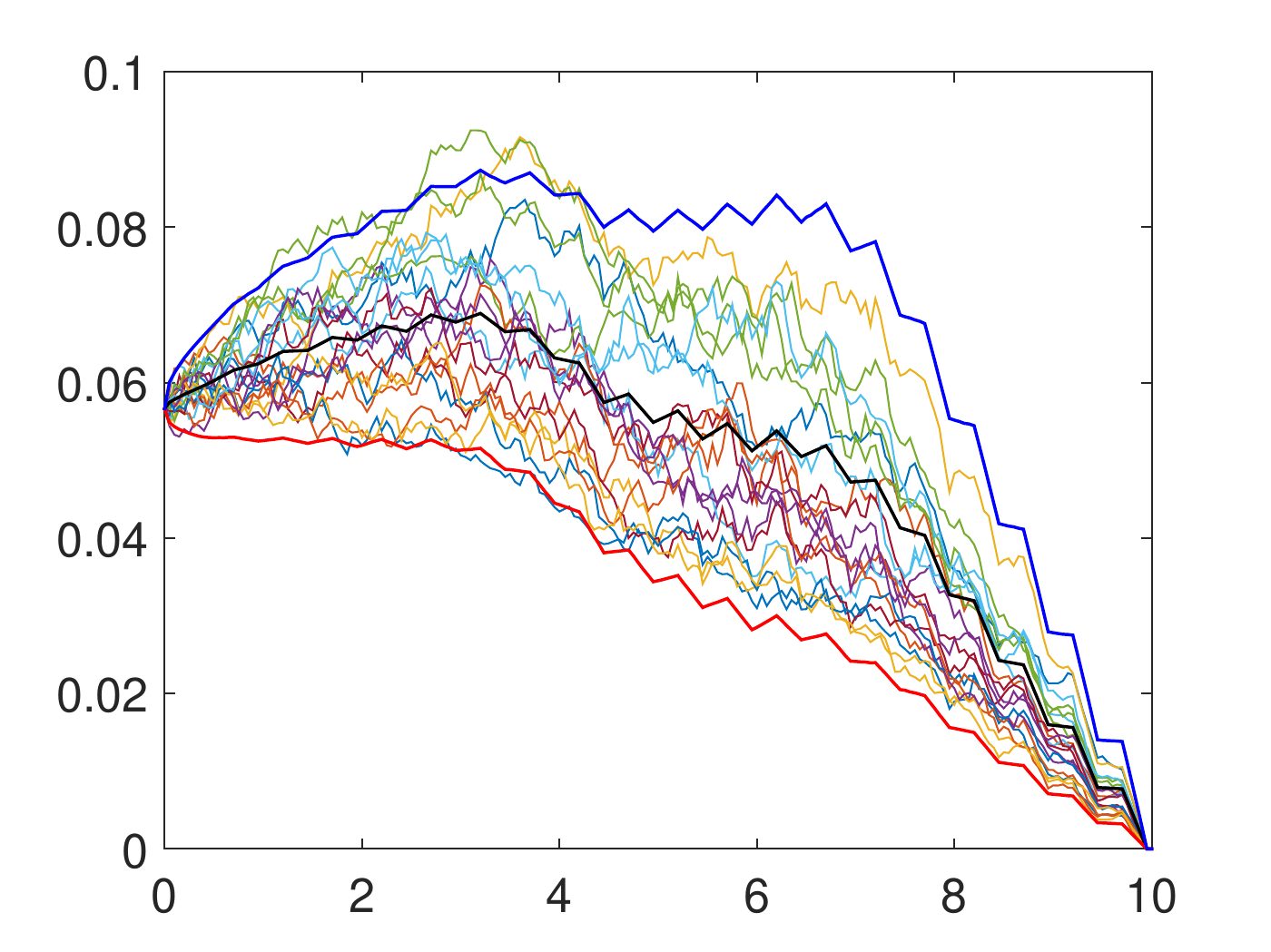}
\includegraphics[width=1.15\textwidth]{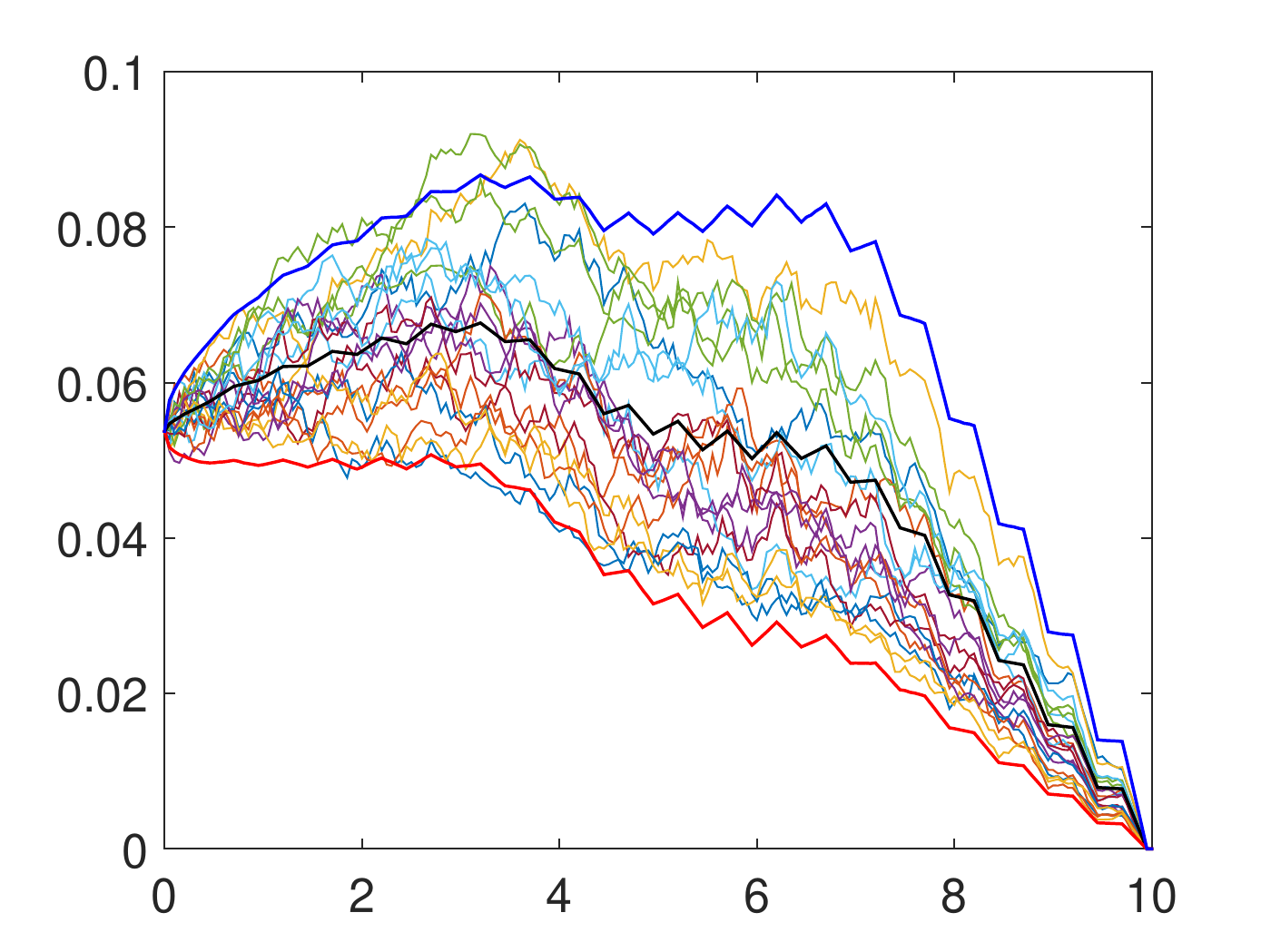}
\includegraphics[width=1.15\textwidth]{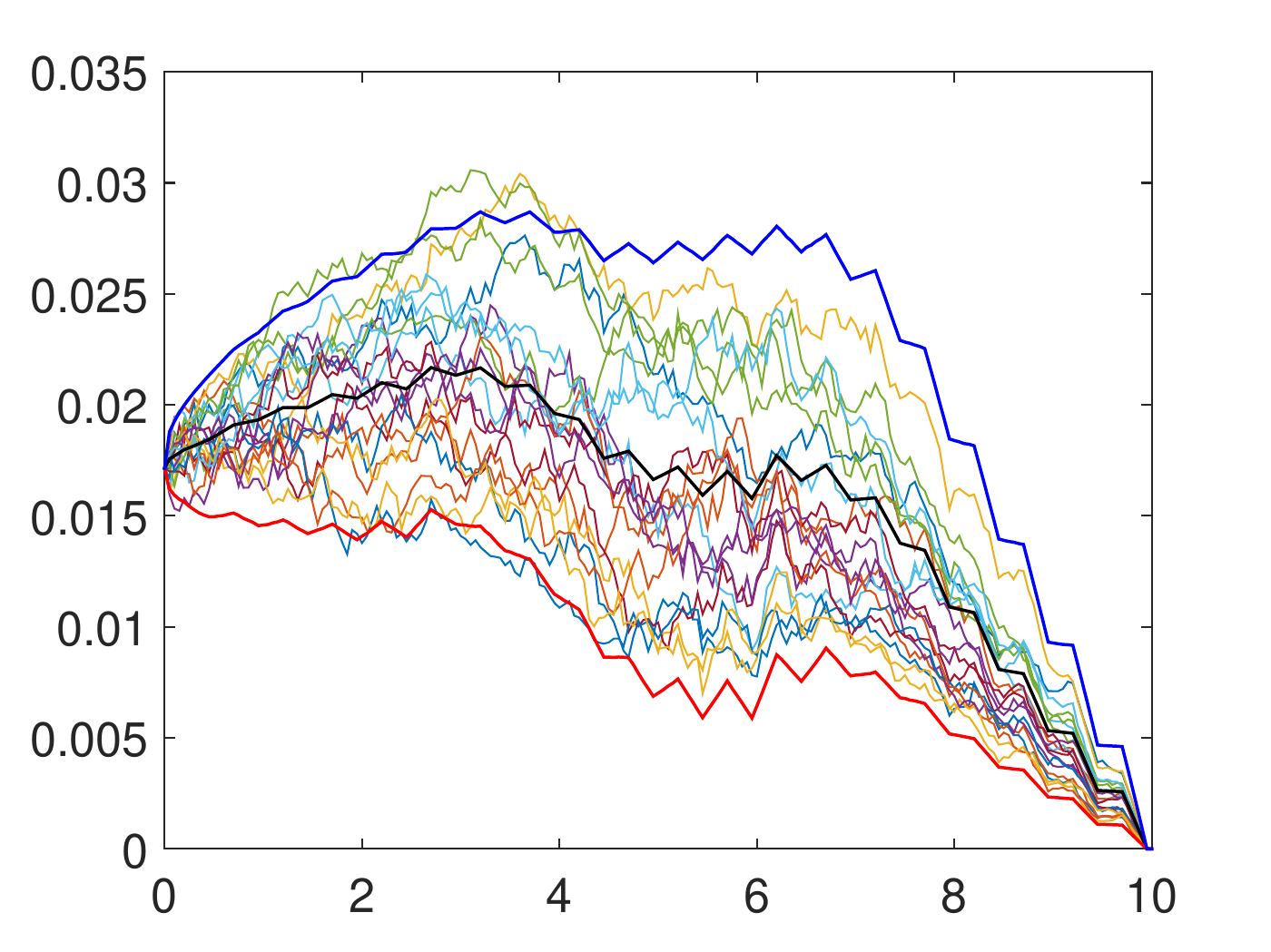}
\includegraphics[width=1.15\textwidth]{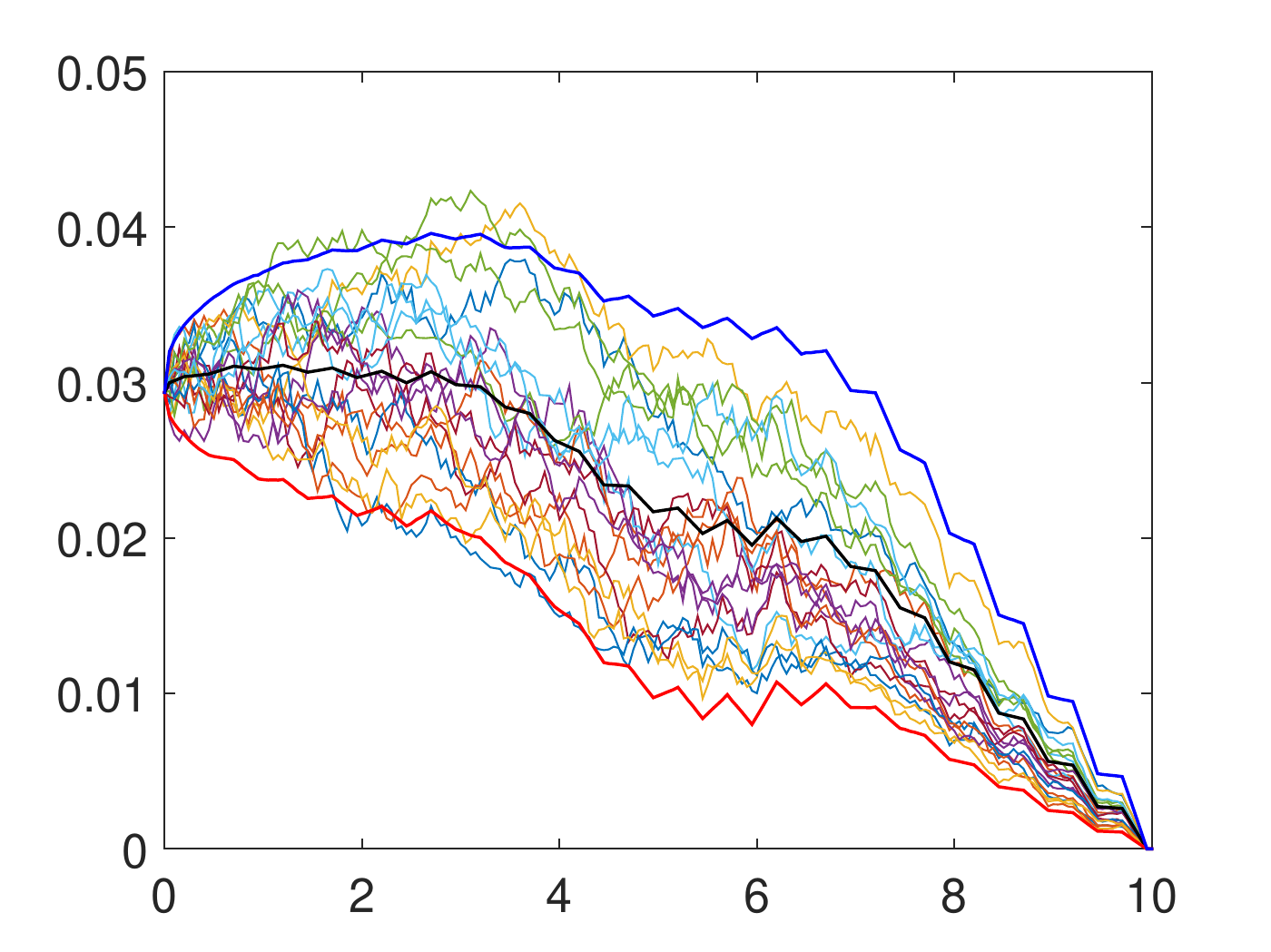}
\includegraphics[width=1.15\textwidth]{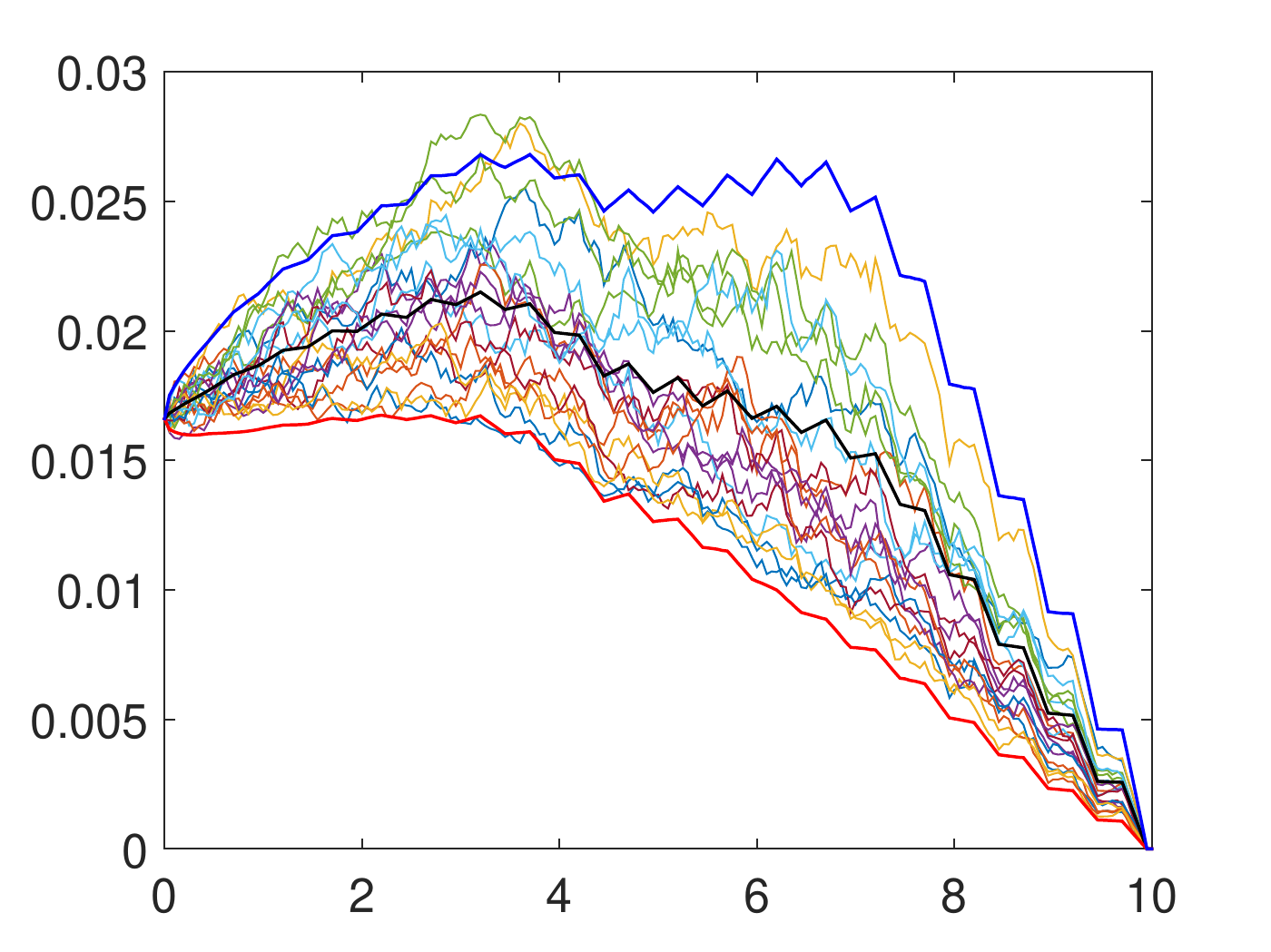}
\end{minipage}
\,
\begin{minipage}[b]{0.245\linewidth}
\centering\title{\qquad (IF$_1$) vs. (IF$_2$)}
\includegraphics[width=1.15\textwidth]{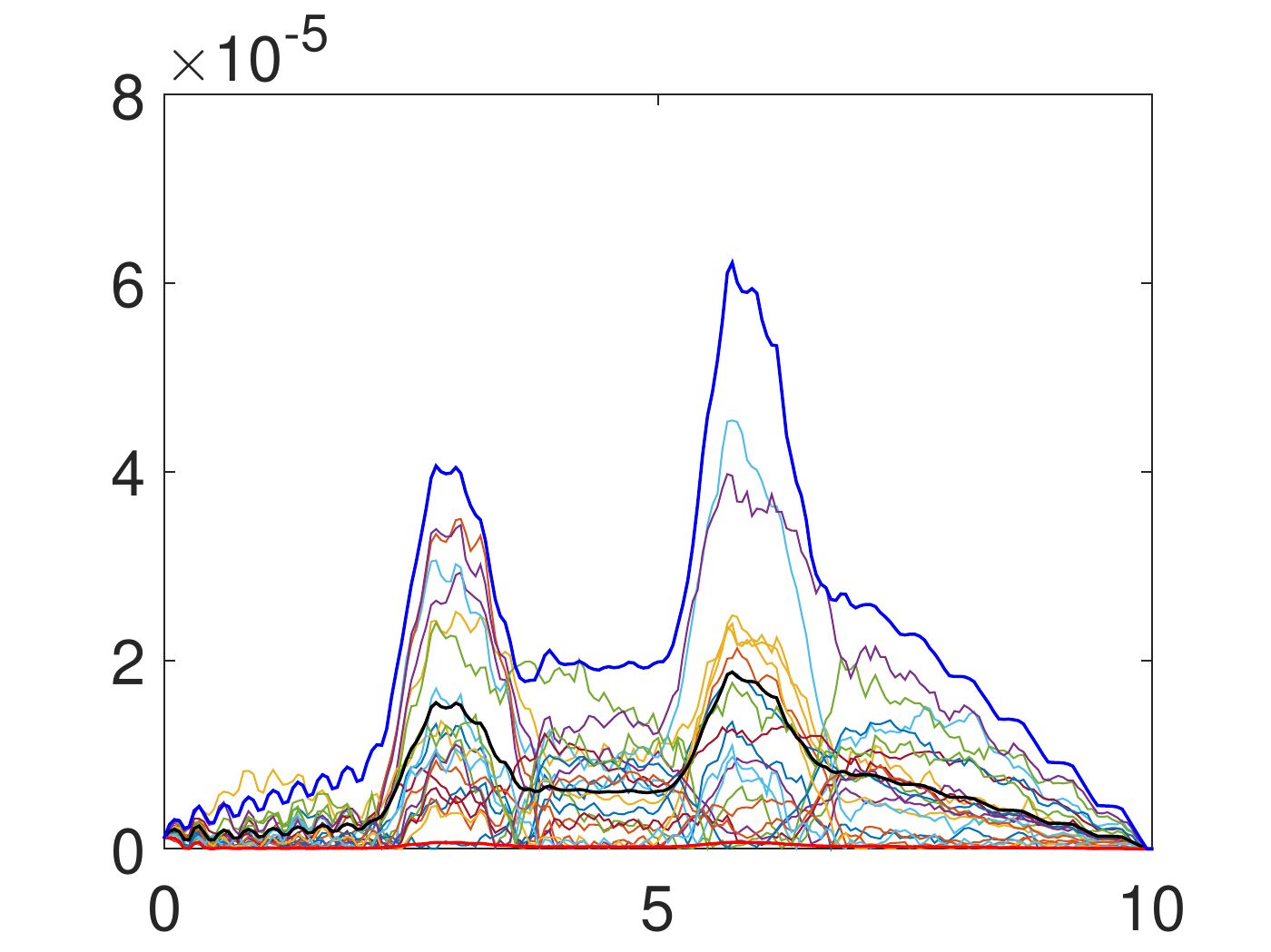}
\includegraphics[width=1.15\textwidth]{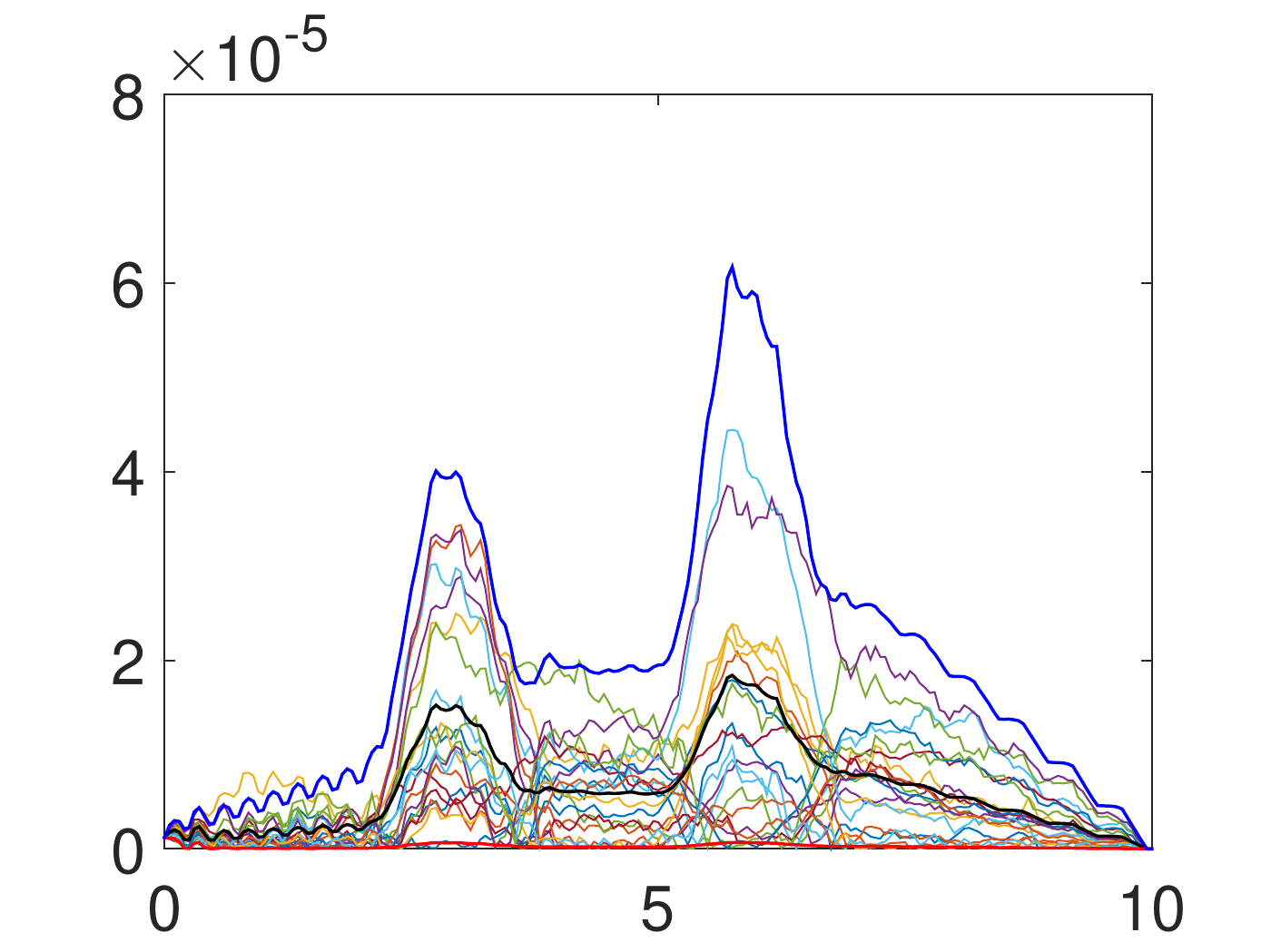}
\includegraphics[width=1.15\textwidth]{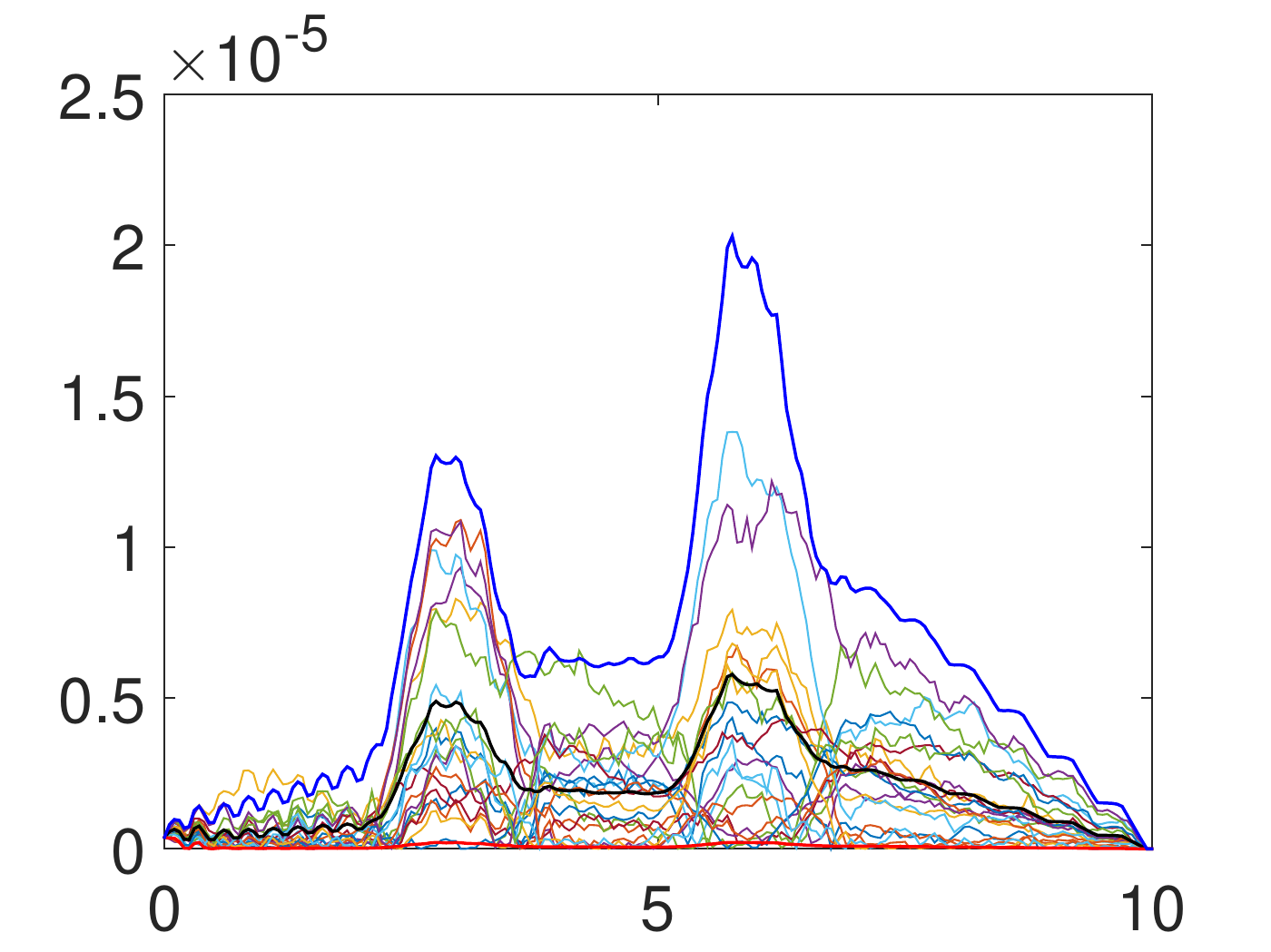}
\includegraphics[width=1.15\textwidth]{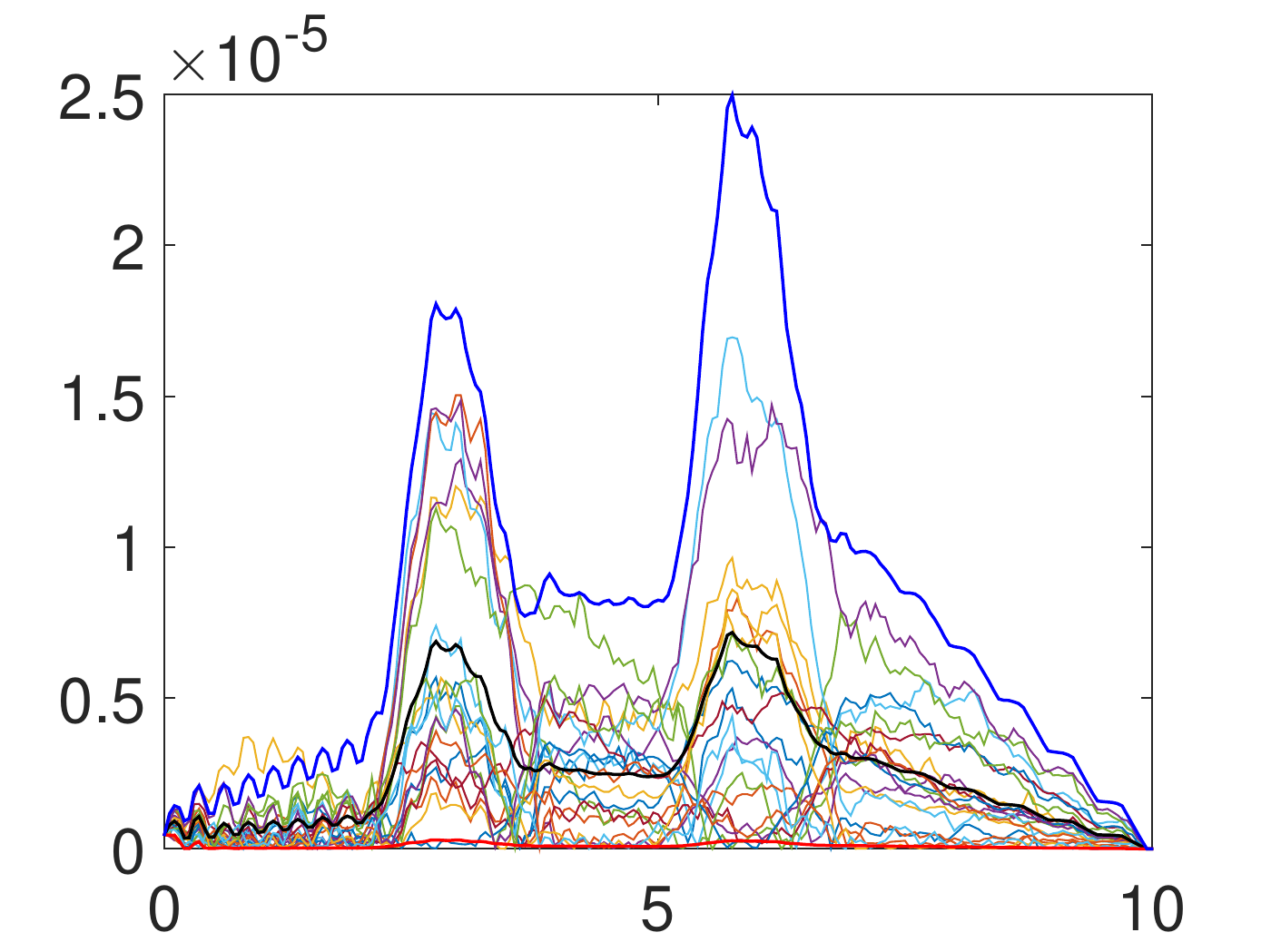}
\includegraphics[width=1.15\textwidth]{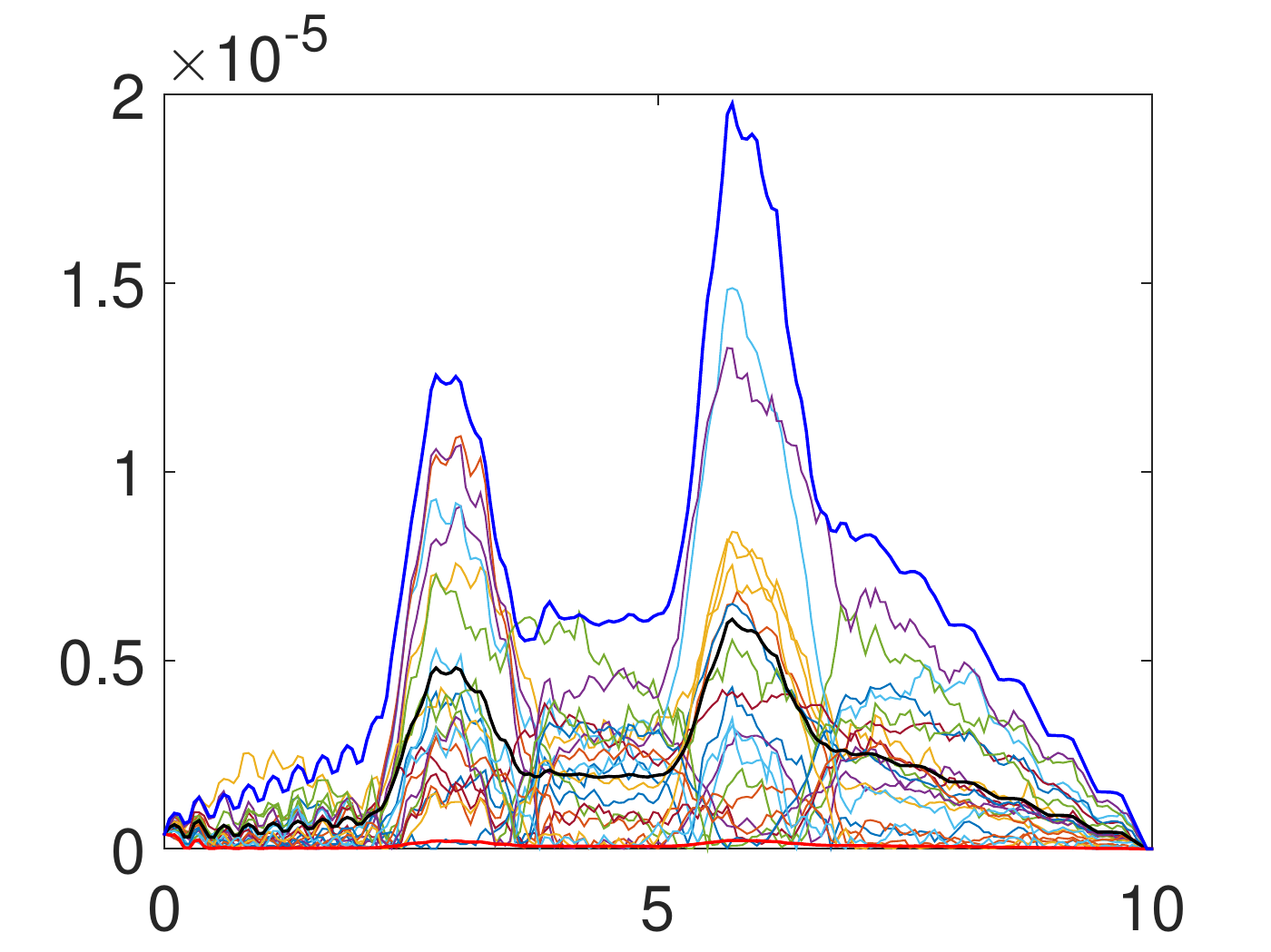}
\end{minipage}
\,
\begin{minipage}[b]{0.245\linewidth}
\centering\title{\qquad (IF$_1$) vs. (IF$_3$)}
\includegraphics[width=1.15\textwidth]{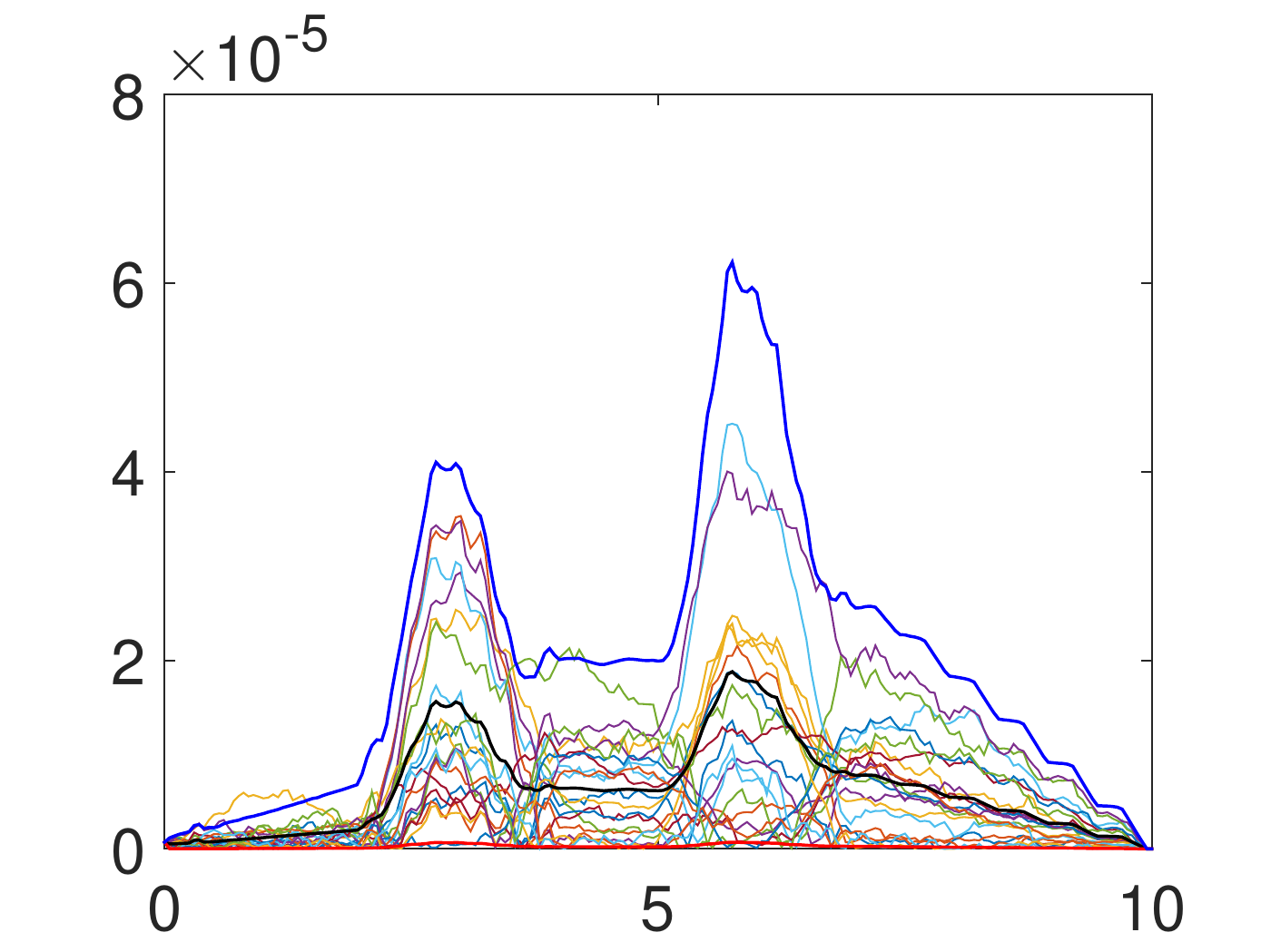}
\includegraphics[width=1.15\textwidth]{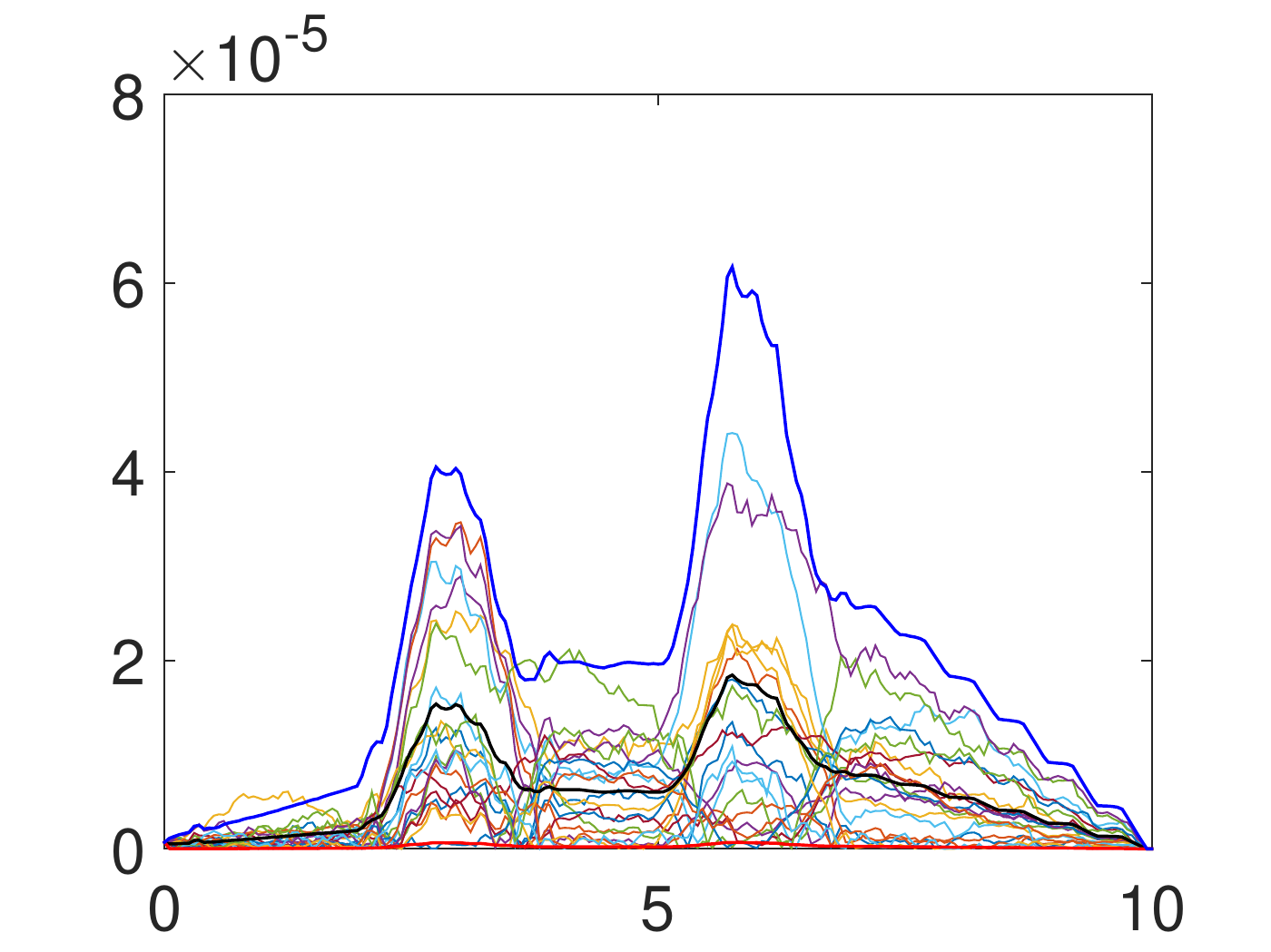}
\includegraphics[width=1.15\textwidth]{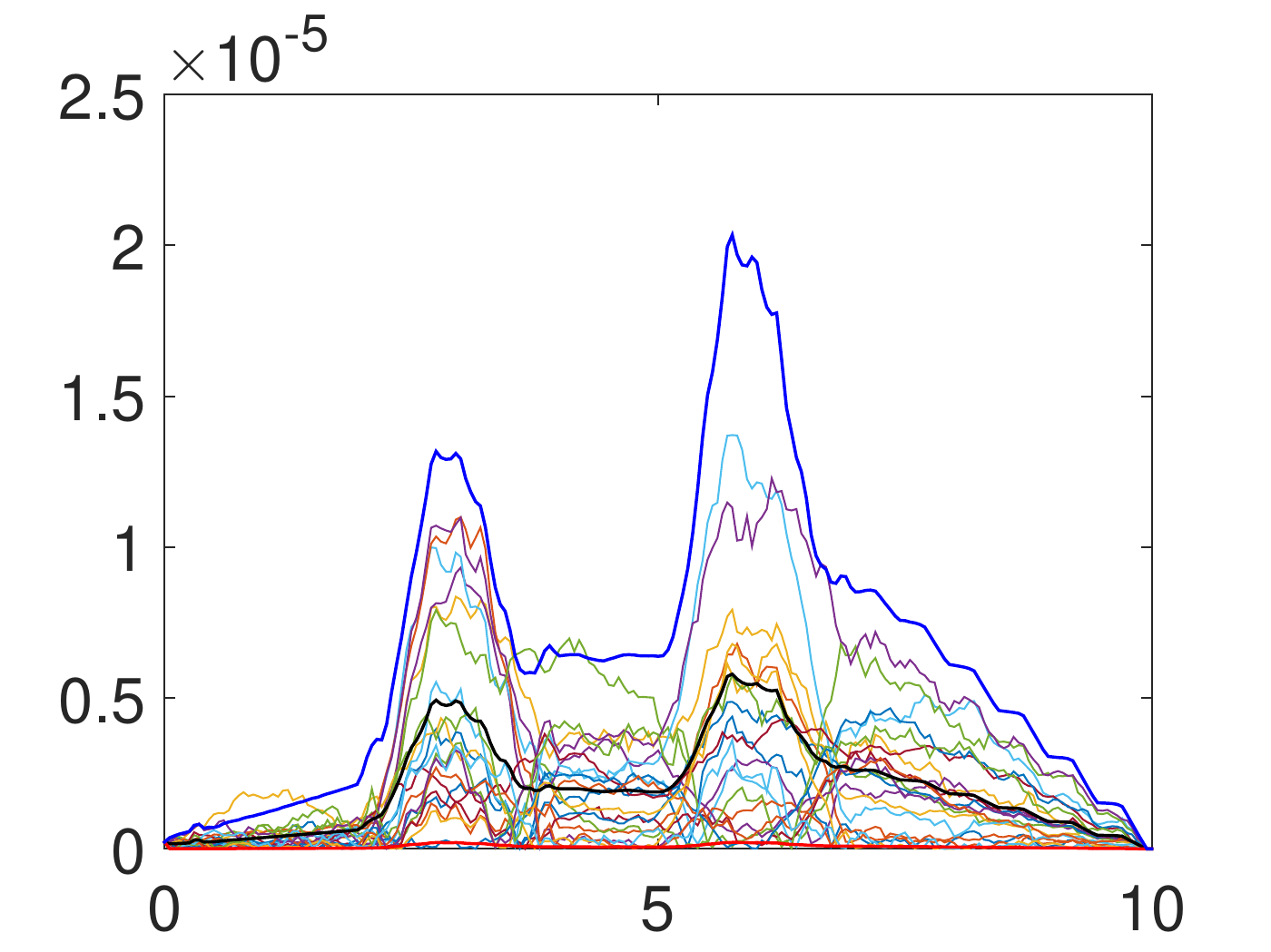}
\includegraphics[width=1.15\textwidth]{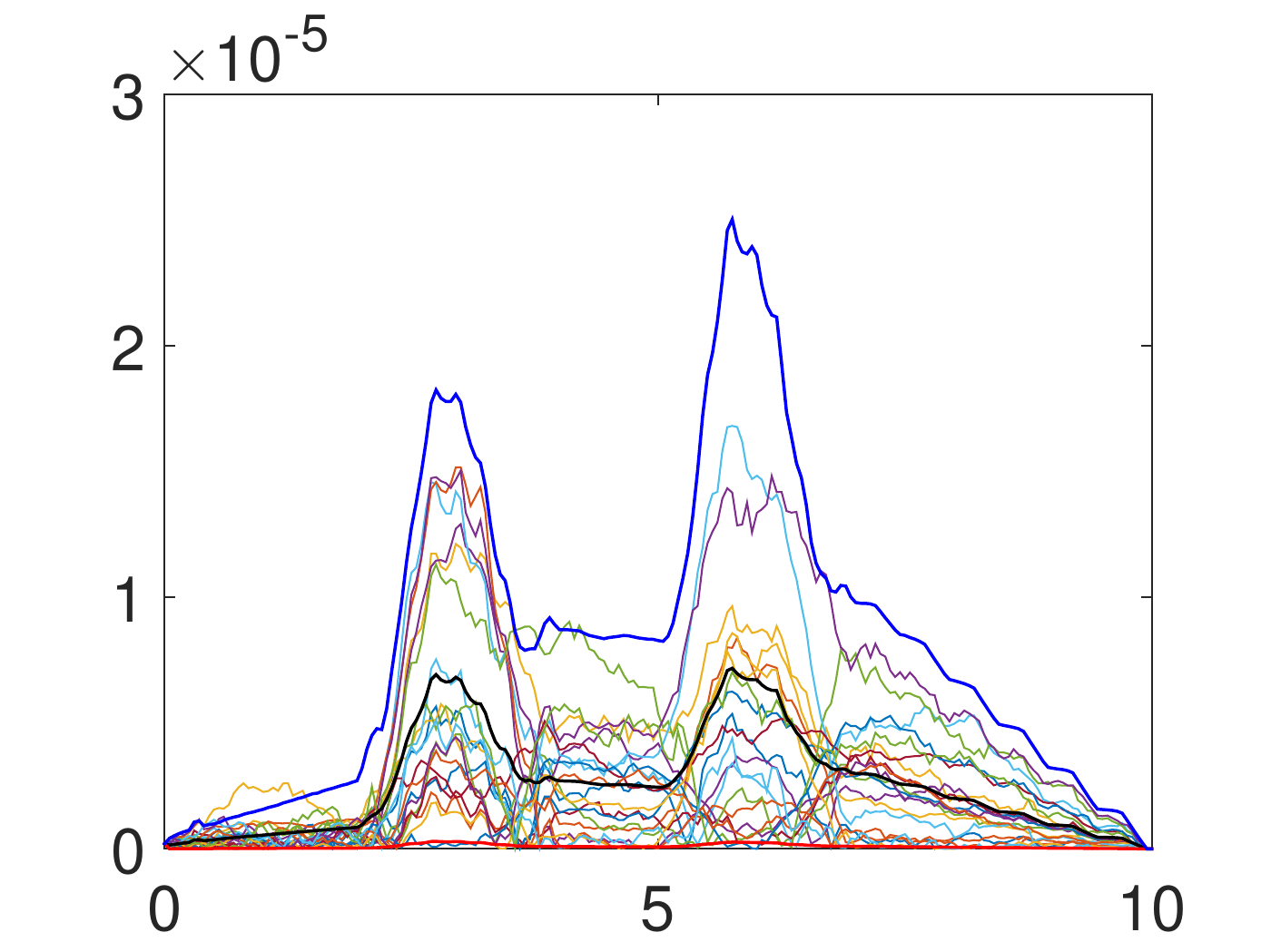}
\includegraphics[width=1.15\textwidth]{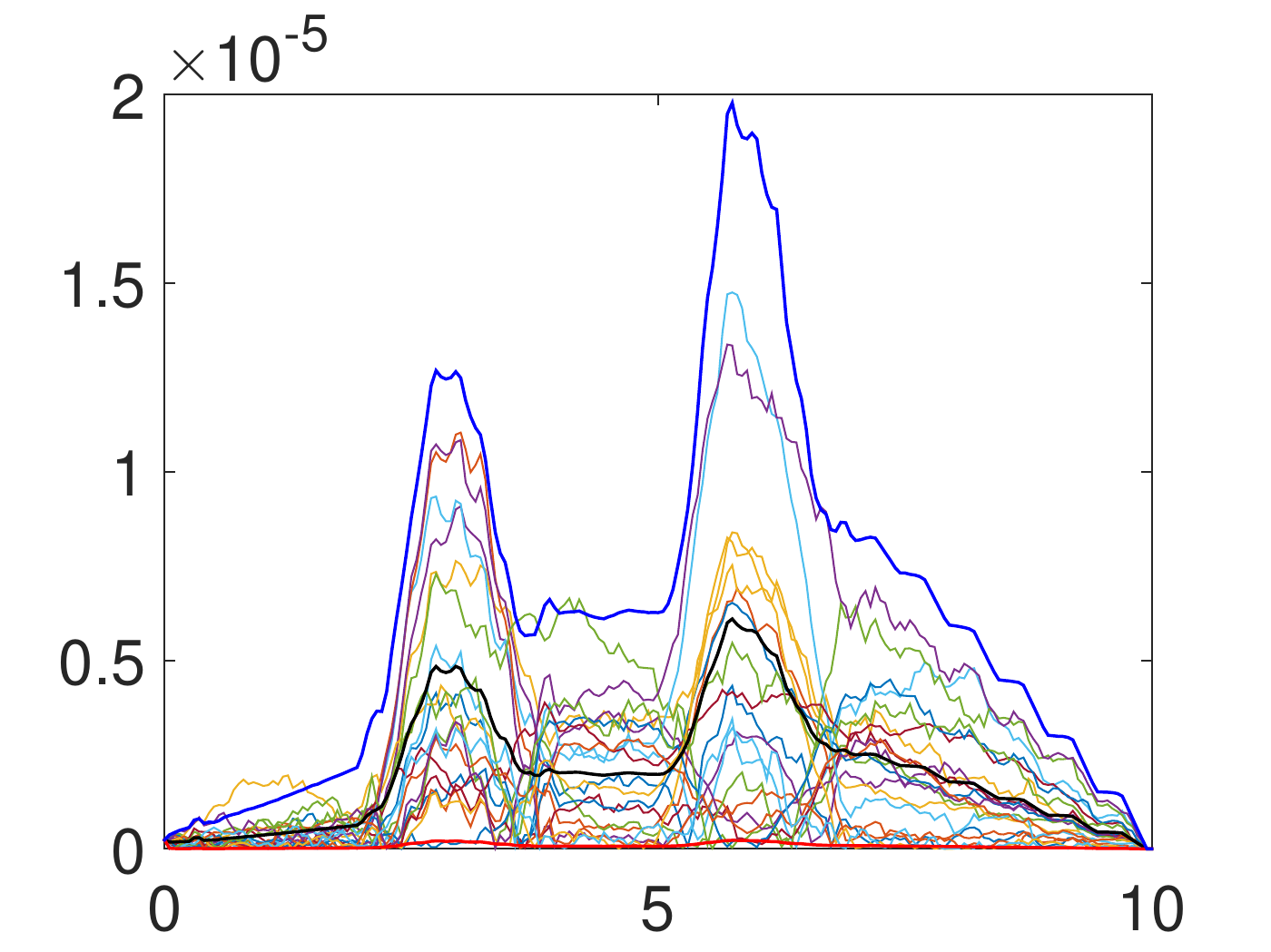}
\end{minipage}
\,
\begin{minipage}[b]{0.245\linewidth}
\centering\title{\qquad (IF$_2$) vs. (IF$_3$)}
\includegraphics[width=1.15\textwidth]{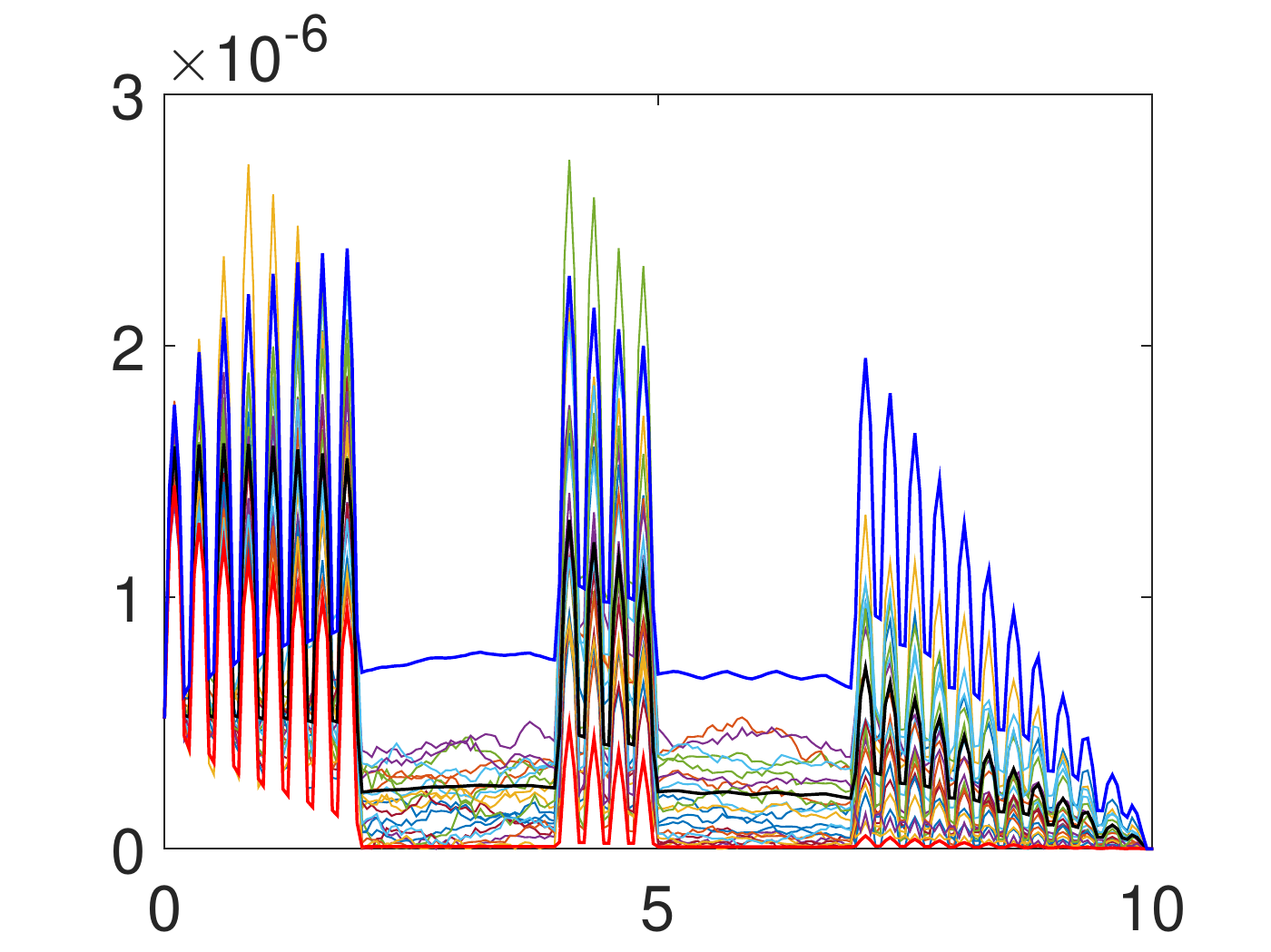}
\includegraphics[width=1.15\textwidth]{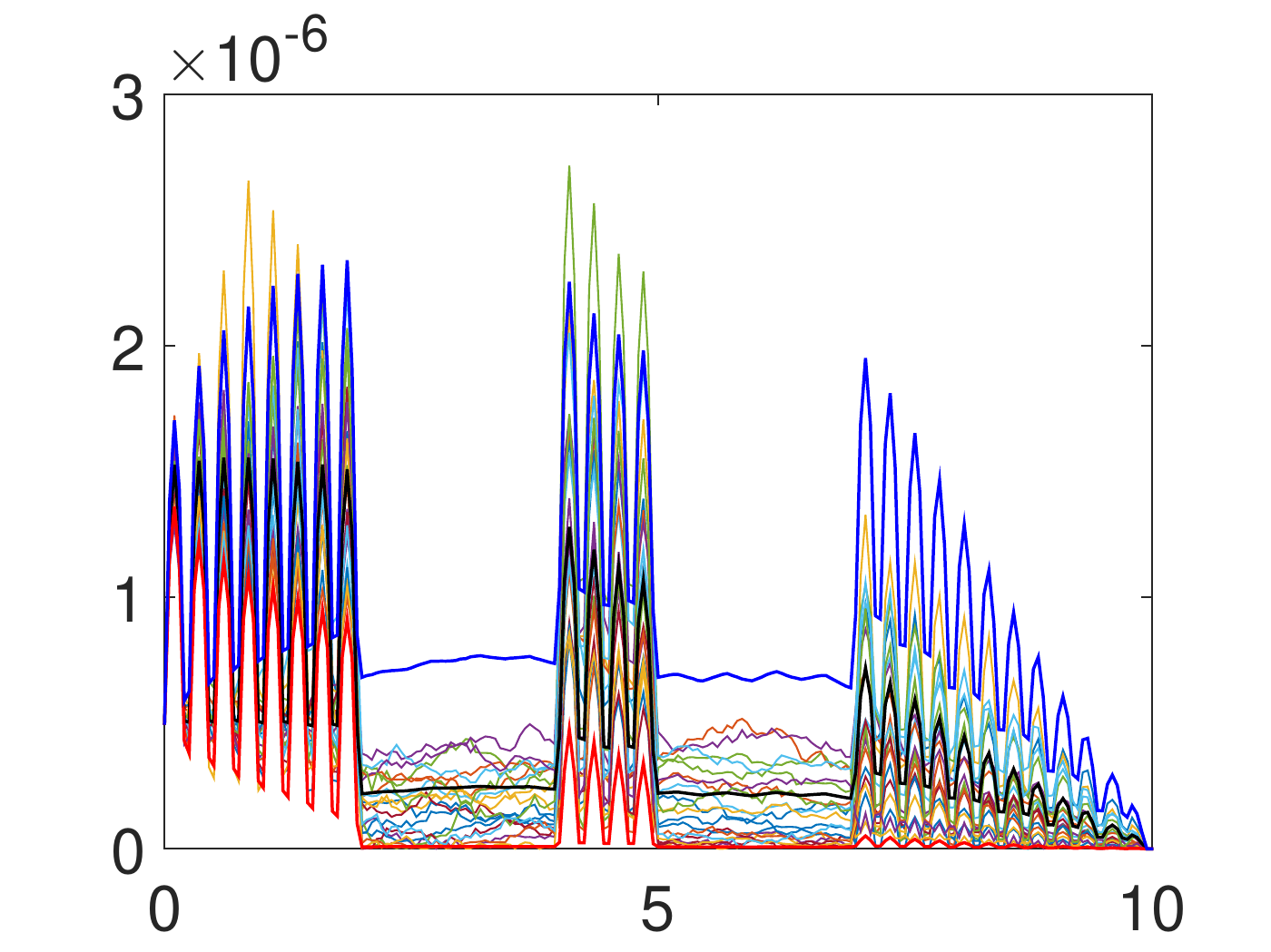}
\includegraphics[width=1.15\textwidth]{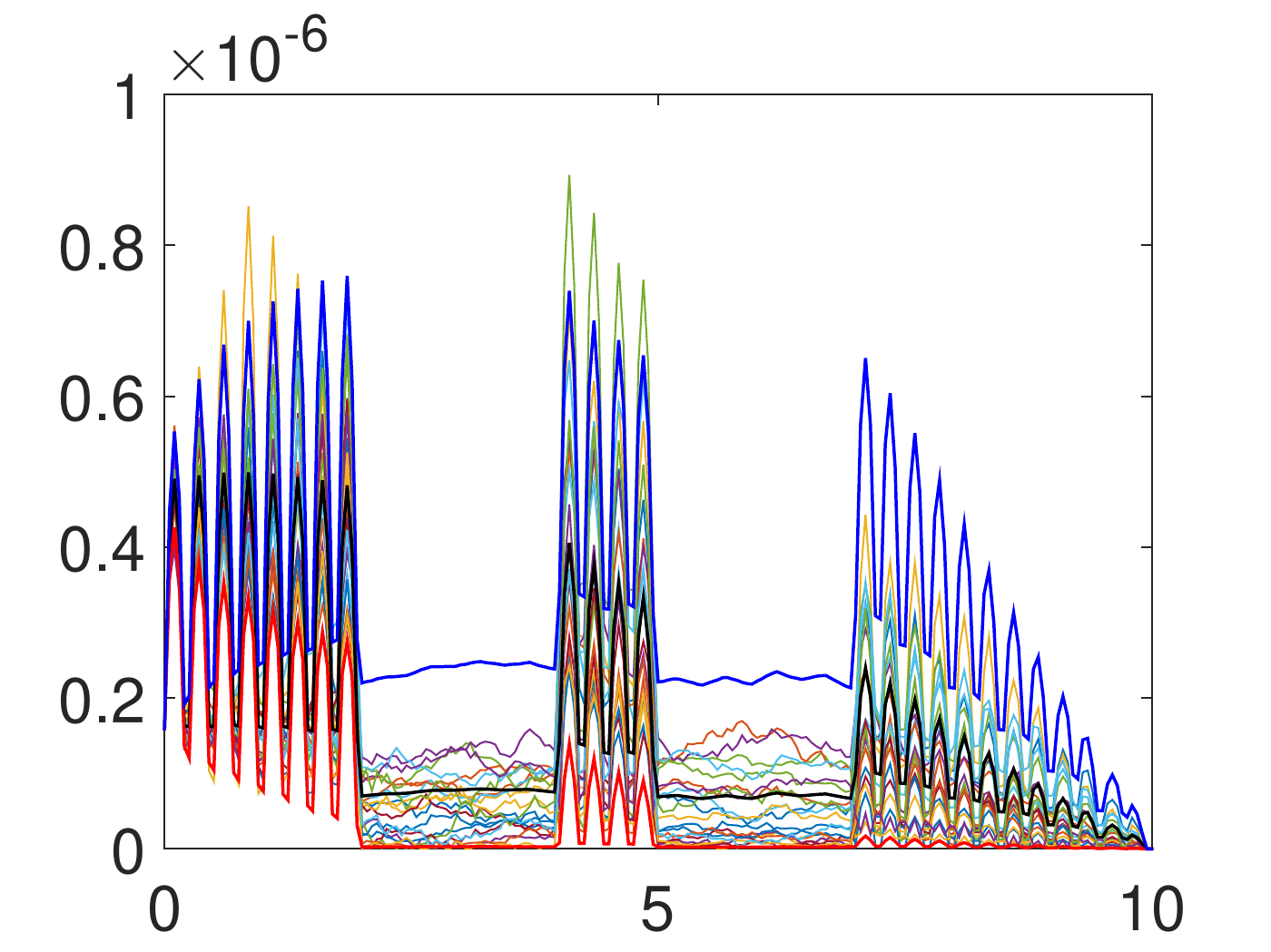}
\includegraphics[width=1.15\textwidth]{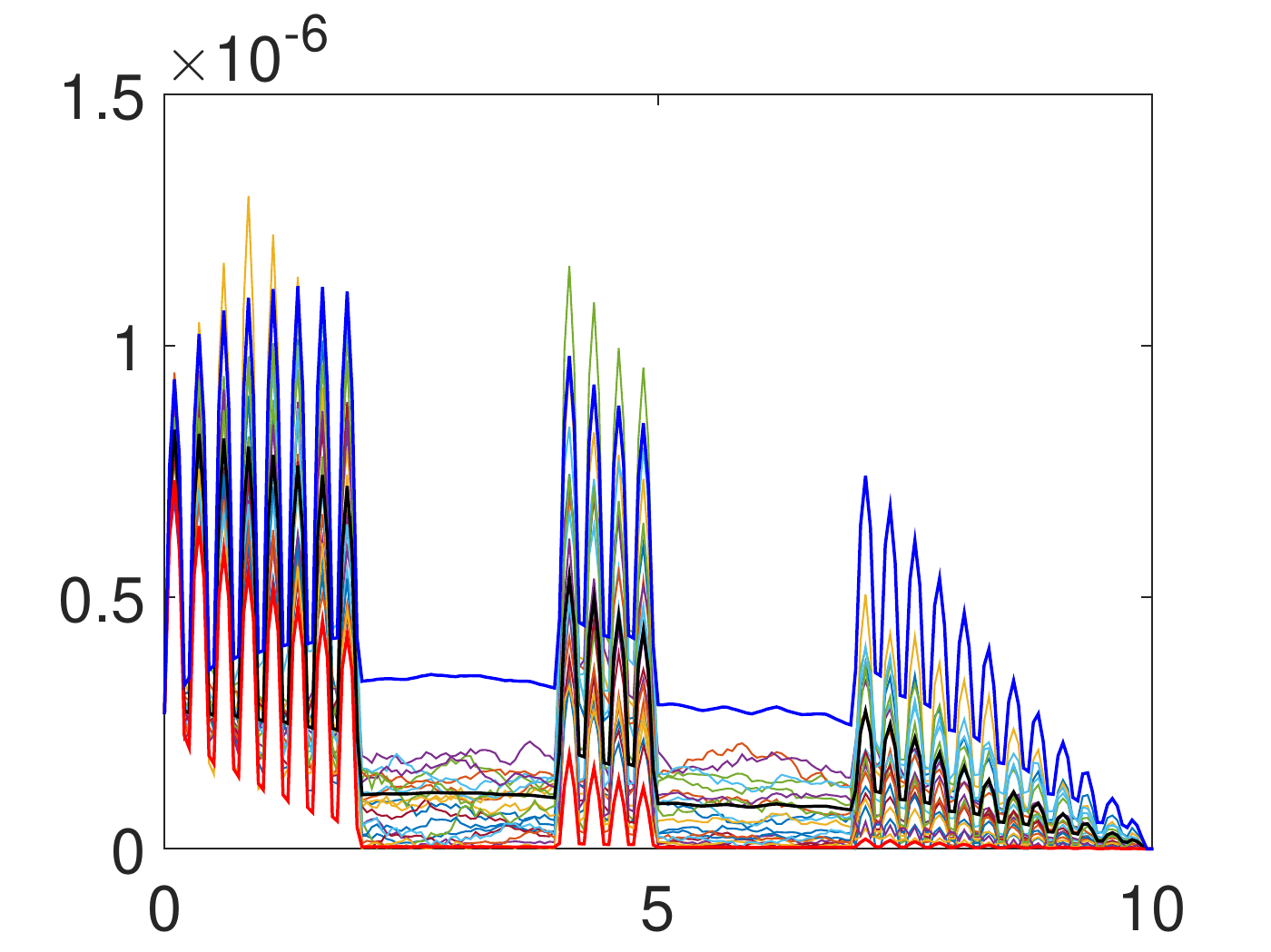}
\includegraphics[width=1.15\textwidth]{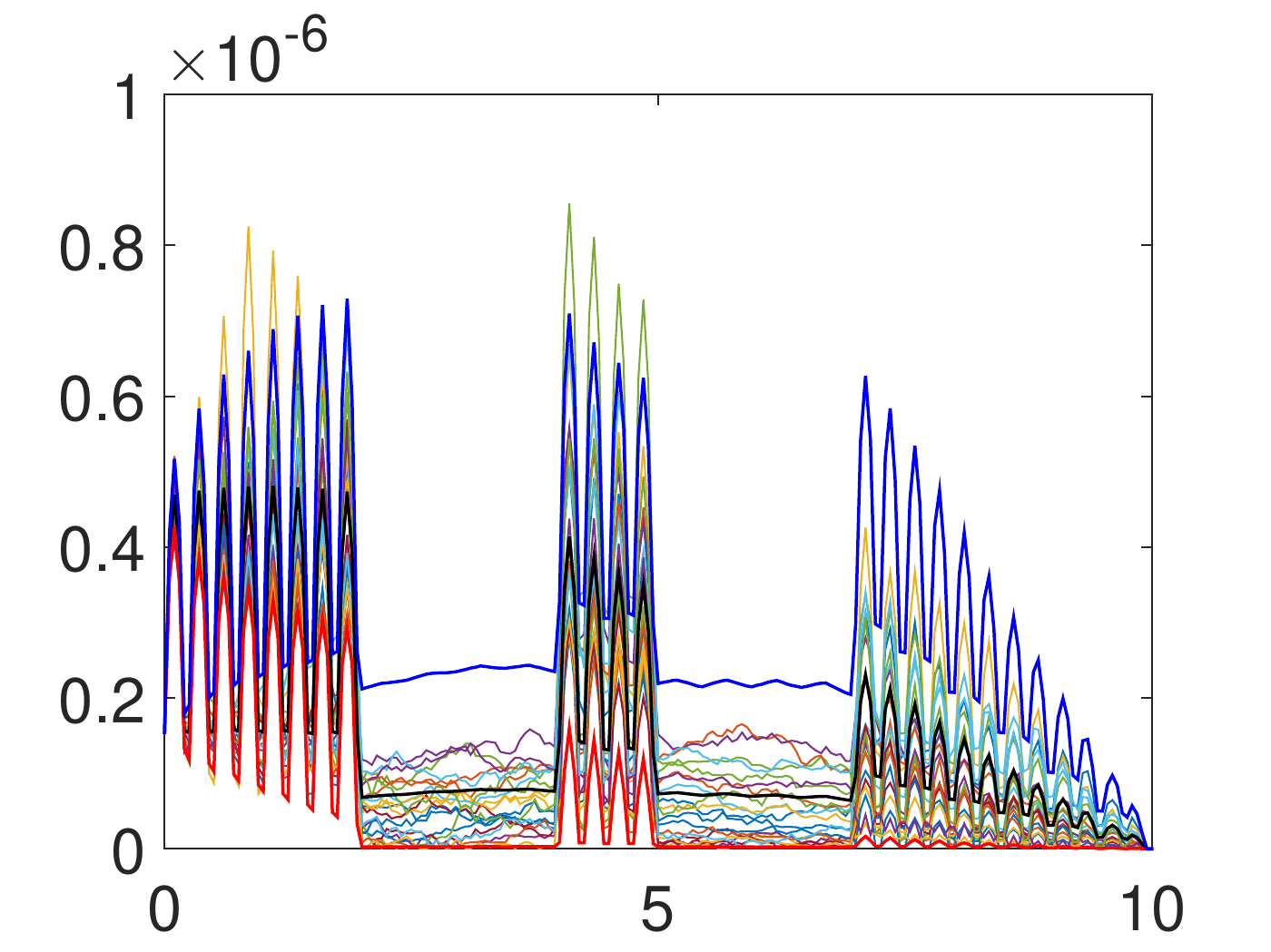}
\end{minipage}\caption{20 sample paths of the TVA process for the basis swap, together with 
the mean and percentiles, for the 5 different CSA specifications (left panels, 
top to bottom), and the difference between the TVA processes for the different 
interpolating functions (left to right).}
\label{fig:TVA-diffs}
\end{figure}


\subsection{Discussion}
\label{discussion}

The outcome of the numerical experiments is summarized in Figures \ref{fig:r-P-paths} until \ref{fig:TVA-diffs}. 
Starting with the top panel in Figure \ref{fig:r-P-paths}, we observe that there are significant structural differences in the dynamics of the short rate due to the different interpolating functions; this is mostly visible when looking at the averages and the percentile lines. 
The bottom panel in the same figure displays the price process of the 3M-6M basis swap for the different interpolating functions. 
As the differences are not as clearly visible as before, we have plotted the absolute differences in prices due to the different interpolating functions in Figure \ref{fig:r-P-differences}. 
There we see that notable differences in prices appear when using different interpolating functions (keep in mind, that the notional amount of the swap equals one, thus the deviations in prices are not negligible). 
As expected, the largest discrepancies between prices stemming from the first vs. second and the first vs. third interpolating functions occur on the curved section of the manifold used to construct $(u_k)$ and $(v_k)$. 
On the contrary, the discrepancies between prices from the second vs. third interpolating functions on the curved section of the manifold are zero, since both functions interpolate linearly in that segment.

The next Figure \ref{fig:TVA-diffs} depicts the sample path of the TVA process using the first interpolating function (left panels) for the five different CSA 
specifications (top to bottom), while the other figures show the differences in the TVA due to the different interpolating functions. 
The differences in the TVA are one order of magnitude smaller than the differences in prices, however the TVA itself is an order of magnitude smaller than the basis swap price. 
Reflecting the situation for the prices, the largest discrepancies between TVAs using the first and the other two interpolating functions occur around the 
curved section of the manifold. 
However, the discrepancies in the TVA in the flat sections of the manifold are more pronounced than the corresponding discrepancies in prices. 
The reason is that the interpolation affects value adjustments both via the basis swap price and via the short rate used for discounting, and its effect is propagated in different segments through the backward regression. 
This becomes clear when one looks at the differences between prices and value adjustments stemming from using the second and third interpolating functions; although the difference in prices is flat zero, the difference in value adjustments is far away from zero.

The numerical examples presented above show that the choice of the interpolating function entails significant model risk. 
The functions we chose are not especially far apart, in terms of their supremum norm, thus the differences above could become even higher. 
In fact, the coefficients of the short rate can become arbitrarily large due to the interpolating function. 
Therefore, both the manifolds on which the sequences $u$ and $v$ lie and the interpolating function have to be selected with caution, as they can fundamentally change the behaviour of the model.

\appendix
\section{Time-integration of Affine processes}

The following result is an extension of Theorem 4.10 in \citet{KellerRessel08}.

\begin{theorem}\label{thm:time-integral}
Let $\theta \colon [0,T] \rightarrow \R^d_{\geqslant0}$ be a bounded function and $(X_t)_{t\in[0,T]}$ be an affine process on $\R^d_{\geqslant0}$, with functional characteristics $F$ and  $R$. 
Then $\brac{X_t,\int_0^t \theta_u \circ X_u \du}_{t\in[0,T]}$ is a time-inhomogeneous affine process on $\R^d_{\geqslant0} \times \R^d_{\geqslant0}$ with functional characteristics
\[
\widetilde{F}\brac{t,u_X,u_Y}=F\brac{u_X}
	\quad {\text and}  \quad
\widetilde{R}\brac{t,u_X,u_Y}=\left( \begin{array}{c}  R\left(u_X\right)+\theta_t\circ u_{Y}\\0  \end{array}\right).
\]
Here $\circ$ denotes the componentwise multiplication between vectors.
\end{theorem}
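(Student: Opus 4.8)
The plan is to identify the time-dependent infinitesimal generator of the pair $Z=(X,Y)$, where $Y_t:=\int_0^t\theta_u\circ X_u\,\du$, to read off its functional characteristics from it, and then to recover the conditional moment generating function from the associated generalized Riccati equations by a martingale verification argument, mirroring the proof of \citet[Thm.~4.10]{KellerRessel08}. First I would record the structural facts about $Z$. The component $Y$ has absolutely continuous paths of finite variation with $\ud Y_t=(\theta_t\circ X_t)\,\dt$ and $Y_0=0$; since $X_u\in\R^d_{\geqslant0}$ and $\theta_u\in\R^d_{\geqslant0}$, the process $Y$ is componentwise non-decreasing and stays in $\R^d_{\geqslant0}$, so $Z$ takes values in $\R^d_{\geqslant0}\times\R^d_{\geqslant0}$. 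The pair is a Markov process that is time-inhomogeneous precisely because of the $t$-dependence of $\theta$, since the infinitesimal evolution of $Y$ depends only on the current time and on $X_t$.

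For the generator: for a suitable class of test functions $f$ on $\R^d_{\geqslant0}\times\R^d_{\geqslant0}$, the absolutely continuous dynamics of $Y$ contribute only a state-dependent drift in the $y$-direction, so
\[
\mathcal{A}^Z_t f(x,y)=\big(\mathcal{A}^X f(\cdot,y)\big)(x)+\big\langle\theta_t\circ x,\nabla_y f(x,y)\big\rangle,
\]
where $\mathcal{A}^X$ is the generator \eqref{eq:generator} of $X$ acting in the $x$-variable. Evaluating at $f(x,y)=\e^{\langle u_X,x\rangle+\langle u_Y,y\rangle}$, using the affine identity $\mathcal{A}^X\e^{\langle u_X,\cdot\rangle}(x)=(F(u_X)+\langle R(u_X),x\rangle)\e^{\langle u_X,x\rangle}$ and $\langle\theta_t\circ x,u_Y\rangle=\langle\theta_t\circ u_Y,x\rangle$, one obtains
\[
\mathcal{A}^Z_t\,\e^{\langle u_X,x\rangle+\langle u_Y,y\rangle}=\Big(F(u_X)+\big\langle R(u_X)+\theta_t\circ u_Y,\,x\big\rangle+\langle 0,y\rangle\Big)\,\e^{\langle u_X,x\rangle+\langle u_Y,y\rangle},
\]
which is exactly the time-inhomogeneous affine form \eqref{eq:inhom_generator} with $\widetilde F(t,u_X,u_Y)=F(u_X)$ and $\widetilde R(t,u_X,u_Y)=(R(u_X)+\theta_t\circ u_Y,\,0)$.

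Next I would pass from the generator to the moment generating function. The $y$-component of the Riccati system \eqref{eq:time_inhomgeneous_riccati} for $(\widetilde F,\widetilde R)$ reads $-\partial_s\psi^Y_{s,t}(u)=0$ with $\psi^Y_{t,t}(u)=u_Y$, hence $\psi^Y_{s,t}(u)\equiv u_Y$; the $x$-component then becomes $-\partial_s\psi^X_{s,t}(u)=R(\psi^X_{s,t}(u))+\theta_s\circ u_Y$ with $\psi^X_{t,t}(u)=u_X$, i.e.\ the Riccati equation of $X$ perturbed by the bounded, non-negative forcing term $\theta_s\circ u_Y$, while $\phi_{s,t}(u)=\int_s^t F(\psi^X_{r,t}(u))\,\dr$. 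One then verifies, exactly as in \citet[Thm.~4.10]{KellerRessel08}, that $r\mapsto\exp(\phi_{r,t}(u)+\langle\psi_{r,t}(u),Z_r\rangle)$ is a martingale on $[s,t]$: the Itô formula together with the Riccati equations makes its drift vanish, so it is a local martingale, and the localization is removed by the integrability estimate of that reference, which only uses $X,Y\ge0$ and the boundedness of $\theta$. Evaluating at $r=s$ gives $\E[\e^{\langle u,Z_t\rangle}\mid\F_s]=\exp(\phi_{s,t}(u)+\langle\psi_{s,t}(u),Z_s\rangle)$, so $Z=(X,Y)$ is a time-inhomogeneous affine process with the stated functional characteristics.

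I expect the main obstacle to be the analytic part rather than the generator computation: one must ensure that the perturbed $x$-Riccati equation admits a non-exploding solution up to the terminal time on the $u$-domain needed in the application (namely $u_Y=\1$, $u_X=w+Q_t$), and that the candidate exponential is a genuine martingale and not merely a local one. Both are precisely the issues resolved in \citet[Thm.~4.10]{KellerRessel08} for the case $\theta\equiv\1$; the extension goes through because $\theta$ enters only as a bounded, non-negative additive perturbation of the $x$-Riccati vector field, so the comparison and exponential-moment estimates of that case carry over with $\|\theta\|_\infty$ in place of $1$. Checking that $(\widetilde F,\widetilde R)$ are admissible parameters on the enlarged state space $\R^d_{\geqslant0}\times\R^d_{\geqslant0}$ — in particular that the new linear drift $\theta_t\circ x$ in the $y$-direction keeps $Y$ in the positive orthant, which is where $\theta\ge0$ is used — is then routine.
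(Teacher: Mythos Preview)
Your argument is correct, but it follows a genuinely different route from the paper. The paper does not pass through the generator at all: instead it approximates $\int_t^{t+s}\theta_r\circ X_r\,\dr$ by Riemann sums, applies the tower property and the affine formula for $\E[\e^{\langle\cdot,X_{t+s_k}\rangle}\mid\F_{t+s_{k-1}}}]$ iteratively, and obtains a recursion $p_{k+1}=p_k+\phi_h(q_k)$, $q_{k+1}=\psi_h(q_k)+h\,u_Y\circ\theta_{t+s_{n-(k+1)}}$. Expanding $\phi_h,\psi_h$ to first order in $h$ via \eqref{Riccati} exhibits this recursion as an Euler scheme (run backwards from the terminal time) for the ODE system $\partial_s p=F(q)$, $\partial_s q=R(q)+u_Y\circ\theta_s$, and the convergence of that scheme---which is precisely the ``second part'' of \citet[Thm.~4.10]{KellerRessel08} and goes through unchanged because $\theta$ is bounded and non-negative---yields the exponential-affine conditional transform directly. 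Your generator--plus--martingale-verification approach is cleaner conceptually and sits naturally in the \citet{Filipovic05} framework you cite; its cost is that one must separately justify non-explosion of the perturbed $x$-Riccati equation and upgrade the local martingale to a true one, which you handle by comparison with the $\theta\equiv\1$ case. The paper's discretization approach is more elementary (no It\^o calculus, only iterated conditioning) and gets the Riccati equations and the conditional expectation in one stroke, at the price of a somewhat heavier bookkeeping of the recursion and an appeal to Euler-scheme convergence.
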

 
\begin{proof}
Let $n\in\mathbb{N}$, $k\in\set{0,\dots,n}$, and define $s_k=\frac{k}{n}s$ and $h=\frac{s}{n}$, hence $s_0,\dots,s_n$ is an equidistant partition of $[0,s]$ with step size $h$. 
Approximating the integral with Riemann sums and using the dominated convergence theorem, we have
\begin{align*}
&\mathds{E} \cond{\exp\brac{\scal{u_X}{X_{t+s}}+\scal{u_Y}{\int_t^{t+s} \theta_r \circ X_r \dr}}}{\mathcal{F}_t}
\\
&\qquad=\lim_{n\rightarrow \infty}
\mathds{E} \cond{\exp\brac{\scal{u_X}{X_{t+s}}+\scal{u_Y}{h \sum_{k=0}^n \theta_{t+s_k}\circ X_{t+s_k}}}}{\mathcal{F}_t}
\\
&\qquad=:\lim_{n\rightarrow\infty} A_n.
\end{align*}
Using next the tower law of conditional expectations and the affine property of $X$, $A_n$ can be written as follows
\begin{align*}
A_n
&= \mathds{E}\cond{\mathds{E}\cond{\e^{\scal{u_X+h u_Y\circ \theta_{t+s_n}}{X_{t+s_n}}}}{\mathcal{F}_{t+s_{n-1}}} \e^{\scal{u_Y}{h\sum_{k=0}^{n-1}\theta_{t+s_k}\circ X_{t+s_k}}}}{\mathcal{F}_t}
\\
&= \mathds{E}\cond{\e^{\phi_{h}\brac{u_X+h u_Y\circ \theta_{t+s_n}}+\scal{\psi_{h}\brac{u_X+h u_Y\circ \theta_{t+s_n}}}{X_{t+s_{n-1}}}+\scal{u_Y}{h\sum_{k=0}^{n-1}\theta_{t+s_k}\circ X_{t+s_k}}}}{\mathcal{F}_{t}}
\\
&= \mathds{E}\left[\mathds{E}\cond{\e^{\brac{\scal{\psi_{h}\brac{u_X+h u_Y\circ \theta_{t+s_n}}+h u_Y\circ \theta_{t+s_{n-1}}}{X_{t+s_{n-1}}}}}}{\mathcal{F}_{t+s_{n-2}}}\right.
\\
&\qquad \left. \left.\times \ \e^{\scal{u_Y}{h\sum_{k=0}^{n-2}\theta_{t+s_k}\circ X_{t+s_k}}}\right|\mathcal{F}_t\right] \e^{\phi_h\brac{u_X+h u_Y\circ \theta_{t+s_n}}}
\\
&= \mathds{E}\cond{\e^{\scal{\psi_h\brac{\psi_{h}\brac{u_X+h u_Y\circ \theta_{t+s_n}}+h u_Y\circ \theta_{t+s_{n-1}}}}{X_{t+s_{n-2}}}+\scal{u_Y}{h\sum_{k=0}^{n-2}\theta_{t+s_k}\circ X_{t+s_k}}}}{\mathcal{F}_{t}}
\\
&\qquad \times \ \e^{\phi_h \brac{u_X+h u_Y \circ \theta_{t+s_n}}+\phi_h\brac{\psi_h \brac{u_X+h u_Y\circ \theta_{t+s_n}}+h u_Y\circ \theta_{t+s_{n-1}}}}.
\end{align*}
Iterating this procedure, we arrive at
\[
A_n=\exp\Big(p_n\brac{u_X,u_Y}+\scal{q_n\brac{u_X,u_Y}}{X_t} \Big),
\]
with $p_0\brac{u_X,u_Y}=0$, $q_0\brac{u_X,u_Y}=u_X+h u_Y\circ \theta_{t+s}$ and
\begin{align*}
p_{k+1}\brac{u_X,u_Y}&=p_k\brac{u_X,u_Y}+\phi_h\brac{q_k\brac{u_X,u_Y}}
\\
q_{k+1}\brac{u_X,u_Y}&=\psi_h\brac{q_k\brac{u_X,u_Y}}+h u_Y\circ \theta_{t+s_{n-(k+1)}}.
\end{align*}
Using the generalized Riccati equation \eqref{Riccati}, we can expand $\phi$ and $\psi$ linearly around the origin. 
Thus we get
\begin{align*}
p_{k+1}\brac{u_X,u_Y}&=p_k\brac{u_X,u_Y}+h F\brac{q_k\brac{u_X,u_Y}}+o\brac{h},
\\
q_{k+1}\brac{u_X,u_Y}&=q_{k}\brac{u_X,u_Y}+h\brac{R\brac{q_{k}\brac{u_X,u_Y}}+u_Y\circ \theta_{s_{n-(k+1)}}}+o\brac{h}.
\end{align*}
As $\theta$ is nonnegative and bounded, the second part of the proof of Theorem 4.10 in \citet{KellerRessel08} remains the same. 
Hence, the recursive scheme above is an Euler-type approximation, starting from the terminal time, to the ODE
\begin{align*}
\frac{\partial}{\partial s} p\brac{s,t,u_X,u_Y}&=F\brac{q\brac{s,t,u_X,u_Y}},
\\
\frac{\partial}{\partial s} q\brac{s,t,u_X,u_Y}&=R\brac{q\brac{s,t,u_X,u_Y}}+u_Y\circ \theta_s
\end{align*}
with initial conditions $p\brac{r,r,u_X,u_Y}=0$ and $q\brac{r,r,u_X,u_Y}=u_X$, for all $r\geq 0$.
\end{proof}

\bibliographystyle{plainnat}
\bibliography{references}

\end{document}